\definecolor{mypink1}{rgb}{0.858, 0.188, 0.478}
\definecolor{mypink2}{RGB}{219, 48, 122}
\definecolor{mypink3}{cmyk}{0, 0.7808, 0.4429, 0.1412}
\definecolor{mygray}{gray}{0.6}
\theoremstyle{plain}
\newtheorem{theorem}{Theorem}[section]
\newtheorem{lemma}[theorem]{Lemma}
\newtheorem{corollary}[theorem]{Corollary}
\newtheorem{proposition}[theorem]{Proposition}
\theoremstyle{definition}
\newtheorem{definition}[theorem]{Definition}
\theoremstyle{remark}
\newtheorem{remark}[theorem]{Remark}
\newcommand{\tail}[3]{\widehat{#1}_{#2,#3}}
\numberwithin{equation}{section}
\def\P{\mathbb P}
\def\cal{\mathcal}
\begin{document}


\title{Trimming the Hill estimator: robustness, optimality and adaptivity\thanks{MK and SB were supported by National Science Foundation grant CNS-1422078; SB and SS
	were partially supported by the NSF grant DMS-1462368.}}



\author{Shrijita Bhattacharya\thanks{Department of Statistics, University of Michigan, 311 West Hall, 1085 S. University Ann Arbor, MI 48109-1107, {\tt \{shrijita, sstoev\}@umich.edu}}\quad Michael Kallitsis\thanks{Merit Network, Inc., 1000 Oakbrook Drive, Suite 200, Ann Arbor, MI 48104, {\tt mgkallit@merit.edu}}\quad Stilian Stoev$^\dagger$}




\maketitle

\begin{abstract} We introduce a trimmed version of the Hill estimator for the index of a heavy-tailed distribution, which is robust to perturbations in the extreme order statistics. 
In the ideal Pareto setting, the estimator is essentially finite-sample efficient among all unbiased estimators with a given strict upper break-down point.  For general heavy-tailed models, 
we establish the asymptotic normality of the estimator under second order conditions and discuss its minimax optimal rate in the Hall class.  We introduce the so-called trimmed Hill plot, which can be used 
to select the number of top order statistics to trim.  We also develop an automatic, data-driven procedure for the choice of trimming.  This results in a new type of robust estimator that can {\em adapt} to the 
unknown level of contamination in the extremes.  As a by-product we also obtain a methodology for identifying extreme outliers in heavy tailed data. The competitive performance of the trimmed Hill 
and adaptive trimmed Hill estimators is illustrated with simulations.  
\end{abstract}

\section{Introduction}

 The estimation of the tail index for heavy-tailed distributions is perhaps one of the most studied problems in
 extreme value theory. Since the seminal works of \cite{Hill:1975}, \cite{pickands:1975}, \cite{hall:1982} and others,
 numerous aspects of this problem and its applications have been explored (see for example the monographs of \cite{embrechts:kluppelberg:mikosch:1997}, 
 \cite{beirlant:goegebeur:teugels:segers:2004}, \cite{dehaan:ferreira:2006}, \cite{resnick:2007} and the references therein).
 
 Given the extensive work on the subject, it may appear naive to hope to say something new.  Nevertheless, some curious 
 aspects of this fundamental problem have remained unexplored. 
 
 Suppose that $X_1,\cdots,X_n$ is an i.i.d. sample from a heavy tailed distribution $F$.  Namely,
 \begin{equation}
 \label{e:heavy-tail}
 \P(X_1> x) \equiv 1-F(x) \sim \ell(x) x^{-1/\xi},\ \ \mbox{ as }x\to\infty,
 \end{equation}
  for some $\xi>0$ and a slowly varying function $\ell:(0,\infty)\rightarrow (0,\infty)$, i.e.,  $\ell(\lambda x)/\ell(x)\to 1,\ x\to\infty,$ for all $\lambda>0$. 
 The parameter $\xi$ will be referred to as the {\em tail index} of $F$.  Its estimation
 is of fundamental importance to the applications of extreme value theory.
 
 The fact that  $\xi$ governs the asymptotic tail-behavior of $F$ means that, in practice, one should estimate it by 
 focusing on the most extreme values of the sample.  In many applications, one quickly runs out of data since only the largest
 few order statistics are utilized.  In this case, every extreme data-point matters.   In practice, however, the largest 
 few order statistics may be {\em corrupted}. This may lead to a severe bias in the estimation of $\xi$ (see, Tables \ref{tab:ARE_exp_0} and \ref{tab:ARE_scl_0}, below).
 In fact, the computed estimate of $\xi$ may be entirely based on these corrupted observations. In such contexts, it is important to have a robust 
 estimator of $\xi$, which does not necessarily use the most extreme order statistics, perhaps 
 puts less weight on them, or indicates to what extent the most extreme data can be trusted to come from the same distribution. 
 
 At first sight, this appears to be an ill-posed problem. Since the tail index $\xi$ is an asymptotic quantity, one {\em has to focus} on the largest 
 order statistics and if these statistics are corrupted, then there little or no information left to estimate $\xi$.  Nevertheless, using the joint 
 asymptotic behavior of the extreme order statistics, one can detect statistically significant anomalies in the most extreme order statics. 
  
The problem of robust estimation of the tail index has already received some attention (see for example \cite{guillou1}, \cite{guillou2}, \cite{Knight_asimple}, 
\cite{MR1856199}, \cite{Peng2001}, \cite{Brzezinski2016}). However, there are still open questions on the optimality and adaptivity of robust estimators to the potentially 
unknown proportion of extreme outliers. In this paper, we address these two issues. 

Recall the classic {\em Hill estimator}
\begin{equation}
\label{e:hill}
\widehat \xi_k(n):= \frac{1}{k} \sum_{i=1}^{k} \log \Bigg(\frac{X_{(n-i+1,n)}}{X_{(n-k,n)}} \Bigg),
\end{equation}
where $1\leq k\leq n-1$ and $X_{(n,n)} \geq X_{(n-1,n)}\geq  \cdots \geq  X_{(1,n)}$ are the order statistics of the sample $X_i,\ i=1,\cdots,n$. In Section \ref{sec:trim-hill}, we introduce the {\em trimmed Hill estimator}:
\begin{equation}\label{e:xi-trimmed}
\widehat{\xi}^{\rm trim}_{k_0,k}(n):= \sum_{i=k_0+1}^{k} c_{k_0,k}(i) \log \Bigg(\frac{X_{(n-i+1,n)}}{X_{(n-k,n)}} \Bigg),\hspace{5mm}  0\le k_0<k<n.
\end{equation}
Under the Pareto model \eqref{e:Pareto-model}, we obtain the {\em optimal} weights, $c_{k_0,k}(i)$ such that $\widehat{\xi}^{\rm trim}_{k_0,k}$ is the best linear unbiased estimator for 
$\xi$ (see Proposition \ref{prop:xi-opt}, below).  

Although the idea trimming has been considered before by Brazauskas and Serfling \cite{MR1856199}, and most recently by 
Zou {\em et al} \cite{zou:davis:samorodnitsky:2017}, the optimal trimmed Hill estimator has not been derived before.  These two works use equal weights in \eqref{e:xi-trimmed}, thereby producing either suboptimal or biased estimators respectively. Inference for the {\em truncated} Pareto model has been developed in the seminal work of Aban {\em et al} \cite{aban} and recently by Beirlant {\em et al}
\cite{bier_truncated}.  This should be distinguished from the approach of {\em trimming} the data in order to achieve robustness, which is the main focus of our work.  

Note that the trimmed estimators in \eqref{e:xi-trimmed} {\em do not depend} on the top $k_0$ order statistics. Therefore, they have a {\em strong upper break-down point} (see Definition 
\ref{def:ubp-est}). In the ideal Pareto setting, it turns out that our trimmed Hill estimator is essentially finite--sample optimal among the class of all unbiased estimators of 
$\xi$ with a fixed {\em strong upper break-down point}  (see Theorem \ref{thm:umvue-p}). In Section \ref{sec:heavy-tail}, we establish the asymptotic normality of the trimmed Hill estimator
in the semi parametric regime \eqref{e:heavy-tail}, under second order conditions on the regularly varying function $\ell$ as in Beirlant {\em et al} \cite{bier}. 
The rate of convergence of these estimators is the same as  that of the classic Hill as long as $k_0=o(k)$ (see Theorem \ref{prop:E-conv}).  The minimax
rate--optimality of the trimmed Hill estimators is established in Section \ref{sec:optimal}.

These theoretical results though encouraging, are not practically useful unless one has a data-adaptive method for the choice of the trimming parameter $k_0$. This problem is addressed in Section
 \ref{sec:aut-trim}. There, we start by introducing \textit{trimmed Hill plot} which can be used to visually determine $k_0$. Then, by exploiting the elegant joint distribution structure of the optimal trimmed Hill estimators, we devise a weighted sequential testing method for the identification of $k_0$. The devised sequential testing can be shown to be asymptotically consistent for the general heavy tailed regime (see
 \cite{bhattacharya:kallitsis:stoev:2017_arxiv}). This leads to a new {\em adaptive trimmed Hill} estimator, which works well even if the degree of contamination in the top order statistics is largely 
 unknown.  This novel adaptive robustness property is not present in the existing robust estimators. 
 
 In Section \ref{sec:comp}, we demonstrate the need for adaptive robustness and the advantages of our estimator in comparison with
 established robust estimators in the literature. The finite--sample performance of the trimmed Hill estimator is studied in the context of various heavy tailed models, tail indices, and contamination scenarios
 in Section \ref{sec:simulate}. We also propose a unified approach which can jointly estimate $k_0$ along with $k$ so that the method is more suited to practical applications. In Section \ref{sec:discussion}, we finally summarize our contributions and outline some future problems and practical challenges.

\section{The Trimmed Hill Estimator}
\label{sec:trim-hill}

In this section, we shall focus on the fundamental ${\rm Pareto}(\sigma,\xi)$ model and assume that
\begin{equation}\label{e:Pareto-model}
\P(X>x) = (x/\sigma)^{-1/\xi},\ x\ge \sigma,
\end{equation}
for some $\sigma>0$ and a tail index $\xi>0$.

Motivated by the goal to provide a robust estimate of the tail index $\xi$ and in view of the classical Hill estimator in  \eqref{e:hill}, we consider the class of statistics, 
$\widehat{\xi}^{\rm trim}_{k_0,k}(n)$ defined in \eqref{e:xi-trimmed}. Proposition \ref{prop:xi-opt} below finds the weights, $c_{k_0,k}(i)$ for which the estimator in
\eqref{e:xi-trimmed} is unbiased for $\xi$ and also has the minimum variance. Their optimality and robustness are discussed in Section \ref{sec:optimal}.

The following result gives the form of the best linear unbiased trimmed Hill estimator.  Its proof is given in Section \ref{sec:trim-hill}.
\vspace{2mm}
\begin{proposition} 
	\label{prop:xi-opt}
	Suppose $X_1, \cdots, X_n$ are i.i.d. observations from the distribution ${\rm Pareto}(\sigma,\xi)$ as in \eqref{e:Pareto-model}. Then among the general class of estimators  given by \eqref{e:xi-trimmed}, the minimum variance linear unbiased estimator of $\xi$ is given by
	\begin{equation}
	\label{e:xi-opt}
	\widehat{\xi}_{k_0,k} (n)=\frac{k_0+1}{k-k_0} \log \Bigg(\frac{X_{(n-k_0,n)}}{X_{(n-k,n)}} \Bigg)+\frac{1}{k-k_0} \sum_{i=k_0+2}^{k} \log \Bigg(\frac{X_{(n-i+1,n)}}{X_{(n-k,n)}} \Bigg),  \hspace{5mm} 0\leq k_0<k<n.
	\end{equation}
\end{proposition}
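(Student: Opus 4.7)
The plan is to reduce the problem to a standard unconstrained minimization of $\sum a_j^2$ subject to a single linear constraint, using the Rényi representation of exponential order statistics.

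First, I would note that $Y_i := \log(X_i/\sigma)/\xi$ are i.i.d. $\mathrm{Exp}(1)$, and by the standard representation of exponential order statistics, the normalized top spacings
\[
U_j := j\bigl(Y_{(n-j+1,n)} - Y_{(n-j,n)}\bigr),\qquad j=1,\ldots,n-1,
\]
are i.i.d.\ $\mathrm{Exp}(1)$. Using this, I would rewrite each log-ratio appearing in \eqref{e:xi-trimmed} as a telescoping sum:
\[
\log\!\Bigl(\tfrac{X_{(n-i+1,n)}}{X_{(n-k,n)}}\Bigr) \;=\; \xi\bigl(Y_{(n-i+1,n)} - Y_{(n-k,n)}\bigr) \;=\; \xi\sum_{j=i}^{k}\frac{U_j}{j}.
\]

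Second, I would substitute this into \eqref{e:xi-trimmed} and switch the order of summation to obtain a clean linear form
\[
\widehat{\xi}^{\rm trim}_{k_0,k}(n) \;=\; \xi\sum_{j=k_0+1}^{k} a_j\, U_j,\qquad\text{where}\qquad a_j \;:=\; \frac{1}{j}\sum_{i=k_0+1}^{j} c_{k_0,k}(i).
\]
Because the $U_j$ are i.i.d.\ $\mathrm{Exp}(1)$, unbiasedness for $\xi$ is equivalent to the single linear constraint $\sum_{j=k_0+1}^{k} a_j = 1$, and the variance equals $\xi^2\sum_{j=k_0+1}^{k} a_j^2$. Also, the change of variables $c\mapsto a$ is a bijection between vectors $(c_{k_0,k}(i))_{i=k_0+1}^k$ and $(a_j)_{j=k_0+1}^k$, so optimizing over the $a_j$'s is the same as optimizing over the weights in the original parametrization.

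Third, minimizing $\sum_{j=k_0+1}^{k} a_j^2$ subject to $\sum a_j = 1$ is immediate (Cauchy--Schwarz or a one-line Lagrange multiplier): the unique minimizer is the constant $a_j^* = 1/(k-k_0)$. Inverting the relation $j a_j = \sum_{i=k_0+1}^{j} c_{k_0,k}(i)$ by differencing yields
\[
c_{k_0,k}^*(k_0+1) = \frac{k_0+1}{k-k_0},\qquad c_{k_0,k}^*(i) = \frac{1}{k-k_0}\quad\text{for }i=k_0+2,\ldots,k,
\]
and substituting these weights into \eqref{e:xi-trimmed} gives exactly the expression \eqref{e:xi-opt}.

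There is no real obstacle here; the only step that needs care is keeping the index bookkeeping correct when swapping sums and inverting the map $c\mapsto a$ (in particular, getting the boundary coefficient $(k_0+1)/(k-k_0)$ right for the $i=k_0+1$ term). Everything else is a textbook application of the Rényi representation plus the minimization of an $\ell^2$-norm under a single affine constraint.
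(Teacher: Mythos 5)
Your proof is correct and follows essentially the same route as the paper: both arguments reduce the log-ratios, via R\'enyi's representation, to a linear combination $\xi\sum_j a_j U_j$ of i.i.d.\ standard exponentials and then minimize $\sum_j a_j^2$ under the single unbiasedness constraint $\sum_j a_j=1$, inverting the triangular map to recover the boundary weight $(k_0+1)/(k-k_0)$. The only cosmetic differences are that the paper first recasts the top-$k$ log-ratios as order statistics of $k$ i.i.d.\ ${\rm Exp}(\xi)$ variables (via the uniform/gamma representation) before applying R\'enyi in Lemma \ref{lem:blue}, and it justifies the equal-weight optimum by the sample-mean UMVUE (Lehmann--Scheff\'e) argument rather than your direct Cauchy--Schwarz/Lagrange computation.
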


The choice of the trimming parameter $k_0$ is of key importance in practice. In Section \ref{sec:aut-trim}, we propose an automatic data driven methodology 
for the selection of $k_0$, which is motivated by the following result.

\begin{proposition}
	\label{prop:xi-exp}
	The joint distribution of $\widehat{\xi}_{k_0,k}(n)$ can be expressed in terms of gamma distributed random variables;
	\begin{equation}
	\label{e:xi-jt-big}
	{\Big\{\widehat{\xi}_{k_0,k}(n),\ k_0=0,\ldots,k-1 \Big\}}
	\stackrel{d}{=} {\Big\{\xi \frac{\Gamma_{k-k_0}}{k-k_0},\ k_0=0,\ldots,k-1 \Big\}},
	\end{equation}
	where the $\Gamma_i$'s are as in \eqref{e:gam-dist}. Consequently, we have that
	\begin{equation}
	\label{e:covar}\\
	{\rm Cov}(\widehat{\xi}_{i,k}(n),\widehat{\xi}_{j,k}(n))=\frac{\xi^2}{k-i\wedge j}, \hspace{5mm} i,j=0, 1, \cdots k-1
	\end{equation} where $\wedge$ denotes the min operator. Moreover, as $k-k_0\rightarrow \infty$,
	\begin{equation}
	\label{e:xi-normal}\\
	\sqrt{k-k_0}(\widehat{\xi}_{k_0,k}(n)-\xi)\stackrel{d}{\implies}N(0,\xi^2)
	\end{equation}
\end{proposition}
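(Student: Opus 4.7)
The plan is to reduce the problem to sums of i.i.d.\ standard exponentials via the classical R\'enyi representation for exponential order statistics. Under the Pareto model, $Y_i := \log(X_i/\sigma)/\xi$ are i.i.d.\ Exp(1), and the order statistics satisfy $\log(X_{(j,n)}/X_{(i,n)}) = \xi(Y_{(j,n)} - Y_{(i,n)})$. So it suffices to analyze linear combinations of log-spacings of exponential order statistics.

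By R\'enyi's representation, the normalized spacings $(n-j+1)(Y_{(j,n)}-Y_{(j-1,n)})$, $j=1,\ldots,n$, are i.i.d.\ Exp(1). Relabeling by $\ell = n-j+1$, one obtains i.i.d.\ Exp(1) variables $F_\ell$ with
$$
Y_{(n-i+1,n)} - Y_{(n-k,n)} \;=\; \sum_{\ell=i}^{k} \frac{F_\ell}{\ell}, \qquad 1 \le i \le k.
$$
Substituting this identity into the definition \eqref{e:xi-opt} of $\widehat{\xi}_{k_0,k}(n)$ gives
$$
\frac{k-k_0}{\xi}\,\widehat{\xi}_{k_0,k}(n) \;=\; (k_0+1)\sum_{\ell=k_0+1}^{k}\frac{F_\ell}{\ell} + \sum_{i=k_0+2}^{k}\sum_{\ell=i}^{k}\frac{F_\ell}{\ell}.
$$
Interchanging the order of summation in the double sum produces $\sum_{\ell=k_0+2}^{k}(\ell - k_0 - 1) F_\ell/\ell$, and combining with the first term the weight of $F_\ell/\ell$ for $\ell \ge k_0+2$ becomes $(k_0+1) + (\ell - k_0 -1) = \ell$, while the coefficient of $F_{k_0+1}/(k_0+1)$ is $k_0+1$. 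Thus the sum collapses telescopically to
$$
\widehat{\xi}_{k_0,k}(n) \;=\; \frac{\xi}{k-k_0}\sum_{\ell=k_0+1}^{k} F_\ell.
$$
This identity, holding jointly for all $k_0 = 0,\ldots,k-1$ with the \emph{same} exponential variables $F_\ell$, immediately yields the joint distributional equality \eqref{e:xi-jt-big}, since $\sum_{\ell=k_0+1}^{k}F_\ell$ is $\mathrm{Gamma}(k-k_0,1)$.

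The remaining claims are then straightforward. For the covariance, assume without loss of generality $i \le j$; using independence of the $F_\ell$,
$$
\mathrm{Cov}\bigl(\widehat{\xi}_{i,k}(n), \widehat{\xi}_{j,k}(n)\bigr)
= \frac{\xi^{2}}{(k-i)(k-j)}\sum_{\ell=j+1}^{k}\mathrm{Var}(F_\ell)
= \frac{\xi^{2}}{k-i} = \frac{\xi^{2}}{k- i\wedge j}.
$$
For the CLT, since $\Gamma_{k-k_0}/(k-k_0)$ is the sample mean of $k-k_0$ i.i.d.\ Exp(1) variables with mean $1$ and variance $1$, the standard univariate CLT gives $\sqrt{k-k_0}(\widehat{\xi}_{k_0,k}(n)-\xi) = \xi \sqrt{k-k_0}(\Gamma_{k-k_0}/(k-k_0) - 1) \Rightarrow N(0,\xi^2)$.

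The main technical obstacle is the combinatorial simplification of the weights: the specific coefficients appearing in \eqref{e:xi-opt} — in particular the $(k_0+1)$ attached to the top used log-spacing — are precisely what make the double sum collapse into a plain partial sum of the $F_\ell$'s. This is really the algebraic content of the result, and it both confirms that \eqref{e:xi-opt} is the ``right'' form of the optimal trimmed Hill estimator and provides the clean nested gamma representation that will later drive the data-driven selection of $k_0$ in Section \ref{sec:aut-trim}.
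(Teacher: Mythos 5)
Your proof is correct and takes essentially the same route as the paper: both invoke R\'enyi's representation for exponential spacings, substitute into \eqref{e:xi-opt}, interchange the double sum, and observe the telescoping of the weights $(k_0+1) + (\ell - k_0 - 1) = \ell$, which is precisely the algebraic identity that makes the optimal trimmed Hill estimator collapse into a plain rescaled gamma variable. The paper reaches the same expression $\widehat{\xi}_{k_0,k} = \frac{1}{k-k_0}\sum_{j=1}^{k-k_0}E_j^*$ by first passing through the intermediate variables $E_{(i,k)}$ (order statistics of $k$ i.i.d.\ ${\rm Exp}(\xi)$ from the proof of Proposition~\ref{prop:xi-opt}), whereas you apply R\'enyi directly to the full sample of $n$ exponentials and index the spacings from the top; the two index conventions are related by $E_j^* = \xi F_{k-j+1}$, so the arguments are interchangeable and the covariance and CLT computations match.
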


\noindent The proof is given in Section \ref{sec:trim-hill}.

\section{Optimality and Asymptotic Properties}
\label{sec:optimal}


\subsection{Optimality in the ideal Pareto case}
\label{sec:pareto}

The trimmed Hill estimators in \eqref{e:xi-trimmed} possess a strict upper breakdown point in the following sense.

\begin{definition}\label{def:ubp-est}
	A statistic $\widehat{\theta}$ is said to have a strict upper breakdown point $\beta$, $0\leq\beta<1$, if $\widehat{\theta}=T(X_{(n-[n\beta],n)},\cdots,X_{(1,n)})$ where $X_{(n,n)}\geq\cdots\geq X_{(1,n)}$ are the order statistics of the sample. That is, $\widehat{\theta}$ is unaffected by the values of the top $[n\beta]$ order statistics.
\end{definition}

Assuming that all observations are generated from ${\rm Pareto}(\sigma,\xi)$, the following theorem describes the optimality properties of the trimmed Hill estimator
for both the asymptotic and finite sample regimes for a given value of strict upper break down point.

\begin{theorem}
	\label{thm:umvue-p}
	Consider the class of statistics given by 
	\begin{equation*}
	{\cal{U}}_{k_0}=\left\{T=T(X_{(n-k_0,n)},\cdots,X_{(1,n)}): \: \mathbb{E}(T)=\xi, \: X_1, \cdots, X_n \stackrel{i.i.d.}{\sim} {\rm Pareto}(\sigma,{\xi})\right\}
	\end{equation*}
	which are all unbiased estimators of $\xi$ with strong upper breakdown point $\beta=k_0/n$. Then for $\widehat{\xi}_{k_0,n-1}(n)$ as in \eqref{e:xi-opt}, we have
	\begin{equation}
	\label{e:opt-var}
	\frac{\xi^2}{n-k_0} \leq  \inf_{T\in	{\cal{U}}_{k_0}} Var(T) \leq Var(\widehat{\xi}_{k_0,n-1})=\frac{\xi^2}{n-k_0-1}.
	\end{equation}
	In particular, $\widehat{\xi}_{k_0,n-1}$ is asymptotically  minimum variance unbiased estimator (MVUE) of $\xi$ among the class of estimators described by ${\cal{U}}_{k_0}$.
\end{theorem}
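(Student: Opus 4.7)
The plan is to split \eqref{e:opt-var} into its upper and lower halves. The upper bound is immediate from Proposition \ref{prop:xi-exp}: setting $k = n-1$ and $i = j = k_0$ in the covariance formula \eqref{e:covar} yields $\mathrm{Var}(\widehat{\xi}_{k_0,n-1}) = \xi^2/(n-1-k_0)$. Thus the substance of the theorem is the Cram\'er--Rao-type lower bound $\mathrm{Var}(T) \ge \xi^2/(n-k_0)$ for every $T \in \mathcal{U}_{k_0}$.

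To establish this, I would first reduce to the sub-model in which the scale $\sigma$ is treated as known. Any $T \in \mathcal{U}_{k_0}$ is unbiased for $\xi$ at \emph{every} $(\sigma,\xi)$, so in particular $T$ is unbiased within the one-parameter family obtained by fixing $\sigma = \sigma_0$; a CRLB proved in this sub-model therefore transfers pointwise to the full model. The joint density of the lowest $m := n-k_0$ order statistics is, for $\sigma \le x_1 < \cdots < x_m$,
\begin{equation*}
f(x_1,\dots,x_m;\xi) = \frac{n!}{k_0!}\,\xi^{-m}\,\sigma^{n/\xi}\,(x_1\cdots x_m)^{-1/\xi-1}\,x_m^{-k_0/\xi},
\end{equation*}
whose support depends on $\sigma$ but \emph{not} on $\xi$, so the usual CRLB regularity conditions are met.

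Computing the score and differentiating once more gives
\begin{equation*}
\frac{\partial^2 \log f}{\partial \xi^2} = \frac{m}{\xi^2} - \frac{2}{\xi^3}\left[\sum_{i=1}^m \log(X_{(i,n)}/\sigma) + k_0 \log(X_{(m,n)}/\sigma)\right].
\end{equation*}
Using the representation $\log(X_{(i,n)}/\sigma) = \xi\,E_{(i,n)}$, where $E_{(\cdot,n)}$ are the order statistics of $n$ iid $\mathrm{Exp}(1)$ variables with $\mathbb{E}[E_{(i,n)}] = \sum_{j=1}^{i}(n-j+1)^{-1}$, an exchange of sums produces the telescoping identity
\begin{equation*}
\mathbb{E}\left[\sum_{i=1}^m E_{(i,n)} + k_0\,E_{(m,n)}\right] = \sum_{j=1}^m \frac{m+k_0-j+1}{n-j+1} = m,
\end{equation*}
since $m+k_0=n$. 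Hence the Fisher information is $I(\xi) = m/\xi^2 = (n-k_0)/\xi^2$, and the Cram\'er--Rao inequality yields exactly $\mathrm{Var}(T) \ge \xi^2/(n-k_0)$.

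Combining the two bounds, the ratio of the variance of $\widehat{\xi}_{k_0,n-1}$ to the CRLB is $(n-k_0)/(n-k_0-1) \to 1$ as $n - k_0 \to \infty$, establishing the asymptotic MVUE conclusion. The only delicate point is justifying CRLB in the presence of the nuisance parameter $\sigma$, which the sub-model reduction sidesteps cleanly; the algebraic collapse of the Fisher information via $m+k_0=n$ is the lone calculation that merits explicit care.
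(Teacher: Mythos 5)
Your proof is correct, but your lower bound travels a genuinely different route from the paper's. You share the same first step: since any $T\in{\cal U}_{k_0}$ is unbiased for every $(\sigma,\xi)$, it remains unbiased in the sub-model with $\sigma$ fixed, so a bound proved there transfers. From that point the paper does not use information bounds at all: it rewrites the statistics as functions of $\log(X_{(n-i+1,n)}/\sigma)$, i.e.\ of exponential order statistics, applies R\'enyi's representation (Lemma \ref{lem:renyi}) to identify the class ${\cal U}_{k_0}^{\sigma}$ in distribution with the class of all unbiased estimators of $\xi$ based on $n-k_0$ i.i.d.\ ${\rm Exp}(\xi)$ variables, and then invokes Lehmann--Scheff\'e (the sample mean is a complete sufficient statistic) to conclude that the infimum over the enlarged class is exactly $\xi^2/(n-k_0)$. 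You instead compute the Fisher information of the joint density of the lowest $m=n-k_0$ order statistics and apply the Cram\'er--Rao inequality; your density, the score, and the telescoping identity $\sum_{j=1}^{m}(m+k_0-j+1)/(n-j+1)=m$ are all correct, and they do give $I(\xi)=(n-k_0)/\xi^2$. The trade-offs: the paper's completeness argument is regularity-free and in fact identifies the exact value of $\inf_{T\in{\cal U}^{\sigma}_{k_0}}{\rm Var}(T)$, not just a bound, while your argument only needs the bound (which suffices for \eqref{e:opt-var}) but does require justifying CRLB regularity. Your remark that the support does not depend on $\xi$ is necessary but not by itself sufficient; the clean justification is that, with $\sigma$ known, the truncated-order-statistics model is a one-parameter exponential family in $1/\xi$ with natural statistic $\sum_{i=1}^{m}\log(X_{(i,n)}/\sigma)+k_0\log(X_{(m,n)}/\sigma)$, so differentiation under the integral (both for the likelihood and for $\mathbb{E}_\xi[T]$ of any finite-variance unbiased $T$) is automatic. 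With that one sentence added, your proof is complete; the upper bound and the asymptotic MVUE conclusion are handled exactly as in the paper.
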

The proof is given in Section \ref{sec:proofs-sec3}.

\subsection{Asymptotic normality}
\label{sec:heavy-tail}

Here, we shall establish the asymptotic normality of $\tail \xi {k_0} k$ under the general semi-parametric regime \eqref{e:heavy-tail}. We shall also briefly discuss the minimax rate optimality of the
trimmed Hill estimator.

Following \cite{bier}, consider the tail quantile function
\begin{equation}\label{e:tail-qnt}Q(t)=\inf\{x:F(x)\geq 1-1/t\}=F^{-1}(1-1/t), \:\: t> 1\end{equation}
where $F^{-1}$ is the generalized inverse of the distribution function $F$. As in \cite{bier}, we assume
\begin{equation}
\label{e:L-def}
Q(t)=t^{\xi}L(t)
\end{equation}
where $L$ is a slowly varying function at $\infty$, which is equivalent to \eqref{e:heavy-tail} (see, e.g., p.\ 29 in \cite{bingham1989regular}).


Observe that 
$$X_i=Q(Y_i), \quad i=1,\cdots, n,$$
where $Y_i$, $i=1,\cdots,n$ are i.i.d ${\rm Pareto}(1,1)$. Thus in view of \eqref{e:xi-opt} and \eqref{e:L-def}, straightforward algebra yields: 
\begin{eqnarray}
\label{e:tail-3}
\tail \xi {k_0} k(n)&=&\frac{k_0+1}{k-k_0}\log\left( \frac{Y_{(n-k_0,n)}^\xi}{Y_{(n-k,n)}^\xi}\right)+\frac{1}{k-k_0} \sum_{i=k_0+2}^{k}\log\left( \frac{Y_{(n-i+1,n)}^\xi}{Y_{(n-k,n)}^\xi}\right)+R_{k_0,k}(n)\\\nonumber
&=:&\widehat{\xi}^{*}_{{k_0},k}(n)+R_{k_0,k}(n),
\end{eqnarray}
where $Y_{(i,n)}$'s are the order statistics for the $Y_i$'s and where
the remainder $R_{k_0,k}(n)$ is:
\begin{equation}
\label{e:r-def}
R_{k_0,k}(n)=\frac{1}{k-k_0}\Big((k_0+1)\log \frac{L(Y_{(n-k_0,n)})}{L(Y_{(n-k,n)})} +\sum_{i=k_0+2}^{k}\log \frac{L(Y_{(n-i+1,n})}{L(Y_{(n-k,n)})}\Big).
\end{equation}

Observe that, the $X_i^*:= Y_i^{\xi}$'s follow ${\rm Pareto}(1,\xi)$ and thus the  statistic $\widehat{\xi}^{*}_{{k_0},k}(n)$ in \eqref{e:tail-3} is nothing but the trimmed Hill estimator in the ideal Pareto data
$X_i^*,\ i=1,\dots,n$. We shall show that under suitable assumptions on the function $L$,  $\sqrt{k-k_0}R_{k_0,k}(n)$ converges to a constant in probability. This in view of \eqref{e:tail-3}, naturally leads to an asymptotic normality result for $\widehat{\xi}_{k_0,k}(n)$ (see \eqref{e:xi-normal}).

To this end, following \cite{bier}, we adopt the second order condition:
\begin{equation} \label{e:SR2} \forall x>1: \frac{L(tx)}{L(t)}=1+cg(t) \int_{1}^x  \nu^{-\rho-1}d \nu +o(g(t)), \hspace{5mm} t \rightarrow \infty \end{equation}
such that $g:(0,\infty) \rightarrow (0,\infty)$ is a $-\rho$ varying function with $\rho \geq 0$. It can be shown that \eqref{e:SR2} implies 
\begin{equation}\label{e:L-behav}
\sup_{t\ge t_\varepsilon}\Big| \log \frac{L(tx)}{L(t)}-cg(t)\int_{1}^x  \nu^{-\rho-1}d \nu  \Big| 
\leq \Bigg\{ 
\begin{array}{ll}
\varepsilon g(t) & \mbox{ if }\rho>0\\
\varepsilon g(t) x^\varepsilon  &  \mbox{ if }\rho=0.
\end{array}
\end{equation}
for all $\varepsilon>0$ and some $t_\varepsilon$ dependent on $\varepsilon$ and $g$ (see Lemma A.2 in  \cite{bier} for more details.)


\vspace{2mm}

\begin{theorem}
	\label{prop:E-conv}
	Suppose \eqref{e:SR2} holds and let $k\rightarrow\infty$, $n\rightarrow \infty$ and $k/n \rightarrow 0$ be such that	for some $\delta>0$,
	\begin{equation}\label{e:A-def}k^\delta g(n/k) \rightarrow A \end{equation}for a constant $A$. Then,
	\begin{equation}
	\label{e:E-conv}
	k^\delta\max_{0\leq k_0 <h(k)}\Bigg|\widehat{\xi}_{k_0,k}(n)-\widehat{\xi}^*_{k_0,k}(n)-\frac{cA k^{-\delta}}{1+\rho}\Bigg|\stackrel{P}{\longrightarrow}0,
	\end{equation}
	where $h(k)=o(k)$ and $\widehat{\xi}_{k_0,k}(n)$ and $\widehat{\xi}^*_{k_0,k}(n)$ are defined in \eqref{e:tail-3}.
\end{theorem}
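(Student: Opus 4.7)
The plan is to leverage the exact decomposition $\widehat{\xi}_{k_0,k}(n) - \widehat{\xi}^{*}_{k_0,k}(n) = R_{k_0,k}(n)$ given by (\ref{e:tail-3})--(\ref{e:r-def}) and to show that $k^\delta R_{k_0,k}(n)$ converges in probability to $cA/(1+\rho)$, uniformly over $0\le k_0<h(k)$. The main tool is the second-order bound (\ref{e:L-behav}), which I would apply with $t=Y_{(n-k,n)}$ and $x = W_{i-1}:=Y_{(n-i+1,n)}/Y_{(n-k,n)}\ge 1$; the event $\{Y_{(n-k,n)}\ge t_\varepsilon\}$ has probability tending to one because $Y_{(n-k,n)}\sim n/k\to\infty$ under $k/n\to 0$.

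\textbf{Linearization and three pieces.} For each $1\le i\le k$, the bound (\ref{e:L-behav}) gives
\[
\log\frac{L(Y_{(n-i+1,n)})}{L(Y_{(n-k,n)})} \;=\; c\,g(Y_{(n-k,n)})\,H_\rho(W_{i-1}) + \Delta_{i,k},\qquad H_\rho(x):=\int_1^x \nu^{-\rho-1}\,d\nu,
\]
with $|\Delta_{i,k}|\le \varepsilon g(Y_{(n-k,n)})$ for $\rho>0$ (or $\le \varepsilon g(Y_{(n-k,n)})W_{i-1}^\varepsilon$ for $\rho=0$). Substituting into (\ref{e:r-def}) yields the decomposition
\[
R_{k_0,k}(n) \;=\; c\,g(Y_{(n-k,n)})\,M_{k_0,k}(n) + E_{k_0,k}(n),
\]
where $M_{k_0,k}(n)$ is the same weighted average as in (\ref{e:r-def}) but with $H_\rho(W_{i-1})$ replacing each log-ratio, and $E_{k_0,k}(n)$ aggregates the $\Delta_{i,k}$'s. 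Three limits must then be established. First, $k^\delta g(Y_{(n-k,n)})\to A$ in probability, from $Y_{(n-k,n)}/(n/k)\to 1$ (the standard auxiliary ${\rm Pareto}(1,1)$ quantile limit), the $-\rho$ regular variation of $g$, and (\ref{e:A-def}). Second, $\sup_{k_0<h(k)} |M_{k_0,k}(n)-1/(1+\rho)|\to 0$ in probability: the identity $W_{i-1}\stackrel{d}{=} 1/U_{(i,k)}$ with $U_{(\cdot,k)}$ uniform order statistics gives $H_\rho(W_{i-1})=\rho^{-1}(1-U_{(i,k)}^\rho)\in[0,1/\rho]$ for $\rho>0$, and a Hill-type law of large numbers yields $\frac{1}{k}\sum_{i=1}^k H_\rho(W_{i-1})\to \E[H_\rho(Y)]=1/(1+\rho)$. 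Third, $k^\delta\sup_{k_0<h(k)}|E_{k_0,k}(n)|\to 0$: summing the bounds on $\Delta_{i,k}$ gives $|E_{k_0,k}(n)|\le \varepsilon g(Y_{(n-k,n)})\cdot k/(k-k_0)$, hence $k^\delta\sup|E_{k_0,k}(n)|\le \varepsilon A(1+o(1))$, which is arbitrarily small upon letting $\varepsilon\to 0$.

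\textbf{The main obstacle} is the uniformity in $k_0$. Since $M_{k_0,k}(n)$ depends on $k_0$ only through the excluded light-tail contribution of the largest ratios $W_0,\ldots,W_{k_0-1}$, the discrepancy between $M_{k_0,k}(n)$ and the full average $\frac{1}{k}\sum_{i=1}^k H_\rho(W_{i-1})$ is bounded by a quantity of order $(h(k)+1)/(\rho(k-h(k)))=o(1)$ when $\rho>0$, thanks to the boundedness $H_\rho\le 1/\rho$. For the boundary case $\rho=0$ the integrand $\log W_{i-1}$ is unbounded, and one has to invoke the classical Pareto-quantile estimate $\frac{1}{k}\sum_{i=1}^{k_0}\log W_{i-1}\approx (k_0/k)\log(k/k_0)\to 0$ together with an analogous mean bound on $\frac{1}{k}\sum_{i=1}^k W_{i-1}^\varepsilon$ (finite in expectation for $\varepsilon<1$) to absorb the $W_{i-1}^\varepsilon$ factor appearing in the remainder. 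This boundary analysis is the only step that requires genuine technical care, and it is precisely where the hypothesis $h(k)=o(k)$ is used.
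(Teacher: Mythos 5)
Your decomposition $R_{k_0,k}=cg(Y_{(n-k,n)})M_{k_0,k}+E_{k_0,k}$ is precisely the paper's split $R_{k_0,k}=S_{k_0,k}+(R_{k_0,k}-S_{k_0,k})$, and your three limits correspond exactly to the paper's Lemmas \ref{lem:g-asy}, \ref{lem:r-def} and \ref{lem:s-def}; the uniformity argument over $k_0<h(k)$ and the extra care needed in the $\rho=0$ boundary case (where $H_0=\log$ is unbounded and the remainder picks up a $W_{i-1}^\varepsilon$ factor) mirror the paper's treatment. This is essentially the same approach as the paper, sketched correctly.
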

\noindent The proof is given in Section \ref{sec:proofs-sec3}.

\begin{corollary}
	If $k_0=o(k)$ and $\sqrt{k}g(n/k) \rightarrow A$,
	$$\sqrt{k}(\widehat{\xi}_{k_0,k}(n)-\xi)\stackrel{d}{\implies}N\left(\frac{cA}{1+\rho}, \xi^2\right)$$
\end{corollary}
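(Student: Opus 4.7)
The plan is to combine the decomposition from Theorem \ref{prop:E-conv} with the ideal-Pareto asymptotic normality of Proposition \ref{prop:xi-exp} via Slutsky's theorem. Writing
\[
\sqrt{k}\bigl(\widehat{\xi}_{k_0,k}(n)-\xi\bigr) \;=\; \sqrt{k}\bigl(\widehat{\xi}_{k_0,k}(n)-\widehat{\xi}^{*}_{k_0,k}(n)\bigr)\;+\;\sqrt{k}\bigl(\widehat{\xi}^{*}_{k_0,k}(n)-\xi\bigr),
\]
we may treat the two summands separately: the first is a bias-type term driven by the slowly varying function $L$ through the remainder $R_{k_0,k}(n)$, and the second is the fluctuation of a trimmed Hill estimator applied to ideal Pareto data $X_i^{*}=Y_i^{\xi}$.

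For the first term I would apply Theorem \ref{prop:E-conv} with $\delta=1/2$. The hypothesis $\sqrt{k}\,g(n/k)\to A$ is precisely \eqref{e:A-def} in this case, and the assumption $k_0=o(k)$ ensures that $k_0<h(k)$ for some function $h$ with $h(k)=o(k)$, so the uniform bound \eqref{e:E-conv} applies at the index $k_0$. It yields
\[
\sqrt{k}\bigl(\widehat{\xi}_{k_0,k}(n)-\widehat{\xi}^{*}_{k_0,k}(n)\bigr)\;\xrightarrow{P}\;\frac{cA}{1+\rho}.
\]

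For the second term, Proposition \ref{prop:xi-exp} applied to the Pareto sample $X_i^{*}$ gives $\sqrt{k-k_0}(\widehat{\xi}^{*}_{k_0,k}(n)-\xi)\Rightarrow N(0,\xi^2)$ as $k-k_0\to\infty$. Because $k_0=o(k)$, the ratio $\sqrt{k/(k-k_0)}\to 1$, so multiplying by this deterministic factor and invoking Slutsky's theorem upgrades this to
\[
\sqrt{k}\bigl(\widehat{\xi}^{*}_{k_0,k}(n)-\xi\bigr)\;\xrightarrow{d}\;N(0,\xi^2).
\]
Combining the in-probability convergence of the first summand with the weak convergence of the second by Slutsky then produces the stated limit $N\bigl(cA/(1+\rho),\xi^{2}\bigr)$.

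The only mild subtlety, and the step I would be most careful about, is the passage from the Pareto result stated for a fixed or slowly growing $k_0$ to the present setting where $k_0$ is an arbitrary sequence satisfying only $k_0=o(k)$. This is handled by noting that \eqref{e:xi-jt-big} gives $\widehat{\xi}^{*}_{k_0,k}(n)\stackrel{d}{=}\xi\,\Gamma_{k-k_0}/(k-k_0)$ with $\Gamma_{k-k_0}$ a sum of $k-k_0\to\infty$ i.i.d.\ standard exponentials, so the central limit theorem applies regardless of the particular sequence $k_0=k_0(n)$; everything else is routine.
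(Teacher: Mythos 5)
Your proof is correct and follows essentially the same route as the paper, which simply cites Theorem \ref{prop:E-conv} with $\delta=1/2$ together with the asymptotic normality \eqref{e:xi-normal}. Your write-up makes explicit the Slutsky step and the $\sqrt{k/(k-k_0)}\to 1$ rescaling, which the paper leaves implicit, but the underlying argument is identical.
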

\noindent The proof is a direct consequence of Theorem \ref{prop:E-conv} for $\delta=1/2$ and result \eqref{e:xi-normal}.

\subsection{On the minimax rate--optimality}\label{sec:minmax}

We end this section with a brief discussion of the rate-optimality of the trimmed Hill estimators
in the context of the Hall class. Namely, consider the class of distributions ${\cal D}: = {\cal D}_\xi(B,\rho)$
with tail index $\xi>0$, such that \eqref{e:L-def} holds, where
\begin{equation}
\label{e:D-new-def}
L(x) = 1+r(x),\ \quad\mbox{ with  }\quad |r(x)|\le B x^{-\rho},\ (x>0)
\end{equation}
for some {\em fixed} constants $B>0$ and $\rho>0$ (see also (2.7) in \cite{boucheron:thomas:2015}).

\begin{theorem}[uniform consistency] \label{t:uniform-consistency} Suppose that $k=k(n) \propto n^{2\rho/(2\rho+1)}$ and $h(k) = o(k)$, as $n\to\infty$.
	
	Then, for every sequence $a(n)\downarrow 0$, such that  $a(n) \sqrt{k(n)} \to \infty$, we have
	\begin{equation}\label{e:t:uniform-consistency}
	\liminf_{n\to\infty} \inf_{F\in {\cal D}_\xi(B,\rho)} \P_F\left( \max_{0\le k_0< h(k)} |\widehat \xi_{k_0,k}(n) - \xi| \le a(n) \right) =1.
	\end{equation}
	where by $\P_F$, we understand that $\widehat \xi_{k_0,k}(n)$ was built using
	independent realizations from $F$.
\end{theorem}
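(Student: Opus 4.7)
The plan is to leverage the decomposition \eqref{e:tail-3}, $\widehat\xi_{k_0,k}(n) = \widehat\xi^*_{k_0,k}(n) + R_{k_0,k}(n)$, in which the ideal Pareto part $\widehat\xi^*_{k_0,k}(n)$ (built from $X_i^*=Y_i^\xi$) depends on $F$ only through the fixed $\xi$, and $R_{k_0,k}(n)$ absorbs the slowly varying contribution.  By the triangle inequality it suffices, uniformly in $F \in \mathcal{D}_\xi(B,\rho)$, to establish
\begin{equation*}
\max_{0\le k_0<h(k)}\bigl|\widehat\xi^*_{k_0,k}(n)-\xi\bigr| = o_P(a(n)) \quad\text{and}\quad \max_{0\le k_0<h(k)}\bigl|R_{k_0,k}(n)\bigr| = o_P(a(n)).
\end{equation*}
Both will turn out to be $O_P(k^{-1/2})$, which is $o_P(a(n))$ by the assumption $a(n)\sqrt{k(n)}\to\infty$.

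For the ideal part I would invoke the joint identity \eqref{e:xi-jt-big} from Proposition \ref{prop:xi-exp}, realized through the natural coupling $\Gamma_m = S_m := E_1+\cdots+E_m$ with i.i.d.\ Exp$(1)$ summands, so that
\begin{equation*}
\bigl\{\widehat\xi^*_{k_0,k}(n)-\xi\bigr\}_{k_0=0}^{k-1} \stackrel{d}{=} \bigl\{\xi(S_{k-k_0}-(k-k_0))/(k-k_0)\bigr\}_{k_0=0}^{k-1}.
\end{equation*}
Since $\{S_m-m\}_{m\le k}$ is a square-integrable, mean-zero martingale with $\E[(S_k-k)^2]=k$, Doob's $L^2$ maximal inequality gives $\max_{m\le k}|S_m-m|=O_P(k^{1/2})$. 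Because $h(k)=o(k)$, one has $k-k_0\ge k-h(k)\ge k/2$ for all large $n$, so $\max_{0\le k_0<h(k)}|\widehat\xi^*_{k_0,k}(n)-\xi|=O_P(k^{-1/2})$; this is uniform in $F$ since the displayed law depends on $F$ only through the fixed $\xi$.

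For the remainder in \eqref{e:r-def}, the Hall assumption \eqref{e:D-new-def} gives $|r(x)|\le Bx^{-\rho}$, whence $|\log L(x)|=|\log(1+r(x))|\le 2Bx^{-\rho}$ whenever $x \ge x_0:=(2B)^{1/\rho}$. Because $Y_{(n-i+1,n)}\ge Y_{(n-k,n)}$ for every $i\le k$, on the (asymptotically sure) event $\{Y_{(n-k,n)}\ge x_0\}$ every log-ratio in \eqref{e:r-def} is bounded in modulus by $4BY_{(n-k,n)}^{-\rho}$, giving
\begin{equation*}
\max_{0\le k_0<h(k)}\bigl|R_{k_0,k}(n)\bigr| \le 4B\,Y_{(n-k,n)}^{-\rho}.
\end{equation*}
The classical fact $(k/n)Y_{(n-k,n)}\stackrel{P}{\to}1$---whose law is $F$-free because the $Y_i$'s are always ${\rm Pareto}(1,1)$---yields $Y_{(n-k,n)}^{-\rho}=O_P((k/n)^\rho)$, and since $k\propto n^{2\rho/(2\rho+1)}$ we have $(k/n)^\rho \propto k^{-1/2}$. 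The main obstacle is to keep the controls \emph{uniform} in $F \in \mathcal{D}_\xi(B,\rho)$, but both bounds above are realized through constants that depend on $F$ only through the fixed $\xi,B,\rho$, so uniformity drops out (the event $\{Y_{(n-k,n)}<x_0\}$ likewise has $F$-free probability that vanishes). Combining the two $o_P(a(n))$ estimates via the triangle inequality proves \eqref{e:t:uniform-consistency}.
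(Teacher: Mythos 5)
Your proof is correct and follows the same basic decomposition $\widehat\xi_{k_0,k} = \widehat\xi^*_{k_0,k} + R_{k_0,k}$ from \eqref{e:tail-3} as the paper, with the same remainder control via the Hall-class bound $|r(x)|\le Bx^{-\rho}$ (which is essentially the content of the paper's Lemma \ref{lem:Hall}, though the paper phrases the bound as $\log\frac{1+BY^{-\rho}_{(n-k,n)}}{1-BY^{-\rho}_{(n-k,n)}}$). Where you genuinely diverge is in controlling the ideal part: you apply Doob's $L^2$ maximal inequality to the martingale $\{S_m-m\}_{m\le k}$ to get $\max_{m\le k}|S_m-m|=O_P(\sqrt{k})$, whereas the paper invokes Donsker's invariance principle for the normalized tail process $W_k(t)=\frac{\sqrt{k}}{[kt]}\sum_{i\le [kt]}(E_i-1)$ and extracts $\sup_{t\in[\epsilon,1]}|W_k(t)|=O_P(1)$ from weak convergence to $B(t)/t$ in ${\mathbb D}[\epsilon,1]$. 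Both give exactly $\max_{0\le k_0<h(k)}|\widehat\xi^*_{k_0,k}-\xi|=O_P(k^{-1/2})$ since $k-k_0\ge k-h(k)\gtrsim k$. Your route is more elementary and self-contained (a single maximal inequality rather than a functional CLT in a Skorokhod space); the Donsker route gives a bit more — it identifies the limiting law of the supremum — but that extra information is not used here. One small correction: since the weights in $R_{k_0,k}$ sum to $\frac{k}{k-k_0}$ rather than $1$, the bound should read $\max_{0\le k_0<h(k)}|R_{k_0,k}|\le \frac{k}{k-h(k)}\cdot 4B\,Y^{-\rho}_{(n-k,n)}$; this extra factor tends to $1$ under $h(k)=o(k)$, so the $O_P((k/n)^\rho)$ conclusion and everything downstream are unaffected.
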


The proof of this result is given in Section \ref{sec:supple}.
Relation \eqref{e:t:uniform-consistency} reads as follows. The estimator $\widehat \xi_{k_0,k}(n)$ is {\em uniformly consistent} 
(at the rate $a(n)$) in {\em both} the family of possible distributions ${\cal D}$ and in the choice of the trimming parameter $k_0$, 
so long as $k_0 = o(k)$.  This remarkable property shows that $\widehat \xi_{k_0,k}(n)$ are minimax rate-optimal 
in the sense of Hall and Welsh \cite{HallWelsh}. Indeed, Theorem 1 in Hall and Welsh implies the following.

\begin{theorem}[rate optimality]
	Let $\widehat \xi_n$ be any estimator of $\xi$ based on an independent sample from a distribution $F\in {\cal D}_\xi(B,\rho)$. 
	If we have
	\begin{equation}
	\label{e:opt-rate}
	\liminf_{n \rightarrow \infty} \inf_{F \in {\cal{D}_\xi(B,\rho)}} \P_F(|\widehat \xi_n-\xi |\leq a(n))=1
	\end{equation}
	then $n^{\rho/(2\rho+1)}a(n)=\infty$. 
\end{theorem}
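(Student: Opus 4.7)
The plan is to invoke Theorem~1 of Hall and Welsh~\cite{HallWelsh}, which gives a minimax lower bound of order $n^{-\rho/(2\rho+1)}$ for tail-index estimation over a Hall-type class, and then to argue by contrapositive. Specifically, suppose the conclusion fails: then along some subsequence, $n^{\rho/(2\rho+1)} a(n)\to L<\infty$, and \eqref{e:opt-rate} would furnish a sequence of estimators that is uniformly consistent on ${\cal D}_\xi(B,\rho)$ at a rate at least as fast as $n^{-\rho/(2\rho+1)}$. This would contradict the Hall--Welsh lower bound and deliver the claim.

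The main technical step is to check that our class ${\cal D}_\xi(B,\rho)$, defined in \eqref{e:D-new-def} through the tail quantile function $Q(t)=t^\xi L(t)$ with $|L(x)-1|\le Bx^{-\rho}$, sits inside the family to which Hall--Welsh applies. Inverting $Q$, any $F\in{\cal D}_\xi(B,\rho)$ satisfies
\[
1-F(x)=x^{-1/\xi}\bigl(1+O(x^{-\rho/\xi})\bigr),
\]
uniformly in the constants defining the class. Up to a rescaling of the second-order parameter, this is precisely the Hall-type family from \cite{HallWelsh}, and their minimax rate matches the exponent $\rho/(2\rho+1)$ used here --- the same rate appearing in Theorem~\ref{t:uniform-consistency} via $k\propto n^{2\rho/(2\rho+1)}$.

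The main obstacle is precisely this class bookkeeping: one must check that the Hall--Welsh two-point construction, which separates two distributions in $\xi$ by a gap of order $n^{-\rho/(2\rho+1)}$ while keeping their $n$-fold product measures contiguous, can be realized strictly inside ${\cal D}_\xi(B,\rho)$ for the fixed constants $B$ and $\rho$. If one prefers a self-contained argument, the standard Le~Cam two-point route is available: construct $F_1,F_2\in{\cal D}_\xi(B,\rho)$ with tail indices $\xi$ and $\xi+cn^{-\rho/(2\rho+1)}$ by perturbing a Pareto density at scale $\sim n^{\xi/(2\rho+1)}$, verify that the $n$-fold total variation distance is bounded away from~$1$, and then apply the standard testing-to-estimation reduction. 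Either route gives the desired impossibility result.
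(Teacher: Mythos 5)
Your proposal takes essentially the same route as the paper: the paper offers no separate argument for this theorem, stating only that it follows from Theorem 1 of Hall and Welsh \cite{HallWelsh}, which is exactly the reduction you make. Your additional bookkeeping --- inverting $Q(t)=t^{\xi}L(t)$ with $|L(x)-1|\le Bx^{-\rho}$ to get $1-F(x)=x^{-1/\xi}\bigl(1+O(x^{-\rho/\xi})\bigr)$, so that the Hall--Welsh exponent with $\alpha=1/\xi$ and $\beta=\rho/\xi$ equals $\rho/(2\rho+1)$, and flagging that the two-point construction must be realized inside ${\cal D}_\xi(B,\rho)$ itself --- is correct and in fact more careful than what the paper records.
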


This result shows that no estimator can be uniformly consistent over the Hall class of distributions ${\cal D}$ at a rate better than
$n^{\rho/(2\rho+1)}$. This is the {\em minimax} optimal rate that one could possibly hope to achieve.    
Observe that this result applies also to the trimmed Hill estimators.  As seen in Theorem \ref{t:uniform-consistency} above 
the {\em trimmed Hill} estimators attain this minimax optimal rates uniformly in $k_0 \in [0,h(k)]$, for any $h(k) = o(n^{2\rho/(2\rho+1)})$.  

\section{Data driven parameter selection}
\label{sec:aut-trim}

\subsection{Choice of $k_0$}

Suppose $X_i$, $i=1,2,\cdots,n$ are generated from the distribution $F$ of the form \eqref{e:heavy-tail}, then the optimal trimmed Hill statistic, $\widehat{\xi}_{k_0,k}(n)$ is asymptotically an unbiased estimator for the tail index $\xi$ (see Theorem \ref{prop:E-conv}) as long as the parameters $k_0$ and $k$ satisfy \eqref{e:E-conv}. However, this result breaks down in the presence of outliers, i.e. $\widehat{\xi}_{k_0,k}(n)$ may be biased estimate of $\xi$  for some $1\leq k_0 \leq k-1$. The intuition to this end is illustrated via trimmed Hill plots explained below.

For a fixed value of $k$, trimmed Hill plot is a plot of the values of $\widehat{\xi}_{k_0,k}(n)$ for varying values of $k_0$ (see Figure \ref{fig:knee1}). The vertical lines correspond to $\widehat{\xi}_{k_0,k}(n)\underline{+}\widehat{\sigma}_{k_0,k}(n)$ where $\widehat{\sigma}_{k_0,k}(n)=\widehat{\xi}_{k_0,k}(n)/\sqrt{k-k_0}$ denotes the plug in estimate of the standard error of $\widehat{\xi}_{k_0,k}(n)$ (see Proposition \ref{prop:xi-exp}). In the presence of outliers, a change-point  in the form of a knee occurs in the values of $\widehat{\xi}_{k_0,k}(n)$, when $k_0$ is close to true number of outliers, $k_0^*$. In order to obtain a robust estimate of the tail index $\xi$, it is essential to obtain an adaptive estimate of the $k_0^*$. This can be achieved by estimating the location of the knee, which serves as close approximation to the true number of outliers $k^*_0$. The plug in statistic, $\widehat{\xi}_{\widehat{k}_0,k}(n)$ based on the so-obtained $\widehat{k}_0$ serves as a robust estimate of the tail index, $\xi$. 


\begin{figure}[H]
	\includegraphics[width=0.5\textwidth]{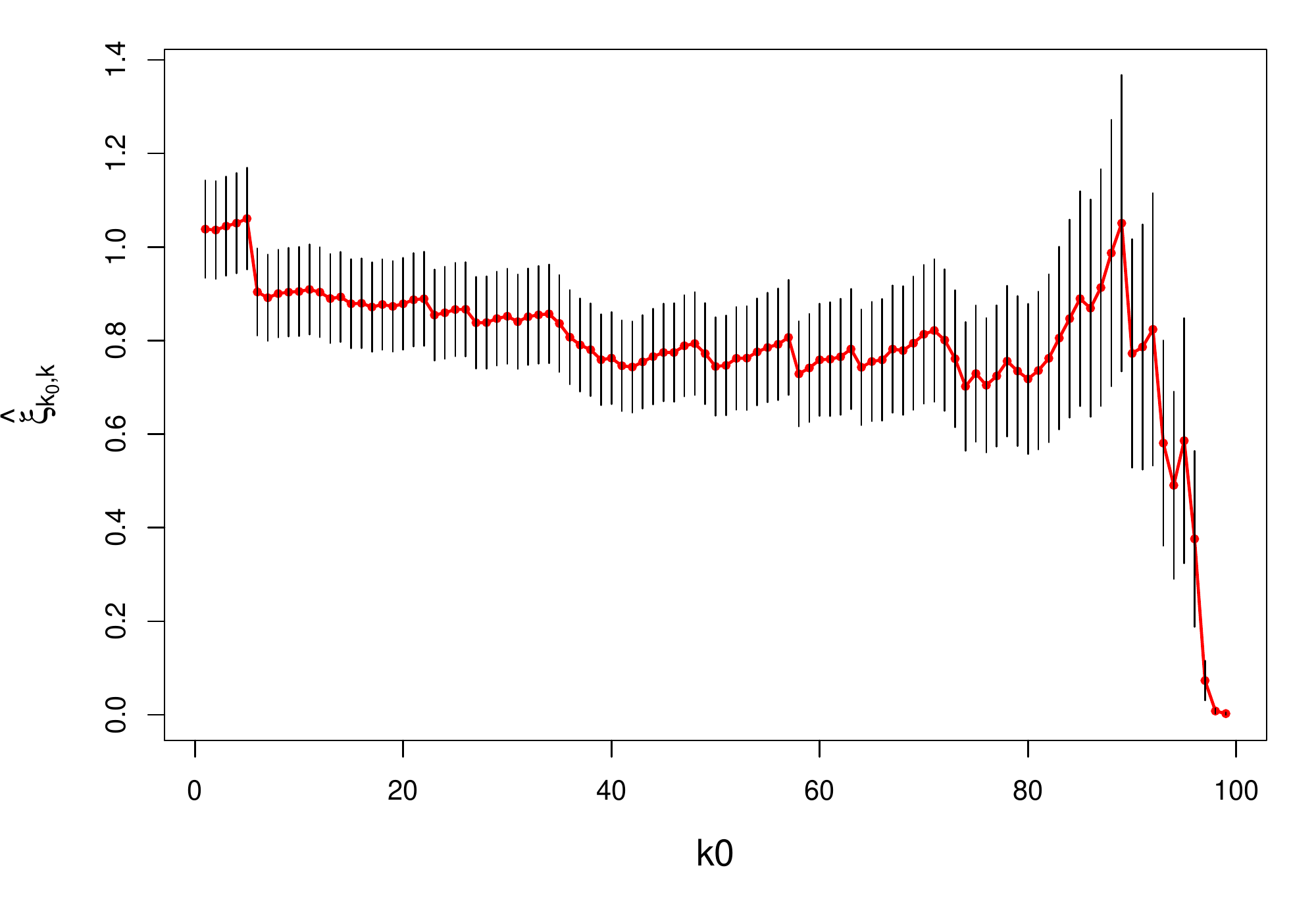}
	\includegraphics[width=0.5\textwidth]{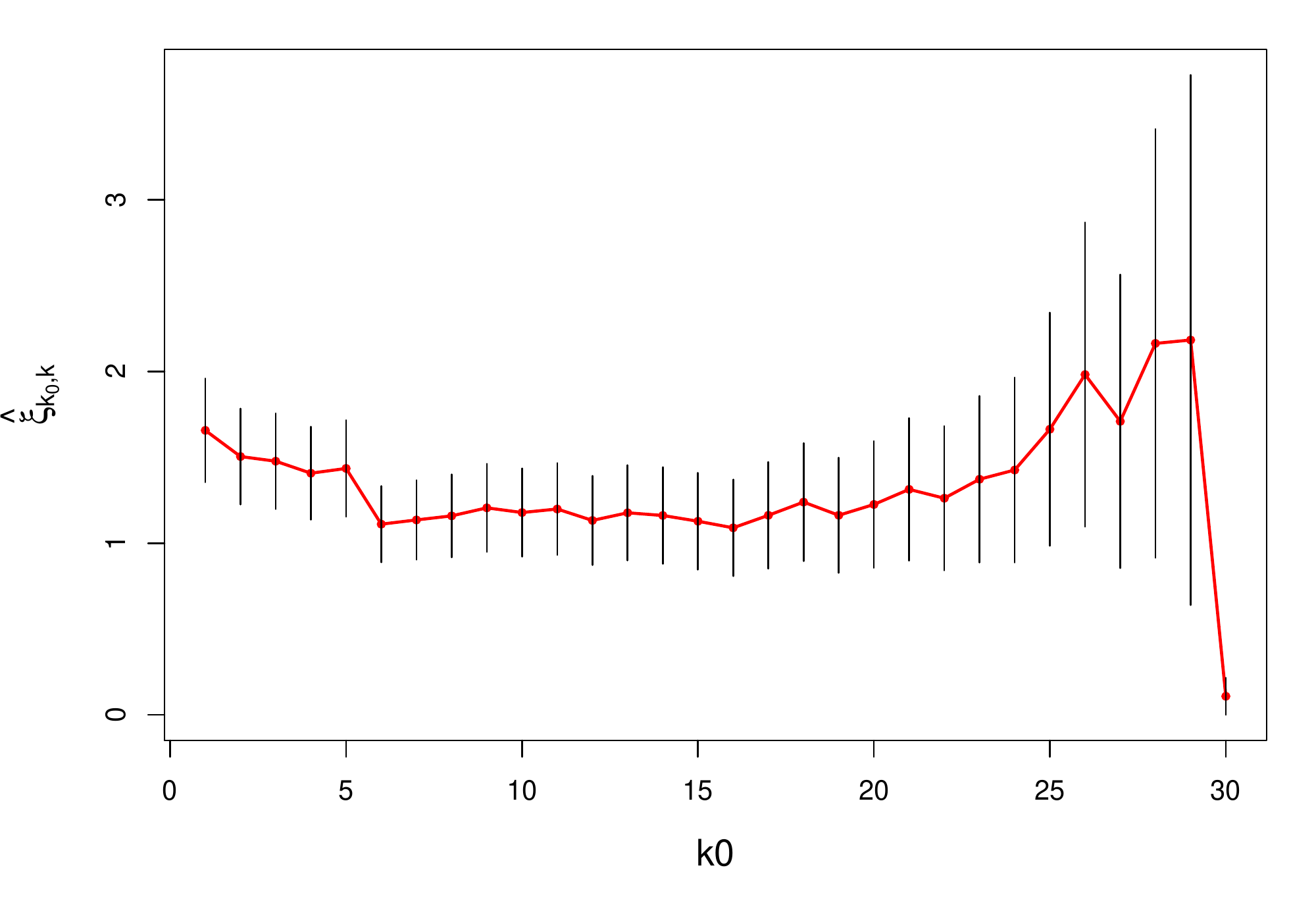}
	\caption{Trimmed Hill Plot for 10 outliers and sample size 100. Left: Pareto(1,1) with $k=99$ Burr(1,0.5,1) with $k=24$ (see \eqref{e:heavy-dist}).}
	\label{fig:knee1}
\end{figure}

In order to obtain an accurate estimate for $\xi$, it is an important task to get an estimate of the parameters $k_0$ and $k$. In the first section, we describe the methodology for the estimation of $k_0$ when $k$ is fixed.  Next we describe an iterative algorithm which allows for the estimation of the parameters $k_0$ and $k$ simultaneously.

\vspace{5mm}
\begin{proposition}
	\label{T-def} 
	Suppose all the $X_i$'s are generated from ${\rm Pareto}(\sigma, \xi)$, then consider the following class of statistics
	\begin{equation}
	\label{e:T-i-k}
	T_{k_0,k}(n):=\frac{(k-k_0-1)\widehat{\xi}_{k_0+1,k}(n)}{(k-k_0)\widehat{\xi}_{k_0,k}(n)}, \hspace{5mm} k_0=0,1,\cdots, k-2.
	\end{equation} the $T_{k_0,k}(n)$'s are independent and follow  ${\rm Beta}(k-k_0-1,1)$ distribution for $k_0=0,1,\cdots,k-2$. 
	
\end{proposition}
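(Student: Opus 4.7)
The plan is to pass to the joint distributional representation of the trimmed Hill statistics given in Proposition \ref{prop:xi-exp} and reduce the claim to classical properties of partial sums of i.i.d.\ exponentials. In view of \eqref{e:xi-jt-big} and the covariance identity \eqref{e:covar}, the joint law of the estimators can be realized through the nested coupling
\begin{equation*}
(k-k_0)\,\widehat{\xi}_{k_0,k}(n) \;\stackrel{d}{=}\; \xi\,\Gamma_{k-k_0}, \qquad \Gamma_m := E_1+\cdots+E_m, \qquad k_0 = 0,1,\dots,k-1,
\end{equation*}
for a single sequence $E_1,\dots,E_k$ of i.i.d.\ Exp$(1)$ variables. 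Substituting into \eqref{e:T-i-k} cancels $\xi$ together with the factors $k-k_0$ and $k-k_0-1$, leaving
\begin{equation*}
T_{k_0,k}(n) \;\stackrel{d}{=}\; \frac{\Gamma_{k-k_0-1}}{\Gamma_{k-k_0}} \;=:\; V_{k-k_0}, \qquad k_0=0,1,\dots,k-2.
\end{equation*}

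For the marginal at a fixed $k_0$, since $\Gamma_{k-k_0}=\Gamma_{k-k_0-1}+E_{k-k_0}$ is the sum of independent $\text{Gamma}(k-k_0-1,1)$ and $\text{Gamma}(1,1)$ variables, the standard Beta--Gamma identity immediately gives $V_{k-k_0}\sim\text{Beta}(k-k_0-1,1)$, matching the claimed marginal law.

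The main step, and the only non-routine part, is joint independence of $V_2,\dots,V_k$. I would establish this by a direct change of variables from $(\Gamma_1,\dots,\Gamma_k)$, whose joint density is $e^{-g_k}\mathbf{1}\{0<g_1<\cdots<g_k\}$, to $(V_2,\dots,V_k,\Gamma_k)$, using the inverse $\Gamma_i = V_{i+1}V_{i+2}\cdots V_k\,\Gamma_k$ for $i<k$. The Jacobian is upper triangular, and a routine computation shows that the transformed joint density factorizes into a $\text{Gamma}(k,1)$ density for $\Gamma_k$ times, for each $m\in\{2,\dots,k\}$, a density proportional to $v_m^{m-2}\mathbf{1}\{0<v_m<1\}$. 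This simultaneously yields mutual independence of $V_2,\dots,V_k$ (and their joint independence from $\Gamma_k$) and reconfirms the Beta$(m-1,1)$ marginals. Since $\{T_{k_0,k}(n):0\le k_0\le k-2\}$ is just $\{V_m:2\le m\le k\}$ listed in reverse, the proposition follows. An equivalent, notationally lighter route is to observe that $(\Gamma_1/\Gamma_k,\dots,\Gamma_{k-1}/\Gamma_k)$ has the law of the order statistics of $k-1$ i.i.d.\ Uniform$(0,1)$'s and is independent of $\Gamma_k$, and then invoke Malmquist's theorem on the successive ratios of these order statistics.
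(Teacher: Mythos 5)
Your proposal is correct, and the key independence step takes a genuinely different route from the paper's. Both begin identically: you and the paper pass through Proposition \ref{prop:xi-exp} (equivalently \eqref{e:xi-jt-big}) to identify $T_{k_0,k}$ in distribution with $\Gamma_{k-k_0-1}/\Gamma_{k-k_0}$. For independence, however, the paper argues sequentially: it invokes Relation \eqref{e:gam-ind} from Lemma \ref{lem:u-ind} (that $\Gamma_m$ is independent of the normalized ratios $\Gamma_1/\Gamma_m,\dots,\Gamma_{m-1}/\Gamma_m$), deduces that the block $(\Gamma_1/\Gamma_2,\dots,\Gamma_{m-1}/\Gamma_m)$ is independent of $(\Gamma_m, E_{m+1},\dots,E_k)$, and then observes that the ``tail'' block $(\Gamma_m/\Gamma_{m+1},\dots,\Gamma_{k-1}/\Gamma_k)$ is measurable with respect to the latter --- iterating this over $m$ gives full mutual independence. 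Your main argument is instead a one-shot change of variables from $(\Gamma_1,\dots,\Gamma_k)$ to $(V_2,\dots,V_k,\Gamma_k)$ with inverse $\Gamma_i = V_{i+1}\cdots V_k\,\Gamma_k$, where the upper-triangular Jacobian $g_k^{k-1}\prod_{m\ge 3} v_m^{m-2}$ produces a product density $\frac{1}{\Gamma(k)}g_k^{k-1}e^{-g_k}\cdot\prod_{m=2}^{k}(m-1)v_m^{m-2}\mathbf{1}\{0<v_m<1\}$. This direct computation has the advantage of delivering independence and the $\text{Beta}(m-1,1)$ marginals simultaneously, without leaning on the auxiliary lemma; the paper's route has the advantage of reusing machinery (Lemma \ref{lem:u-ind}) that is already needed elsewhere in the paper, at the cost of a slightly more intricate conditional-independence chain. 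Your alternative sketch via Malmquist's theorem on successive ratios of uniform order statistics is essentially the same fact as \eqref{e:gam-u} plus \eqref{e:gam-ind} repackaged, and is the closest of your two routes to what the authors actually do.
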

\begin{proof}
	By \eqref{e:T-i-k} and Proposition \ref{prop:xi-exp}, we have
	\begin{equation}
	\label{e:T-joint}
	\Big(T_{0,k},\cdots,T_{k-2,k}\Big)\stackrel{d}{=}\Big(\frac{\Gamma_{k-1}}{\Gamma_k},\cdots,\frac{\Gamma_1}{\Gamma_2}\Big),
	\end{equation}
	which implies
	$$T_{k_0,k}\stackrel{d}{=}\frac{\Gamma_{k-k_0-1}}{\Gamma_{k-k_0}}\sim{\rm Beta}(k-k_0-1,1),\hspace{5mm}i=0,\cdots,k-2.$$ 
	To show the independence of the $T_{k_0,k}$'s, from Relation \eqref{e:gam-ind} in Lemma \ref{lem:u-ind} observe that $\Gamma_m$ and  $\{\Gamma_i/\Gamma_{m}, i=1,\cdots,m\}$  are independent for all $1\leq m \leq k-2$. This in turn implies that $$\Big(\frac{\Gamma_1}{\Gamma_{2}},\frac{\Gamma_2}{\Gamma_{3}}, \cdots,\frac{\Gamma_{m-1}}{\Gamma_m}\Big)\textmd{ and } \Gamma_m\textmd{ are independent}.$$
	Since $\Gamma_i$, $i=1, \cdots, m$ and $(E_{m+1}, \cdots, E_k)$ are independent, for all $m=1, \cdots, k-2$, we have
	\begin{equation}
	\label{e:T-i-k-joint}
	\Big(\frac{\Gamma_1}{\Gamma_2},\cdots,\frac{\Gamma_{m-1}}{\Gamma_m}\Big)\textmd{ and } (\Gamma_m, E_{m+1}, \cdots, E_k)\textmd{ are independent }.
	\end{equation}
	The independence of the $T_{k_0,k}$'s follows from \eqref{e:T-i-k-joint} by observing that for all $1\leq m \leq k-2$,  $\Big(\frac{\Gamma_{m}}{\Gamma_{m+1}},\cdots,\frac{\Gamma_{k-1}}{\Gamma_k}\Big)$ is a function of $(\Gamma_m, E_{m+1}, \cdots, E_k)$.
\end{proof}

\begin{remark}
	\label{rem:T-dist}
	Observe that the distribution of $T_{k_0,k}(n)$ depends only on $X_{(n-k_0,n)}, \cdots, X_{(n-k,n)}$. Therefore the joint distribution of $T_{k_0,k}(n)$'s and hence that of $U_{k_0,k}(n)$'s remains unchanged as long as $$(X_{(n-k_0,n)}, \cdots, X_{(n-k,n)})\stackrel{d}{=}(Y_{(n-k_0,n)}, \cdots, Y_{(n-k,n)})$$ where $Y_{(n,n)}>\cdots>Y_{(1,n)}$ are the order statistics for a sample of $n$ i.i.d. observations from ${\rm Pareto}(\sigma, \xi)$. In other words, Proposition \ref{T-def} goes through for all $k_0\geq k^*_0$ provided that the top $k^*_0$ outliers do not perturb the nature of the order statistics $X_{(n-k_0+1,n)}$, $k_0 \geq k^*_0$. This motivates the sequential testing methodology discussed in the next section.
\end{remark}
\vspace{3mm}

\begin{theorem}
	\label{prop:T-conv}
	Suppose \eqref{prop:E-conv} in Theorem \ref{prop:E-conv} holds for some $\delta>0$. Then,
	\begin{equation}
	\label{e:T-conv}
	k^\delta\max_{0\leq k_0 <h(k)}\Bigg|T_{k_0,k}(n)-T^*_{k_0,k}(n)\Bigg|\stackrel{P}{\longrightarrow}0,
	\end{equation}
	where $T_{k_0,k}(n)$ and $T^*_{k_0,k}(n)$ are based on $\widehat{\xi}_{k_0,k}(n)$ and $\widehat{\xi}^*_{k_0,k}(n)$ respectively (see \eqref{e:tail-3} and \eqref{e:T-i-k} for explicit expressions).
\end{theorem}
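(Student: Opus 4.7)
The plan is to reduce Theorem \ref{prop:T-conv} to Theorem \ref{prop:E-conv} via an algebraic expansion of $T_{k_0,k}(n) - T^*_{k_0,k}(n)$ together with uniform-in-$k_0$ control of the spacings of the Pareto-side trimmed Hill statistics $\widehat{\xi}^*_{k_0,k}(n)$. Setting $\eta_{k_0} := \widehat{\xi}_{k_0,k}(n) - \widehat{\xi}^*_{k_0,k}(n)$ and $a := cA/(1+\rho)$, Theorem \ref{prop:E-conv} provides the uniform expansion $\eta_{k_0} = ak^{-\delta} + \epsilon_{k_0}(n)$ with $k^\delta \max_{0 \le k_0 < h(k)} |\epsilon_{k_0}(n)| \stackrel{P}{\longrightarrow} 0$. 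A direct expansion of the ratios in \eqref{e:T-i-k}, in which the leading cross-products $\widehat{\xi}^*_{k_0+1,k}\widehat{\xi}^*_{k_0,k}$ cancel, gives
\[
T_{k_0,k}(n) - T^*_{k_0,k}(n) = \frac{k-k_0-1}{k-k_0} \cdot \frac{\eta_{k_0+1}\widehat{\xi}^*_{k_0,k}(n) - \eta_{k_0}\widehat{\xi}^*_{k_0+1,k}(n)}{\widehat{\xi}_{k_0,k}(n)\widehat{\xi}^*_{k_0,k}(n)}.
\]
Substituting $\eta_i = ak^{-\delta} + \epsilon_i$ further splits the numerator into a ``spacing'' piece $ak^{-\delta}\bigl(\widehat{\xi}^*_{k_0,k}(n) - \widehat{\xi}^*_{k_0+1,k}(n)\bigr)$ plus an error piece $\epsilon_{k_0+1}\widehat{\xi}^*_{k_0,k}(n) - \epsilon_{k_0}\widehat{\xi}^*_{k_0+1,k}(n)$.

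Next I would bound each piece uniformly in $k_0 \in [0, h(k))$. By Proposition \ref{prop:xi-exp}, $\widehat{\xi}^*_{k_0,k}(n) \stackrel{d}{=} \xi \Gamma_{k-k_0}/(k-k_0)$, where $\Gamma_m = E_1 + \cdots + E_m$ is a partial sum of iid $\mathrm{Exp}(1)$ random variables. A standard maximal inequality (for instance Chernoff's bound on each $|\Gamma_m/m - 1|$ combined with a union bound over $m \in [k-h(k), k]$, using $h(k) = o(k)$) shows that $\max_{0 \le k_0 < h(k)} |\widehat{\xi}^*_{k_0,k}(n)/\xi - 1| \stackrel{P}{\longrightarrow} 0$, and the same uniform convergence then transfers to $\widehat{\xi}_{k_0,k}(n)$ via the bound on $\eta_{k_0}$. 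Consequently the denominator $\widehat{\xi}_{k_0,k}(n)\widehat{\xi}^*_{k_0,k}(n)$ is uniformly at least $\xi^2/4$ with probability tending to one, and the error piece is at most $2(\max_i |\epsilon_i(n)|)(\max_i |\widehat{\xi}^*_{i,k}(n)|) = o_P(k^{-\delta}) \cdot O_P(1)$ uniformly in $k_0$.

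The spacing piece is handled via the identity, also a consequence of the joint gamma representation,
\[
\widehat{\xi}^*_{k_0,k}(n) - \widehat{\xi}^*_{k_0+1,k}(n) \stackrel{d}{=} \frac{\xi}{k-k_0} \bigg( E_{k-k_0} - \frac{\Gamma_{k-k_0-1}}{k-k_0-1} \bigg).
\]
Since $\max_{0 \le k_0 < h(k)} E_{k-k_0}$ is the maximum of at most $h(k)$ iid $\mathrm{Exp}(1)$ random variables, it is $O_P(\log h(k))$, while $\Gamma_{k-k_0-1}/(k-k_0-1) = 1 + o_P(1)$ uniformly by the previous step. With $k-k_0 \ge k - h(k)$ of order $k$, this gives $\max_{k_0}|\widehat{\xi}^*_{k_0,k}(n) - \widehat{\xi}^*_{k_0+1,k}(n)| = O_P(\log h(k)/k)$, so that $k^\delta \cdot ak^{-\delta} \max_{k_0}|\widehat{\xi}^*_{k_0,k} - \widehat{\xi}^*_{k_0+1,k}| = O_P(\log h(k)/k) = o_P(1)$. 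Substituting these uniform bounds into the decomposition from the first paragraph yields \eqref{e:T-conv}. The principal technical obstacle is precisely the uniform-in-$k_0$ control of the Pareto-side spacings and of the denominator; both crucially rest on the joint gamma representation of Proposition \ref{prop:xi-exp} and on the assumption $h(k) = o(k)$, after which Theorem \ref{prop:E-conv} takes care of the rest.
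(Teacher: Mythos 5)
Your proposal is correct and follows essentially the same route as the paper: your exact ratio-difference identity, after dividing through by $\widehat{\xi}_{k_0,k}\widehat{\xi}^*_{k_0,k}$ and splitting $\eta_i = ak^{-\delta}+\epsilon_i$, reproduces precisely the paper's three-term bound on $W_{k_0,k}=\big|\widehat{\xi}_{k_0+1,k}/\widehat{\xi}_{k_0,k}-\widehat{\xi}^*_{k_0+1,k}/\widehat{\xi}^*_{k_0,k}\big|$, with the error terms controlled by Theorem \ref{prop:E-conv}, the bias term appearing as $ak^{-\delta}\big(1-\widehat{\xi}^*_{k_0+1,k}/\widehat{\xi}^*_{k_0,k}\big)$, and the denominator bounded away from zero uniformly via the gamma representation. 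The only cosmetic difference is that you bound the Pareto-side spacing explicitly as $O_P(\log h(k)/k)$ using the maximum of at most $h(k)$ exponentials, whereas the paper simply invokes the uniform ratio convergence of Lemma \ref{lem:ratio-conv} to get $\max_{k_0}\big|1-\widehat{\xi}^*_{k_0+1,k}/\widehat{\xi}^*_{k_0,k}\big|\to 0$ almost surely, which suffices.
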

The proof of this is described in Section \ref{sec:proofs-sec4}.
\begin{remark}
	\label{rem:T-heavy}
	Observe that by Theorem \ref{prop:T-conv}, the  asymptotic distribution of $T_{k_0,k}(n)$ and also that of $U_{k_0,k}(n)$	is same as described in Proposition \ref{T-def} as long as the number of outliers, $k_0=o(k)$. This allows us to use the algorithm described below (see Algorithm\ref{algo:seq}) for the estimation of $k_0$ in general heavy tailed models  \eqref{e:heavy-tail}. 
\end{remark}

\subsection{Exponentially Weighted Sequential Testing, EWST}
\label{subsec:seq}

Whereas the trimmed Hill plot provides an illustrative estimate of the number of outliers $k^*_0$, we discuss the weighted sequential testing algorithm for the estimation of $k^*_0$ in a principled manner. One strategy to estimate the true number of outliers, $k^*_0$, is to look for the presence of outliers among the set of values, $T_{k_0,k}(n)$. In this context, we define the following statistic
\begin{equation}
\label{e:U-i-k}
U_{k_0,k}(n):=2|(T_{k_0,k}(n))^{k-k_0-1}-0.5|, \hspace{5mm} k_0=0,1,\cdots, k-2.
\end{equation}
For i.i.d. observations from Pareto($\sigma,\xi$), $U_{k_0,k}(n)$ are i.i.d. $U(0,1)$ random variables (see Proposition \ref{T-def}). An estimate of $k^*_0$ is obtained by identifying the largest value of $k_0$ for which the hypothesis that $U_{k_0,k}(n)$ follows $U(0,1)$ gets rejected. 

In this direction, we begin with a large value of $k_0=f(k)$ and test the hypothesis: $U_{k_0,k}(n) \sim N(0,1)$. If rejected, we stop our search and declare $\widehat{k}_0=k_0$. Otherwise, we decrease the value of $k_0$ by 1 and proceed until the hypothesis $U_{k_0,k}(n) \sim U(0,1)$ gets rejected or $k_0=0$. The resulting value of $k_0$ then gives an estimate of $k_0^*$. The level for these tests increases exponentially with decrease in $k_0$. This is done in order to guard against large values $k_0$ close to $k$.

The methodology is formally described in the following algorithm.

\begin{algorithm}[H]
	\caption{\label{algo:seq}}
	\small
	\begin{algorithmic}[1]
		\label{algo:seq}
		\STATE Let $q \in (0,1)$ be the significance level.
		\STATE Choose a constant $a>1$ and set $c=1/\sum_{i=1}^{k-1}a^i$.
		\STATE Set $k_0=f(k)$.
		\STATE Compute $T_{k_0,k}(n)=((k-k_0-1)\widehat{\xi}_{k_0+1,k}(n)/((k-k_0)\widehat{\xi}_{k_0,k}(n))$.
		\STATE Compute $U_{k_0,k}(n)=2|(T_{k_0,k}(n))^{k-k_0-1}-0.5|$ as defined in \eqref{e:U-i-k}.
		\STATE If $\log (U_{k_0,k}(n))<ca^{k-k_0-1}\log (1-q)$, set $k_0=k_0-1$ else goto step 6.
		\STATE If $k_0\geq 0$, goto step 3 else $k_0=k_0+1$.
		\STATE Return $\widehat{k}_0=k_0$.
	\end{algorithmic}
\end{algorithm}

\begin{proposition} 
	\label{prop:type-I-seq}
	For i.i.d. observations from Pareto($\sigma,\xi$) and $q \in (0,1)$, let $\widehat{k}_0(q)$ be the estimate of $k^*_0$ based on Algorithm \ref{algo:seq} with $f(k)=k-2$, then under the null hypothesis $H_0:k^*_0=0$, we have $P_{H_0}[\widehat{k}_0>0]=q$.
\end{proposition}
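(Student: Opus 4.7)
The plan is to first invoke Proposition \ref{T-def} to determine the joint distribution of the $U_{k_0,k}(n)$ under $H_0:k_0^*=0$, then translate Algorithm \ref{algo:seq} into a clean event on these variables, and finally compute the probability using independence together with the algorithm's defining choice of $c$.

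Under $H_0$, every observation is i.i.d.\ ${\rm Pareto}(\sigma,\xi)$, so Proposition \ref{T-def} gives that $T_{k_0,k}(n)$, $k_0=0,1,\dots,k-2$, are independent with $T_{k_0,k}(n)\sim{\rm Beta}(k-k_0-1,1)$. By the probability integral transform, $T_{k_0,k}(n)^{k-k_0-1}\sim U(0,1)$, and since the map $v\mapsto 2|v-1/2|$ pushes $U(0,1)$ to itself, the variables $U_{k_0,k}(n)$, $k_0=0,1,\dots,k-2$, are i.i.d.\ $U(0,1)$.

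Next I would read off from Algorithm \ref{algo:seq} that $\widehat{k}_0=0$ is precisely the event that the ``accept'' inequality $\log(U_{k_0,k}(n))<ca^{k-k_0-1}\log(1-q)$, equivalently $U_{k_0,k}(n)<(1-q)^{ca^{k-k_0-1}}$, holds at every level $k_0=k-2,k-3,\dots,0$ visited by the loop (starting from $f(k)=k-2$). By independence of the $U_{k_0,k}(n)$,
\begin{equation*}
P_{H_0}[\widehat{k}_0=0]=\prod_{k_0=0}^{k-2}(1-q)^{ca^{k-k_0-1}}=(1-q)^{c\sum_{j=1}^{k-1}a^j},
\end{equation*}
after the reindexing $j=k-k_0-1$. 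The choice $c=1/\sum_{i=1}^{k-1}a^i$ in step 2 of the algorithm makes the exponent equal to $1$, so $P_{H_0}[\widehat{k}_0=0]=1-q$, and therefore $P_{H_0}[\widehat{k}_0>0]=q$ as claimed.

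The main subtlety I expect is the third step: carefully identifying the event $\{\widehat{k}_0=0\}$ with the intersection $\bigcap_{k_0=0}^{k-2}\{U_{k_0,k}(n)<(1-q)^{ca^{k-k_0-1}}\}$ by tracing through the loop's stopping rule and boundary behaviour. Once that event description is in hand, the independence coming from Proposition \ref{T-def} and the geometric-sum definition of $c$ reduce the rest of the argument to a one-line calculation.
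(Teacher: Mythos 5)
Your proposal is correct and follows essentially the same route as the paper's proof: both identify $\{\widehat{k}_0=0\}$ as the intersection $\bigcap_{k_0=0}^{k-2}\{U_{k_0,k}(n)<(1-q)^{ca^{k-k_0-1}}\}$, invoke the independence and uniformity of the $U_{k_0,k}(n)$ coming from Proposition \ref{T-def}, and use the normalization $c=1/\sum_{i=1}^{k-1}a^{i}$ to collapse the product to $1-q$. The only difference is cosmetic: you spell out explicitly the Beta-to-uniform and $v\mapsto 2|v-1/2|$ steps, which the paper states just before the algorithm rather than inside the proof.
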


\begin{proof} The type I error for   Algorithm \ref{algo:seq}, 	$P_{H_0}[\widehat{k}_0>0]$ is given by
	\begin{eqnarray} 
	\label{e:type-I-I}
	1-P_{H_0}[\widehat{k}_0=0]
	&=&1-P_{H_0}\Big[\log(U_{0,k})<ca^{k-1}\log(1-q),\cdots,\log(U_{k-2,k})<ca\log(1-q)\Big]\\\nonumber
	&=&1-\Pi_{i=0}^{k-2}P_{H_0}\Big[U_{i,k}<(1-q)^{ca^{k-i-1}}\Big]\\\nonumber
	&=&1-\Pi_{i=0}^{k-2}(1-q)^{ca^{k-i-1}}\\\nonumber
	&=&1-(1-q)^{c\sum_{i=0}^{k-2}a^{k-i-1}}=q.
	\end{eqnarray}
	where the last equality follows since $c=\sum_{i=1}^{k-1}a^i=\sum_{i=0}^{k-2}a^{k-i-1}$.
\end{proof}

\begin{remark}
	For Pareto case we attain the the exact bound of type I error. The bound is also attained asymptotically for the general heavy tailed distribution in \eqref{e:heavy-tail} but requires additional 
	assumptions. The following theorem sheds light on the reason behind the consistency of EWST for the more general heavy tailed setup.
\end{remark}

\begin{theorem} 
	\label{prop:U-conv}
	If \eqref{e:T-conv} holds for some $1<\delta<2$, then	\begin{equation}
	\label{e:U-conv}
	k^{(\delta-1)}\max_{0\leq k_0 <h(k)}\Bigg|U_{k_0,k}(n)-U^*_{k_0,k}(n)\Bigg|\stackrel{P}{\longrightarrow}0,
	\end{equation}
	with $U_{k_0,k}(n)$ and $U^*_{k_0,k}(n)$ are defined in \eqref{e:U-i-k}. Moreover, if $f(k)=O(k^{\delta-1})$, \begin{equation}\label{e:type-I-gen}
	P_{H_0}[\widehat{k}_0>0]\stackrel{P}{\longrightarrow}q.
	\end{equation}
\end{theorem}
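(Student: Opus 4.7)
The plan is to prove \eqref{e:U-conv} via a Lipschitz reduction of $|U_{k_0,k}(n) - U^*_{k_0,k}(n)|$ to $|T_{k_0,k}(n) - T^*_{k_0,k}(n)|$, and then to establish \eqref{e:type-I-gen} by coupling Algorithm \ref{algo:seq} run on $U_{k_0,k}$ with the same algorithm run on $U^*_{k_0,k}$, so that the exact Pareto computation of Proposition \ref{prop:type-I-seq} controls the limiting probability.

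For \eqref{e:U-conv}, the reverse triangle inequality $\big||x - \tfrac{1}{2}| - |y - \tfrac{1}{2}|\big| \le |x-y|$ applied to $x = T_{k_0,k}^{k-k_0-1}$ and $y = (T^*_{k_0,k})^{k-k_0-1}$ gives $|U_{k_0,k}(n) - U^*_{k_0,k}(n)| \le 2\bigl|T_{k_0,k}^{k-k_0-1} - (T^*_{k_0,k})^{k-k_0-1}\bigr|$. Since $T_{k_0,k}, T^*_{k_0,k} \in [0,1]$ almost surely and $a^n - b^n = (a-b)\sum_{j=0}^{n-1} a^j b^{n-1-j}$, the right-hand side is at most $2(k-k_0-1)|T_{k_0,k} - T^*_{k_0,k}| \le 2k |T_{k_0,k} - T^*_{k_0,k}|$. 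Multiplying by $k^{\delta-1}$ and invoking \eqref{e:T-conv} from Theorem \ref{prop:T-conv} yields \eqref{e:U-conv}.

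For \eqref{e:type-I-gen}, set $t_{k_0,k} := (1-q)^{c a^{k-k_0-1}}$, so that $\{\widehat{k}_0 = 0\} = \bigcap_{k_0=0}^{f(k)} \{U_{k_0,k}(n) < t_{k_0,k}\}$, and let $\widehat{k}_0^*$ denote the corresponding output of Algorithm \ref{algo:seq} when run on $U^*_{k_0,k}$. If $U_{k_0,k}$ and $U^*_{k_0,k}$ lie on opposite sides of $t_{k_0,k}$, then $|U^*_{k_0,k} - t_{k_0,k}| \le |U_{k_0,k} - U^*_{k_0,k}|$, so
\[
\{\widehat{k}_0 = 0\} \triangle \{\widehat{k}_0^* = 0\} \subseteq \bigcup_{k_0=0}^{f(k)} \bigl\{|U^*_{k_0,k} - t_{k_0,k}| \le |U_{k_0,k}(n) - U^*_{k_0,k}(n)|\bigr\}.
\]
By the first part of the theorem, I can pick $\varepsilon_n \downarrow 0$ with $P\bigl(k^{\delta-1}\max_{k_0 \le f(k)}|U_{k_0,k} - U^*_{k_0,k}| > \varepsilon_n\bigr) \to 0$; a union bound combined with the marginal uniformity $U^*_{k_0,k} \sim U(0,1)$ (a consequence of Proposition \ref{T-def} applied to the Pareto$(1,\xi)$ sample $Y_i^\xi$) then gives $P(\triangle) \le 2(f(k)+1)\varepsilon_n/k^{\delta-1} + o(1) \to 0$, which uses $f(k) = O(k^{\delta-1})$ in an essential way. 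Since the $U^*_{k_0,k}$ are independent $U(0,1)$ (by Proposition \ref{T-def}), the Pareto identity of Proposition \ref{prop:type-I-seq} yields $P[\widehat{k}_0^* = 0] = (1-q)^{c \sum_{k_0=0}^{f(k)} a^{k-k_0-1}}$, and a direct geometric-series evaluation shows this exponent equals $(a^k - a^{k-f(k)-1})/(a^k - a) \to 1$ as $k \to \infty$ once $f(k) \to \infty$ (the natural companion of $f(k)=O(k^{\delta-1})$ when $\delta>1$). Combining the two ingredients gives $P_{H_0}[\widehat{k}_0 > 0] = 1 - P[\widehat{k}_0^* = 0] + o(1) \to q$. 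The main technical obstacle is the rate-balance in the symmetric-difference bound: the union bound contributes a factor of $f(k)+1$ while each summand is $O(\varepsilon_n/k^{\delta-1})$, so the hypothesis $f(k) = O(k^{\delta-1})$ is precisely what the coupling argument can absorb.
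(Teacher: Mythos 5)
Your proof is correct, and for the first claim \eqref{e:U-conv} it is genuinely simpler than the paper's. You reduce $|T^{k-k_0-1}-(T^*)^{k-k_0-1}|$ to $(k-k_0-1)|T-T^*|$ via the factorization $a^n-b^n=(a-b)\sum_{j<n}a^j b^{n-1-j}$ together with $T_{k_0,k},T^*_{k_0,k}\in[0,1]$ (which holds a.s.\ since $(k-k_0)\widehat{\xi}_{k_0,k}-(k-k_0-1)\widehat{\xi}_{k_0+1,k}=(k_0+1)\log(X_{(n-k_0,n)}/X_{(n-k_0-1,n)})\ge 0$), whence $k^{\delta-1}\max|U-U^*|\le 2k^\delta\max|T-T^*|\to 0$ directly from \eqref{e:T-conv}. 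The paper instead first proves $k^\delta\max|T_{k_0,k}/T^*_{k_0,k}-1|\to 0$ (using that $\min T^*_{k_0,k}$ stays away from $0$), then passes to an a.s.\ subsequence and bounds $\widetilde{k}^{\delta-1}|(T/T^*)^{\widetilde{k}-k_0-1}-1|$ by sandwiching the ratio between $1\pm\epsilon/\widetilde{k}^\delta$ and computing the limit of the resulting $a_{\widetilde{k}},b_{\widetilde{k}}$ sequences; your mean-value bound avoids the ratio, the subsequence, and the $\epsilon$-bookkeeping entirely. For \eqref{e:type-I-gen} your argument is essentially the paper's in a different outfit: the paper perturbs each threshold $t_{k_0,k}=(1-q)^{ca^{k-k_0-1}}$ by $\pm\epsilon k^{1-\delta}$ to sandwich $P_{H_0}[\widehat k_0=0]$ between products of shifted uniform probabilities, while you bound $P(\{\widehat k_0=0\}\triangle\{\widehat k_0^*=0\})$ by a union over $k_0\le f(k)$ of the events $\{|U^*_{k_0,k}-t_{k_0,k}|\le \varepsilon_n/k^{\delta-1}\}$; both use the uniformity of $U^*_{k_0,k}$, the independence from Proposition \ref{T-def}, the identity $(1-q)^{c\sum_{i=0}^{f(k)}a^{k-i-1}}\to 1-q$, and the hypothesis $f(k)=O(k^{\delta-1})$ to absorb the factor $f(k)+1$. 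One small remark, applying equally to your argument and the paper's: the step $c\sum_{i=0}^{f(k)}a^{k-i-1}\to 1$ also requires $f(k)\to\infty$, which is not implied by $f(k)=O(k^{\delta-1})$ alone, so this is an implicit additional hypothesis of the theorem rather than a gap in your write-up.
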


The proof is described in Section \ref{sec:supple}.

\section{Simulations}
\label{sec:simulate}

In this section, we evaluate the performance of the adaptive trimmed Hill estimator, $\widehat{\xi}_{\hat{k}_0,\hat{k}}(n)$, in terms of the mean squared error, MSE as
\begin{equation}
{\rm MSE}(\widehat{\xi}_{\hat{k}_0,\hat{k}}(n))=\mathbb{E}(\widehat{\xi}_{\hat{k}_0,\hat{k}}(n)-\xi)^2
\end{equation}
For comparison, we compute the asymptotic relative efficiency, ${\rm ARE}$ with respect to both the  trimmed Hill estimator, $\widehat{\xi}_{k_0,k}(n)$ and the classic Hill, $\widehat{\xi}_{k}(n)$. The formulas are given by
\begin{eqnarray}
\label{e:are}
{\rm ARE}_{\rm TRIM}&=&{\rm MSE}(\widehat{\xi}_{k_0,k}(n))/{\rm MSE}(\widehat{\xi}_{\hat{k}_0,\hat{k}}(n))\hspace{5mm}\\\nonumber
{\rm ARE}_{\rm HILL}&=&{\rm MSE}(\widehat{\xi}_{k}(n))/{\rm MSE}\widehat{\xi}_{\hat{k}_0,\hat{k}}(n))
\end{eqnarray}
respectively, where $k_0$ is the true trimming parameter, and $k$ is replaced by its optimal choice as: \begin{equation}
\label{e:k-sim}
k^*_{n,k_0}=\underset{k=k_0+1,\cdots,n-1}{\rm arg\,min}{\rm MSE}(\widehat{\xi}_{k_0,k}(n)).
\end{equation}
We first explore the performance of exponentially weighted sequential testing algorithm, EWST as described in Section \ref{subsec:seq} as an estimator of the trimming parameter $k_0$.  In Sections \ref{subsec:h0}, \ref{subsec:ha_large} and \ref{subsec:ha_small}, we replace $\hat{k}$ in \eqref{e:are} by the optimal values $k^{*}_{n,k_0}$ in \eqref{e:k-sim}.

In Section \ref{subsec:k0-k-jt}, we will address the performance of the adaptive trimmed Hill, $\widehat{\xi}_{\hat{k}_0,\hat{k}}(n)$ where $k$ is unknown and estimated from the data as described in Section \ref{subsec:k0-k-jt}.

The efficacy of the proposed algorithms have been explored in the light of the following heavy-tailed distributions.
\begin{eqnarray}
\label{e:heavy-dist}
{\rm Pareto}(\sigma,\alpha)&:& 1-F(x)=\sigma^\alpha x^{-\alpha};\:\: x>1,\alpha>0;\:\: \xi=1/\alpha\\\nonumber
{\rm Frechet}(\alpha)&:&1-F(x):1-\exp(-x^{-\alpha});\:\: x>0, \alpha>0;\:\: \xi=1/\alpha\\\nonumber
{\rm Burr} (\eta,\lambda,\tau)&:&1-F(x)=1-\left(\frac{\eta}{\eta+x^{-\tau}}\right)^{-\lambda};\:\: x>0, \eta>0, \lambda>0, \tau>0; \:\: \xi=1/\tau\\\nonumber
{\rm |{\rm T}|}(t)&:&1-F(x)=\int_{x}^{\infty}\frac{2\Gamma(\frac{t+1}{2})}{\sqrt{n\pi}\Gamma(\frac{t}{2})}\left(1+\frac{w^2}{t}\right)^{-\frac{t+1}{2}}dw; \:\: x>0,t>0;\:\: \xi=1/t
\end{eqnarray}
In Sections \ref{subsec:ha_large} and \ref{subsec:ha_small}, the number of outliers $k_0$ and the tail index $\xi$ are kept fixed. Varying values of $\xi$ and $k_0$ are studied in Section \ref{subsec:alpha-k0}.
\subsection{Performance under $H_0$ ($k_0=0$)} 
\label{subsec:h0}
In this section, we let $X_1, \cdots, X_n$ be i.i.d. generated from one of the four distributions in \eqref{e:heavy-dist}. The tail index $\xi$ is fixed at 1. We assume that there are no outliers, i.e. $k_0=0$ which in turn implies that the trimmed Hill coincides with the classic Hill estimator.

Assuming $k=\hat{k}=k^*_{n,0}$, we evaluate the performance of the adaptive trimmed Hill, $\widehat{\xi}_{\hat{k}_0,\hat{k}}(n)$ with respect to the classic Hill, $\widehat{\xi}_{k}(n)$ in terms of ARE using \eqref{e:are}. The trimming parameter estimate $\hat{k}_0$ is obtained using EWST as in Section \ref{subsec:seq}. The ${\rm ARE}$'s are based on 5000 independent Monte Carlo realizations. For EWST , the significance level, $q$ and the exponentiation parameter $a$  are fixed at 0.05 and 1.2 respectively.

\begin{table}[H]
	\centering
	\begin{tabular}{|c|c|c|c|c|c|}
		\hline
		$n$ & Pareto(1,1) & Frechet(1) & Burr(1,0.5,1) & T(1)\\\hline
		100 &99.17  & 97.19 &  86.25 &  97.22\\ 
		200 &99.53  & 99.33 &  96.64 &  99.83\\
		500 &99.85  & 99.88 &  98.27 &  99.85\\\hline
	\end{tabular}
	\caption{{\rm ARE} of the adaptive trimmed Hill with respect to the classic Hill, $k_0=0$ and $\xi=1$.}
	\label{tab:ARE_ho}
\end{table}

As seen in Table \ref{tab:ARE_ho}, apart from the Burr distribution, we have fairly large ARE values (almost 100\%) even at sample size $n=100$. This indicates that the EWST algorithm picks up the true $k_0=0$ in almost all of the cases. As the sample size grows ($n=500$), the behavior is more uniform across different distribution and we achieve nearly 100\% asymptotic relative efficiency even for the Burr case.  This may be explained by the asymptotic Pareto-like behavior of the heavy tailed distributions (see \eqref{e:heavy-tail}).

In the following section, we explore the behavior of adaptive trimmed Hill when there are non zero outliers in the data, i.e. $k_0>0$.
\subsection{Inflated outliers ($k_0>0$)}
\label{subsec:ha_large}

We simulate from one of the distributions in \eqref{e:heavy-dist} with $\xi=1$. We introduce $k_0$ outliers by perturbing the top-$k_0$ order statistics using one of the following two approaches
\begin{eqnarray}
\label{e:exp-trans-0}
X_{(n-i+1,n)}:= X_{(n-k_0,n)}+(X_{(n-i+1,n)}-X_{(n-k_0,n)}))^L, \hspace{5mm} i=1,\cdots,k_0, \hspace{5mm} L>1\\
\label{e:scl-trans-0}
X_{(n-i+1,n)}:= X_{(n-k_0,n)}+C(X_{(n-i+1,n)}-X_{(n-k_0,n)})), \hspace{5mm} i=1,\cdots,k_0, \hspace{5mm} C>1
\end{eqnarray}
For $L,C>1$, the transformations \eqref{e:exp-trans-0} and \eqref{e:scl-trans-0} lead to inflation of the top-$k_0$ order statistics while still preserving their order.

We first fix $k_0=10$ and assume that $k=\hat{k}=k^*_{n,10}$. We then obtain the trimming parameter estimate, $\hat{k}_0$ and the corresponding adaptive trimmed  Hill estimator, ${\widehat{\xi}}_{\hat{k}_0,\hat{k}}$ by using the EWST algorithm in Section \ref{subsec:seq}. The performance is evaluated in terms of the  ${\rm ARE}$ relative to the  trimmed Hill and ${\widehat{\xi}}_{k_0,k}(n)$  and the classic Hill ${\widehat{\xi}}_{k}(n)$, as in \eqref{e:are}. Tables \ref{tab:ARE_exp_0} and \ref{tab:ARE_scl_0}  show the performance of the adaptive trimmed Hill for varying values of $L$ and $C$ respectively.

\begin{table}[H]
	\centering
	\begin{tabular}{|c|cccc|cccc|cccc|}
		\hline
		n   &  & 100 &   &    &     & 200 &    &   &   & 500 &   & \\\hline
		L &  1.2 & 1.5 & 5 & 20 & 1.2 & 1.5 & 5 & 20 & 1.2 & 1.5 & 5 & 20\\\hline
		{\rm Pareto}(1,1)	& 	0.92 & 0.54 & 0.94   & 0.99 & 0.94&  0.77& 0.98 & 1.00& 0.95 & 0.94& 1.00&  1.00\\	
		& 1.03  & 2.95 & 76.5 & 1808 & 1.02 &  3.58 & 644.9 & 1608 & 1.02 & 3.21&  45.7& 1114\\\hline
		Frechet(1) & 0.82 & 0.26 & 0.69 & 0.96 & 0.74 & 0.37 & 0.89 & 0.99 & 0.71 & 0.56 & 0.97 & 1.00 \\    
		& 11.6 & 3.66 & 10.4 & 13.9 & 21.7 & 10.8 & 26.5 & 28.9 & 45.4 & 35.7 & 59.8 & 62.2\\\hline    
		Burr(1,0.5,1)	& 1.13 &	0.33 &	0.21 &	0.94 &	0.87 &	0.26 &	0.54 & 0.96 &	0.74 &	0.37 &	0.88 &	0.99\\
		& 5.19  &	1.48 &	0.88 &	4.24 &	9.47 &	2.92 &	10.3 &	19.7 & 5.87 & 	10.3 &	23.8 &	25.8\\\hline
		$|{\rm T}|$(1)	& 0.87 & 0.29 & 0.56 & 0.96 & 0.79 & 0.36 & 0.85 & 0.98 & 0.75 & 0.62 & 0.95 & 1.00\\
		& 15.6 & 5.11 & 10.0 & 17.0 & 31.7 & 14.5 & 33.3 & 38.0  & 71.6 & 58.4 & 88.6 & 94.7
		\\\hline
	\end{tabular}
	\caption{{\rm ARE} of the adaptive trimmed Hill $k_0=10$, $\xi=1$ and $L>1$. For each distribution, top row corresponds to ${\rm ARE}_{\rm TRIM}$ and bottom row indicates ${\rm ARE}_{\rm HILL}$.}
	\label{tab:ARE_exp_0}
\end{table}

\vspace{-3mm}
\begin{table}[H]
	
	\centering
	\begin{tabular}{|c|cccc|cccc|cccc|}
		\hline
		n   &  & 100 &   &    &     & 200 &    &   &   & 500 &   & \\\hline
		C &  2 & 10 & 20 & 100 & 2 & 10 & 20 & 100 & 2 & 10 & 20 & 100\\\hline
		{\rm Pareto}(1,1)	& 	0.93 & 0.57 & 0.64   & 0.95 & 0.95&  0.73& 0.80 & 0.98& 0.98 & 0.87& 0.93&  0.99\\	
		& 1.01  & 1.85 & 3.33 & 12.7 & 1.01 &  1.57 & 2.57 & 7.22 & 1.00 & 1.29&  1.79& 3.52\\\hline
		Frechet(1) & 0.84 & 0.29 & 0.37 & 0.70 & 0.81 & 0.39 & 0.45 & 0.83 & 0.81 & 0.51 & 0.60 & 0.92 \\    
		& 11.8 & 3.94 & 5.44 & 10.3 & 22.7 & 11.8 & 13.9 & 23.7 & 50.2 & 31.6 & 37.7 & 59.6\\\hline    
		Burr(1,0.5,1) & 0.98 & 0.33 & 0.27 & 0.37 & 0.86 & 0.33 & 0.33 & 0.62 & 0.8 & 0.44 & 0.47 & 0.83\\
		& 4.52 & 1.44 & 1.20 & 1.58 & 9.17 & 3.58 & 3.65 & 6.82 & 21.3 & 12.0 & 12.7 & 22.3\\\hline
		$|{\rm T}|$(1) &0.80 &0.28 &0.30 &0.58 &0.77 &0.38 &0.47 &0.86 &0.83 &0.54 &0.65 &0.93\\
		&14.5 &4.97 &5.34 &10.6 &31.4 &15.2 &18.9 &31.9 & 80.5 &53.0 &61.4 &87.4\\\hline
	\end{tabular}
	\caption{{\rm ARE} of the adaptive trimmed Hill $k_0=10$, $\xi=1$ and $C>1$. For each distribution, top row corresponds to ${\rm ARE}_{\rm TRIM}$ and bottom row indicates ${\rm ARE}_{\rm HILL}$.}
	\label{tab:ARE_scl_0}
\end{table}

We first observe the ARE values compared to the oracle trimmed Hill statistic are relatively stable and improve considerably with the increase in sample size $n$. For outliers of small magnitude, i.e  $L=1.2$ and $C=2$, the ARE values are relatively higher as compared to the case of moderate outliers, i.e. $L=2,5$ or $C=10,20$. This is natural, since small values of $L$ and $C$ are indicative of lower levels of contamination and thus the estimation of $\xi$ is accurate even if $k_0$ is underestimated. For $L=2,5$ and $C=10,20$, we have for estimation accuracy for the trimming parameter $\hat{k}_0$ (observed in histograms of $\hat{k}_0$ not reported here). However, the increase in severity of outliers  produces a greater error in the estimation of $\xi$. Outliers of large magnitude, i.e.  $L=20$ and $C=100$, allow for nearly perfect detection accuracy for the trimming parameter $k_0$ and hence the ARE values close to 100\%.

The estimation of $k_0$ is best under the Pareto setting followed by Frechet and the T-distribution. Of all cases, the Burr distribution is most challenging. This is explained by the slow rate of convergence of Burr tails to Pareto tails and hence the relatively lower efficiency of the adaptive trimmed Hill. For large sample sizes $n=500$, sensitivity of the adaptive trimmed Hill to underlying distribution structure decreases and  we attain nearly 100\% accuracy uniformly across all distributions when $L>2$ and $C>20$. 

Finally, we observe the unusually large ARE values relative to the classic Hill. It is remarkable that even small perturbations in the top order statistics ($L=1.2$ and $C=2$) lead to an unacceptable bias of the classic Hill estimator. The MSE deteriorates by a factor of 14 or 15 in case of the T-distribution for $n=100$ and it could be as bad as 80 when $n=500$. This highlights the importance of considering adaptive robust estimators of $\xi$ in real data problems where the observations could be contaminated. For the remaining section, we shall thus consider the ARE values relative to the trimmed Hill only.

\subsection{Role of $\xi$ and $k_0$}
\label{subsec:alpha-k0}

In this section, we explore the influence of the tail exponent $\xi$ and the extent of contamination  $k_0$ on the EWST algorithm of Section \ref{subsec:seq}. Figures \ref{fig:alpha-rol} and \ref{fig:k0-rol} display the ARE values of the adaptive trimmed Hill,  ${\widehat{\xi}}_{\hat{k}_0,\hat{k}}(n)$ for varying values of $\xi$ and $k_0$ respectively.

\vspace{-2mm}
\begin{figure}[H]
	\centering
	\includegraphics[width=0.45\textwidth]{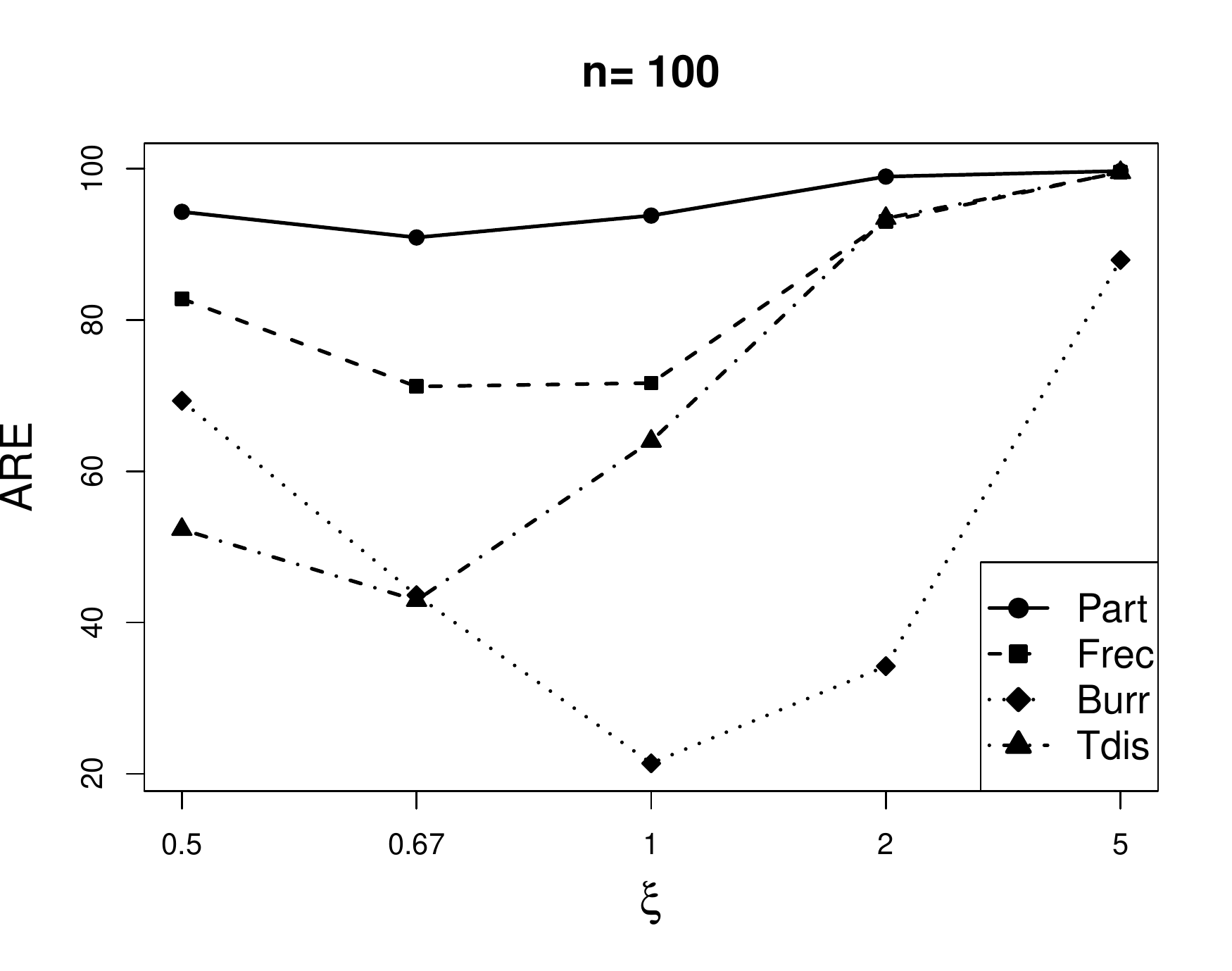}
	\includegraphics[width=0.45\textwidth]{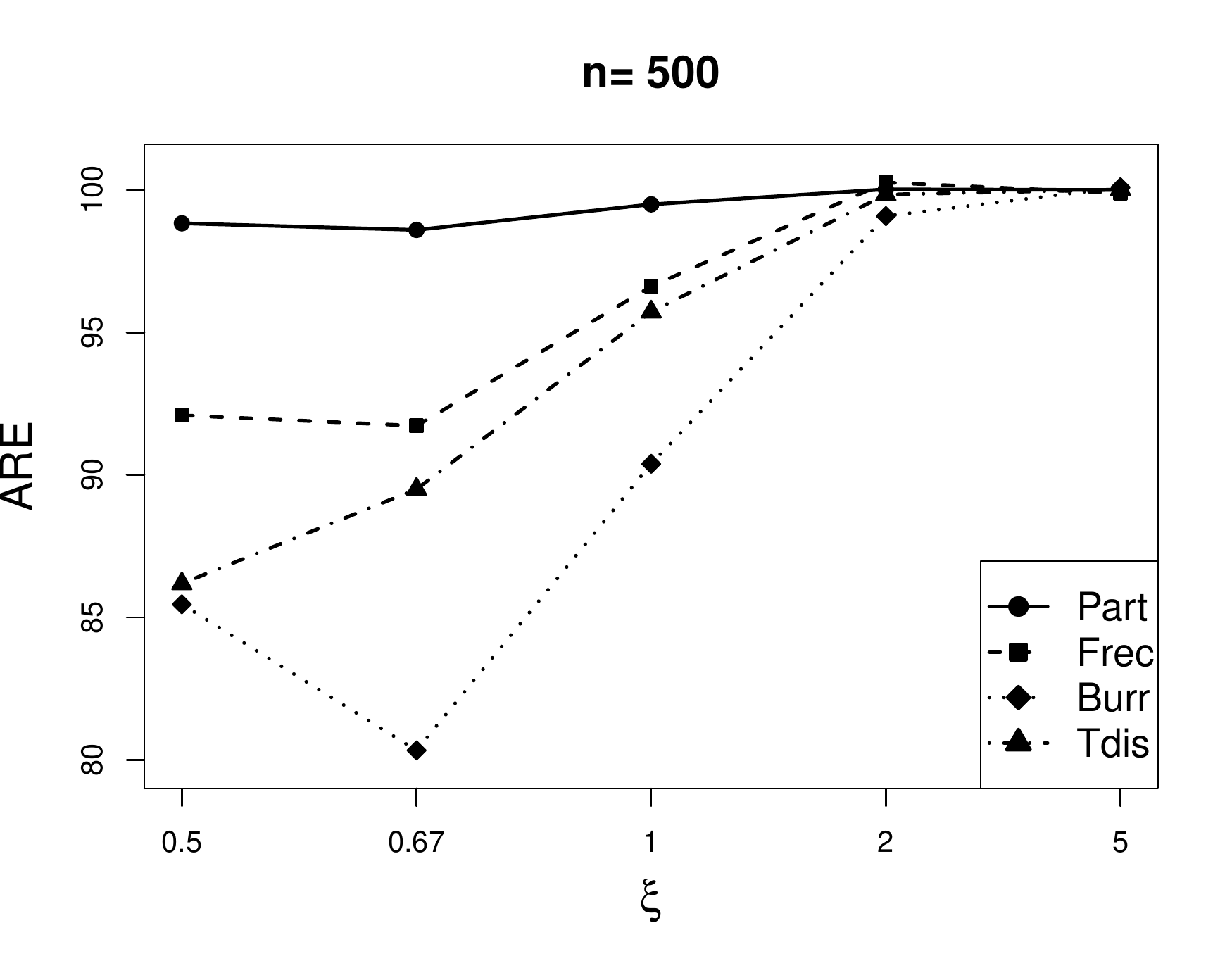}
	\caption{Performance for varying values of tail exponent $\xi$. Left and right panels correspond to $n=100$ and $n=500$, respectively.}
	\label{fig:alpha-rol}
\end{figure}

We first inject $k_0=10$ top outlier statistics as in \eqref{e:exp-trans-0} with $L=5$. The underlying distributions from which the data is generated correspond to ${\rm Pareto}(1,1/\xi)$, ${\rm Frechet}(1/\xi)$, ${\rm Burr}(1,0.5,1/\xi)$ and $|T|(1/\xi)$ with $\xi$ in the range $\{0.5,0.67,1,2,5\}$. With $k=\hat{k}=k^*_{n,10}$, we consider the ARE values of the adaptive trimmed Hill,  ${\widehat{\xi}}_{\widehat{k}_0,\hat{k}}(n)$ relative to the trimmed Hill, ${\widehat{\xi}}_{k_0,k}$ for varying values of $\xi$. Figure \ref{fig:alpha-rol} shows this behavior for varying sample sizes.

We observe that  the efficiency of the proposed adaptive trimmed Hill  approaches 100\% for $\xi>0.67$ for all distributions with increase in sample size $n$.  This is expected as the heavy tailed distributions in \eqref{e:heavy-tail} get closer to the Pareto distribution asymptotically. Whereas the Pareto distribution is more or less robust to the change in the tail exponent $\xi$, the other three heavy tailed distributions suffer from a mild loss in efficiency in the range $\xi <2$.  The superior performance at higher values of $\xi$ indicates easy identification of outliers in heavier tails. The performance of our estimator improves on both sides of $\xi=0.67$ for all distributions apart from the Burr. The Burr distribution is the most challenging in terms of identification of the trimming parameter $k_0$ and has relatively low efficiency for $0.67<\xi<2$ especially for smaller sample sizes.

\vspace{-2mm}
\begin{figure}[H]
	\centering
	\includegraphics[width=0.45\textwidth]{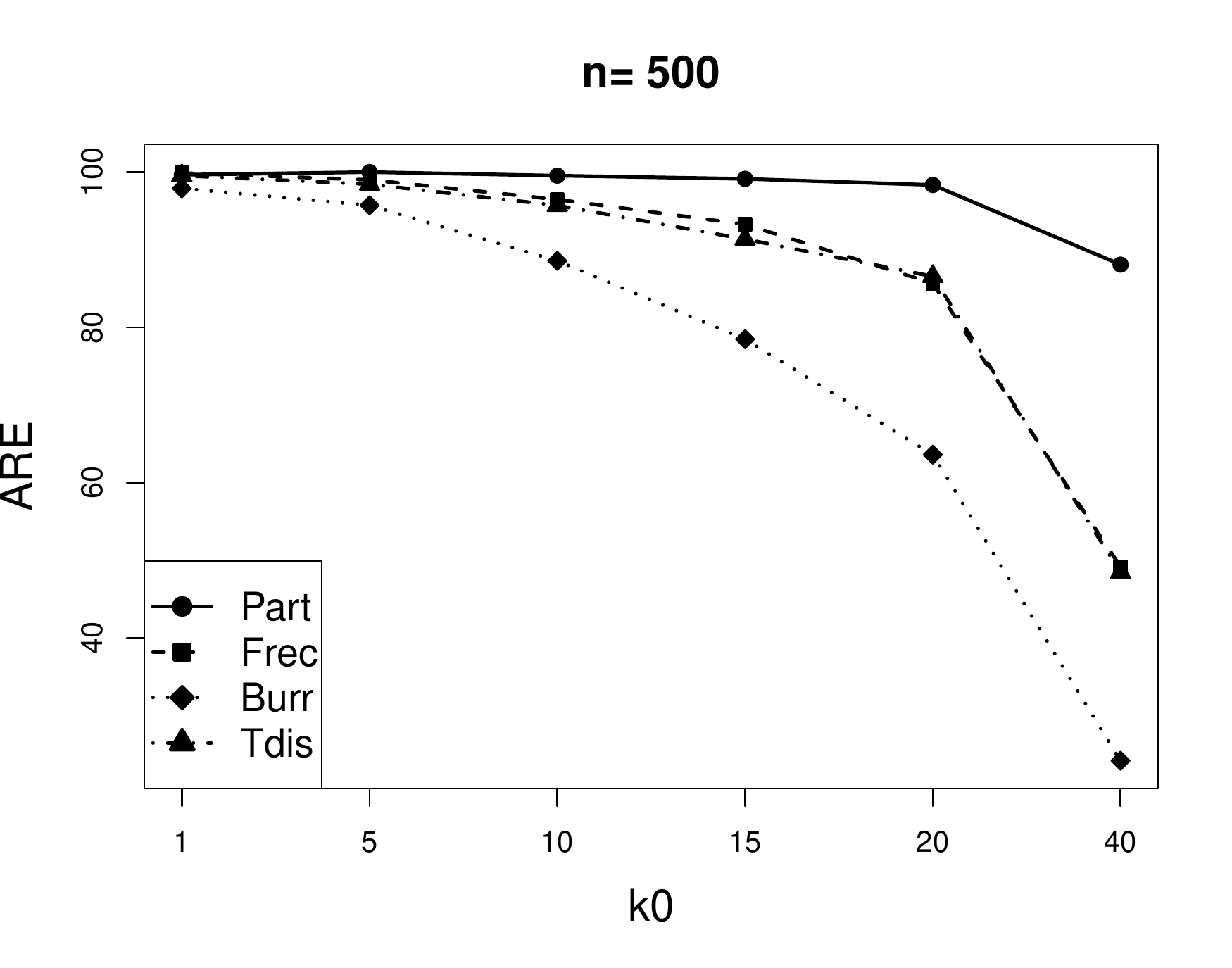}
	\includegraphics[width=0.45\textwidth]{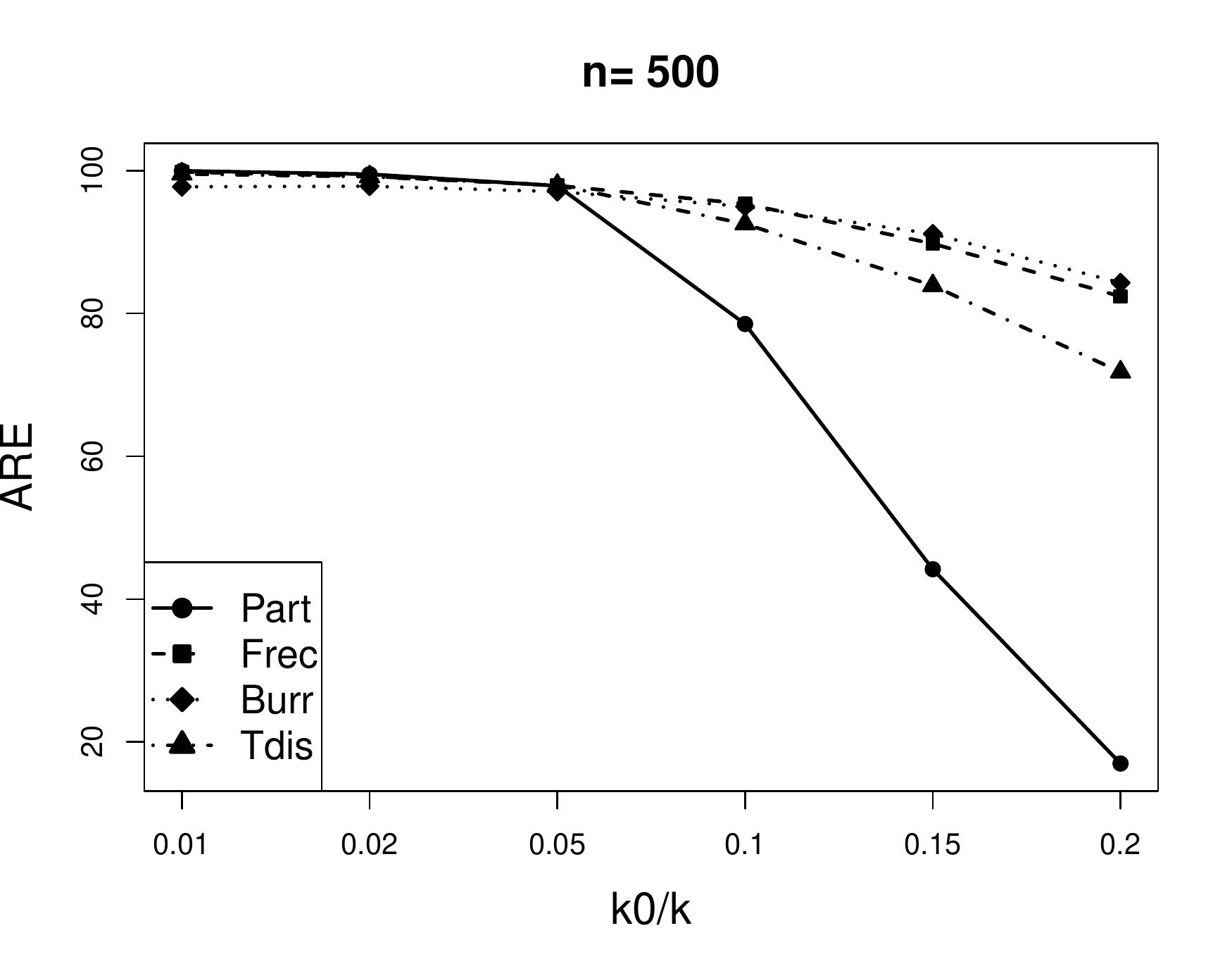}
	\caption{Performance for varying amount of outliers. Left and right panels correspond to the number $k_0$ and the proportion $k_0/k$ of outliers respectively.}
	\label{fig:k0-rol}
\end{figure}

We next inject $k_0$ top outlier statistics as in \eqref{e:exp-trans-0} with $L=5$ and distribution in \eqref{e:heavy-dist} with $\xi=1$. The underlying distributions from which the data is generated correspond to ${\rm Pareto}(1,1)$, ${\rm Frechet}(1)$, ${\rm Burr}(1,0.5,1)$ and $|T|(1)$. We consider two different scenarios, one where  $k_0 \in \{1,5,10,15,20,40\}$ and the other where $k_0/k \in \{0.01,0.02,0.05,0.1,0.2\}$. For scenario 1 (varying $k_0$), we let $k=\hat{k}=k^*_{n,k_0}$ and for scenario 2 (varying $k_0/k$), we let $k=\hat{k}=k^*_{n,0}$. We then apply the EWST algorithm for estimation of $k_0$.  Figure  \ref{fig:k0-rol} displays the ARE values of the adaptive trimmed Hill,  ${\widehat{\xi}}_{\hat{k}_0,k}(n)$ relative to the  trimmed Hill, ${\widehat{\xi}}_{\widehat{k}_0,k}$.

We observe that with increase in both number ($k_0$) and proportion of outliers ($k_0/k$), naturally the efficiency of the proposed adaptive trimmed Hill decreases. This may be attributed to the fact that the detection accuracy of  $k_0$ becomes increasingly difficult with increase in both number and proportion of outliers. From Figure \ref{fig:k0-rol}, we observe that when the number of outliers, $k_0$ is kept constant, the ARE of Pareto is the greatest while that of Burr is the least. On the other when the proportion $k_0/k$ is kept constant, the performance under Burr is the best while that under Pareto is the worst. This unusual phenomenon can be explained as follows. For same sample size, the optimal $k=k^*_{n,k_0}$ is the largest for Pareto followed by T, Frechet and Burr. Since a large effective sample size allows for better estimation of $k_0$, therefore the highest ARE values are obtained corresponding to the Pareto distribution in \ref{fig:k0-rol} left. For $k_0/k$ constant, large $k=k^*{n,k_0}$ implies large number of outliers, $k_0$. Since $k^*_{n,0}$ is smallest in case of the Burr distribution, we record largest ARE values for the Burr distribution in Figure \ref{fig:k0-rol} right.
\subsection{Deflated Outliers, $k_0>0$}
\label{subsec:ha_small}

We simulate from one of the distributions in \eqref{e:heavy-dist} with $\xi=1$. We introduce $k_0$ outliers by perturbing the top order statistics as in \eqref{e:exp-trans-0} and \eqref{e:scl-trans-0} where now $L, C<1$. For $L,C<1$, the transformations \eqref{e:exp-trans-0}  and \eqref{e:scl-trans-0} lead to the deflation of the top-$k_0$ order statistics while still preserving their order.

\begin{table}[H]
	\centering
	\begin{tabular}{|c|ccc|ccc|ccc|}
		\hline
		n     & 100 &   &        & 200 &    &   & 500 &   & \\\hline
		L &  0.005 & 0.05 & 0.5 & 0.005 & 0.05 & 0.5 & 0.005 & 0.05 & 0.5\\\hline
		Pareto(1,1)& 0.99 &0.82 &0.90 &1.00 &0.88 &0.90 &1.00 &0.96 &0.93\\
		Frechet(1) &1.35 &1.64 &2.87 &1.20 &1.54 &1.92 &1.12 &1.23 &1.68\\
		Burr(1,0.5,1)&1.59 &3.04 &5.14 &1.38 &1.97 &2.91 &1.21 &1.46 &2.06\\
		$|{\rm T}|(1)$ & 1.31 &1.59 &2.33 &1.17 &1.22 &1.59 &1.10 &1.13 &1.42\\\hline
	\end{tabular}
	\caption{{\rm ARE} of the adaptive trimmed Hill relative to  trimmed Hill for $k_0=10$, $\xi=1$ and $L<1$.}
	\label{tab:ARE_exp_1}
\end{table}
With  $k_0=10$ and $k=\hat{k}=k^*_{n,10}$, we obtain the trimming parameter estimate, $\hat{k}_0$ and the corresponding adaptive trimmed  Hill estimator, ${\widehat{\xi}}_{\hat{k}_0,\hat{k}}$ using the EWST algorithm in Section \ref{subsec:seq}. Their performance is evaluated in terms of the  ${\rm ARE}$ relative to the  trimmed Hill ${\widehat{\xi}}_{k_0,k}(n)$ in Tables  \ref{tab:ARE_exp_1} and \ref{tab:ARE_scl_1}. 

\begin{table}[H]
	\centering
	\begin{tabular}{|c|ccc|ccc|ccc|}
		\hline
		n     & 100 &   &        & 200 &    &   & 500 &   & \\\hline
		L &  0.005 & 0.05 & 0.5 & 0.005 & 0.05 & 0.5 & 0.005 & 0.05 & 0.5\\\hline
		Pareto(1,1)& 0.97 & 0.75 & 1.05 & 0.98 & 0.87 & 1.02 & 1.00 & 0.94 & 1.00 \\
		Frechet(1) &1.09 & 1.68 & 2.22 & 1.06 & 1.97 & 1.71 & 1.07 & 1.75 & 1.46\\
		Burr(1,0.5,1)&1.11 & 3.46 & 3.51 & 1.05 & 2.95 & 2.48 & 1.08 & 1.99 & 1.78 \\
		$|{\rm T}|(1)$ & 1.06 & 1.50 & 1.94 & 1.06 & 1.40 & 1.52 & 1.03 & 1.32 & 1.29\\\hline
	\end{tabular}
	\caption{{\rm ARE} of the adaptive trimmed Hill relative to trimmed Hill for $k_0=10$, $\xi=1$ and $C<1$.}
	\label{tab:ARE_scl_1}
\end{table}

For the Pareto distribution, the ARE is higher in for more severe outliers $L=0.005$, $C=0.001$ than the case of moderate outliers $L=0.05$, $C=0.1$. This is because more extreme outliers facilitate easier estimation of $k_0$ and hence the large ARE. However observe that for $L=0.5, C=0.5$, we have greater than the case of $L=0.05$, $C=0.1$. This is because values of $L$ and $C$  close to 1 under estimation of $k_0$ does not have a huge impact on the MSE of the adaptive trimmed Hill. For the distributions apart from Pareto, we obtain ARE values which are greater than 100\%. The detection accuracy of EWST in determining $k_0$ has the exact same trend as that for the Pareto case. However, for other heavy tailed distributions, a few downscaled outliers sometimes helps in improving the MSE value of the adaptive trimmed Hill. As a result, the adaptive trimmed Hill outperforms the oracle trimmed Hill benchmark based on the true value of $k_0$.

\subsection{Joint estimation of $k$ and $k_0$}
\label{subsec:k0-k-jt}
From Relation  \eqref{e:t:uniform-consistency} in Theorem \ref{t:uniform-consistency} and Corollary \ref{cor:trim-rate} , we observe that if $k_0=o(n^{2\rho/(2\rho+1)})$,  the asymptotic mean squared error (AMSE) of the trimmed Hill estimator is same as that of the classic Hill. Therefore following \cite{HallWelsh-1}, the value of $k$ which minimizes the AMSE of the trimmed Hill is
$$k_{n}^{\rm opt}\sim \left(\frac{C^2\rho (\rho+1) ^2}{2D^2\rho^3}\right)^{1/(2\rho+1)}n^{2\rho/(2\rho+1)}.$$

The finite sample equivalent of $k_n^{\rm opt}$ is given by $k^*_{n,k_0}$ as in \eqref{e:k-sim}. Drees and Kaufmann in \cite{drees} provide a methodology for the estimation of $k_n^{\rm opt}$ for the classic Hill. Motivated by their approach, we propose a method for the joint estimation of $k_0$ and $k$ under the following assumptions
\begin{eqnarray}
\label{e:cond-k}
k_0&=&o(n^{2\rho/(2\rho+1)})\nonumber\\
1-F(x)&=&Cx^{-1/\xi}(1+Dx^{-\rho/\xi}+o(x^{-\rho/\xi}))\\
F^{-1}(1-t)&=&ct^{-\xi}\exp\left(\int_{t}^1 \frac{\varepsilon(s)}{s}ds\right)\nonumber
\end{eqnarray}
The last two assumptions correspond to Eqs (2) and (5) in \cite{drees} respectively.

Suppose the trimming parameter, $k_0$ is known. We define the modified version of Eq (4) in \cite{drees} as
\begin{equation}
\label{e:kn-def}
\bar{k}_{n,k_0}(r_n)=\min \left\{k \in \{k_0+1,\cdots,n-1\}\Big |\max_{k_0+1\leq i \leq k}(i-k_0+1)^{1/2}|\widehat{\xi}_{k_0,i}(n)-\widehat{\xi}_{k_0,k}(n)|>r_n\right\}
\end{equation}
where $\hat{\xi}_{k_0,i}$ is the trimmed Hill based on $i-k_0$ observations. We conjecture a modified version of Theorem 1 in \cite{drees}, where the classic Hill estimator gets replaced by its corresponding trimmed version as follows:
\vspace{2mm}

\begin{proposition}
	\label{thm:k-est}
	Suppose $r_n=o((n-k_0)^{1/2})$, $\log \log (n-k_0)=o(r_n)$ and \eqref{e:cond-k} holds. Then
	if $\hat{\rho}_n$ is any consistent estimator of $\rho$ and $\tilde{\xi}_{n}$ is a consistent initial estimator of $\xi$, then for $\epsilon \in (0,1)$ and $(\log \log (n-k_0))^{1/2\epsilon}=o(r_n)$, we have
	\begin{equation}
	\label{e:k-est}
	\hat{k}_{n,k_0}^{\rm opt}=(2\hat{\rho}_n+1)^{-1/\hat{\rho}_n}(2\tilde{\xi}_{n}\hat{\rho}_n)^{1/(2\hat{\rho}_n+1)}\Bigg(\frac{\bar{k}_{n,k_0}(r_n^\epsilon)}{(\bar{k}_{n,k_0}(r_n))^\epsilon}\Bigg)^{1/(1-\epsilon)}
	\end{equation}
	is a consistent estimator of $k_{n,k_0}^{*}$ in the sense that $\hat{k}_{n,k_0}/k_{n,k_0}^*$ converges in probability to 1. In particular, $\widehat{\xi}_{k_0,\hat{k}_{n,k_0}^{\rm opt}}(n)$ has the same asymptotic efficiency as  $\widehat{\xi}_{k_0,k_{n,k_0}^{*}}(n)$.
\end{proposition}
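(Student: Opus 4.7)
The plan is to adapt the proof strategy of Theorem~1 in Drees and Kaufmann~\cite{drees} by substituting the classic Hill estimator with its trimmed counterpart $\widehat{\xi}_{k_0,k}(n)$. The starting point is the decomposition \eqref{e:tail-3}, $\widehat{\xi}_{k_0,k}(n) = \widehat{\xi}^{*}_{k_0,k}(n) + R_{k_0,k}(n)$, which splits the estimator into a purely stochastic Pareto part with exact joint gamma representation (Proposition~\ref{prop:xi-exp}) and a bias term controlled by the second-order condition~\eqref{e:SR2}.

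First I would establish a uniform stochastic expansion: using \eqref{e:xi-jt-big}, express $(i-k_0+1)^{1/2}(\widehat{\xi}^{*}_{k_0,i}(n)-\widehat{\xi}^{*}_{k_0,k}(n))$ as a normalized difference of partial sums of centered unit exponentials and invoke a law-of-the-iterated-logarithm bound to obtain
\begin{equation*}
\max_{k_0+1 \le i \le k}(i-k_0+1)^{1/2}\bigl|\widehat{\xi}^{*}_{k_0,i}(n) - \widehat{\xi}^{*}_{k_0,k}(n)\bigr| = O\bigl((\log\log(n-k_0))^{1/2}\bigr)
\end{equation*}
almost surely. Since $(\log\log(n-k_0))^{1/2} = o(r_n)$ by assumption, the stochastic fluctuation is below the crossing level $r_n$ in \eqref{e:kn-def}, so $\bar k_{n,k_0}(r_n)$ is governed to leading order by the bias part.

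Next I would expand the bias via the Potter-type bound \eqref{e:L-behav} to obtain $R_{k_0,i}(n) = cg(n/i)/(1+\rho) + o_P(g(n/i))$ uniformly in $i$, so that $(i-k_0+1)^{1/2}R_{k_0,i}(n)$ first exceeds $r_n$ exactly when $i^{1/2}g(n/i) \asymp r_n$. Under the Hall form $g(t)\asymp t^{-\rho}$ implied by \eqref{e:cond-k}, this yields
\begin{equation*}
\bar k_{n,k_0}(r_n) \sim K_{\rho}\,\bigl(r_n^{2}\,n^{2\rho}\bigr)^{1/(2\rho+1)},
\end{equation*}
for an explicit constant $K_\rho$ depending on $\xi,C,D,\rho$, with an identical scaling for $r_n^\epsilon$. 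Forming the ratio $\bar k_{n,k_0}(r_n^\epsilon)/(\bar k_{n,k_0}(r_n))^\epsilon$ raised to $1/(1-\epsilon)$ cancels the unknown second-order constant, and substituting the consistent $\hat\rho_n,\tilde\xi_n$ yields formula \eqref{e:k-est}. The consistency $\hat k_{n,k_0}^{\rm opt}/k_{n,k_0}^* \stackrel{P}{\longrightarrow} 1$ then follows by a delta-method argument on the slowly-varying plug-in, exactly as in Drees--Kaufmann.

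The main obstacle is the uniform control of the bias $R_{k_0,i}(n)$ over the moving window $k_0+1 \le i \le k$: Theorem~\ref{prop:E-conv} provides this only along a fixed sequence with $k^\delta g(n/k) \to A$, whereas here $i$ itself varies and the effective scaling changes with it. Upgrading \eqref{e:E-conv} to a version uniform in the index $i$ requires a sharpened Potter inequality on $L$ combined with the monotonicity of the order statistics $Y_{(n-i+1,n)}$ in $i$. Once this uniform approximation is in place, the remaining calculation is a routine translation of the Drees--Kaufmann argument, since the assumption $k_0 = o(n^{2\rho/(2\rho+1)})$ ensures that trimming is negligible at the optimal scale $k_{n,k_0}^{*} \asymp n^{2\rho/(2\rho+1)}$.
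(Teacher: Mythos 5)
The paper contains no proof of this proposition to compare against. The surrounding text explicitly labels the result a conjecture (``We conjecture a modified version of Theorem 1 in \cite{drees}\dots'') and then states that ``the detailed presentation of the proof of Proposition \ref{thm:k-est} shall be the subject of another work.'' So you are writing a proof of a statement that the authors themselves did not establish here.

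As a plan, your sketch is a faithful transposition of the Drees--Kaufmann strategy and is consistent with what the paper hints the eventual argument should look like: decompose $\widehat{\xi}_{k_0,i}=\widehat{\xi}^*_{k_0,i}+R_{k_0,i}$, bound the stochastic piece by a law-of-the-iterated-logarithm estimate of order $(\log\log(n-k_0))^{1/2}$ (which is indeed $o(r_n)$ under the stated hypothesis $\log\log(n-k_0)=o(r_n)$), expand the bias via Potter bounds to locate the crossing level $i^{1/2}g(n/i)\asymp r_n$, and exploit the exponent $\epsilon$ to cancel the unknown second-order constant. You also correctly flag the genuine technical obstruction: Theorem \ref{prop:E-conv} controls the remainder only for a fixed upper index $k$ with $k^\delta g(n/k)\to A$, uniformly in $k_0$, whereas the Drees--Kaufmann threshold statistic $\bar k_{n,k_0}(r_n)$ requires a bound on $R_{k_0,i}$ that is uniform over the \emph{moving} upper index $i$ in $\{k_0+1,\dots,k\}$, a two-parameter strengthening that this paper does not supply. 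Your sketch stops exactly at that point without closing the gap; since there is no paper proof to measure it against, the most one can say is that the route is sound but the hard uniformity step --- the one that would actually turn the conjecture into a theorem --- is named rather than executed.
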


The trimmed estimator, $\hat{\xi}_{k_0,2\sqrt{n}}$ can be used as an initial consistent estimator of $\xi$ for a wide range distributions from \eqref{e:heavy-tail}. As in \cite{drees}, it can be shown that for  $\lambda \in (0,1)$, a consistent estimator of $\rho$ is given by
\begin{equation}
\label{e:rho}
\hat{\rho}^{(1)}_{n,k_0,\lambda}(r_n)=\log_{\lambda} \frac{\max_{k_0+1 \leq i \leq [\lambda\bar{k}_{n,k_0}(r_n)]} (i-k_0+1)^{1/2}|\widehat{\xi}_{k_0,i}(n)-\widehat{\xi}_{k_0,[\lambda\bar{k}_{n,k_0}(r_n)]}(n)|}{\max_{k_0+1 \leq i \leq [\bar{k}_{n,k_0}(r_n)]} (i-k_0+1)^{1/2}|\widehat{\xi}_{k_0,i}(n)-\widehat{\xi}_{k_0,[\bar{k}_{n,k_0}(r_n)]}(n)|}-\frac{1}{2}
\end{equation}
The detailed presentation of the proof of Proposition \ref{thm:k-est} shall be the subject of another work. Here, we shall  only demonstrate its application in practice (see Tables \ref{tab:ARE_k0k_1} and \ref{tab:ARE_k0k_2}). 

We next describe a methodology which allows for the estimation of $k$ when the trimming parameter $k_0$ is unknown. In this direction, we start with an initial choice of the parameter $k$. From this initial choice of $k$, we estimate the trimming parameter, $k_0$ using EWST Algorithm \ref{algo:seq}. With this choice of $\hat{k}_0$, we obtain an estimate for $k$ by using Proposition \ref{thm:k-est}. We iterate between the values of $k$ and $k_0$, unless convergence is obtained. Next, we describe the methodology more formally:

\begin{algorithm}[H]
	\caption{ \label{algo:seq-k}}
	\small
	\begin{algorithmic}[1]
		\STATE Set a threshold $\tau$ and $i=1$.
		\STATE Choose $k$ as a function of $n$. Let $\hat{k}^{(0)}$ be this initial choice.
		\STATE Let $i=i+1$.
		\STATE With $k=\hat{k}^{(i)}$, obtain $\hat{k}_0^{(i)}$ using Algorithm \ref{algo:seq}
		\STATE With $k_0=\hat{k}_0^{(i)}$, obtain $\hat{k}_{(i+1)}$ using \eqref{e:k-est} in Proposition \ref{thm:k-est}.
		\STATE If $|\hat{k}^{(i+1)}-\hat{k}^{(i)}|>\tau$, goto step 4 else goto step 7
		\STATE Return $\hat{k}= \hat{k}^{(i)}$ and $\hat{k}_0=\hat{k}_0^{(i)}$.
	\end{algorithmic}
\end{algorithm}

In order to evaluate the performance of Algorithm \ref{algo:seq-k}, we first consider the ARE of the adaptive trimmed Hill, $\widehat{\xi}_{\hat{k}_0,\hat{k}}$ relative to the trimmed Hill, $\widehat{\xi}_{k_0,k^*_{n,k_0}}$ where $k^*_{n,k_0}$ is obtained as in \eqref{e:k-sim}. Table \ref{tab:ARE_k0k_1} shows the ARE values of for Frechet and T distributions with varying tail indices (see \eqref{e:heavy-dist}). The number of outliers $k_0$ is fixed at 10 and two values of $L=5, 20$ are chosen. The two columns correspond to the case where $\rho$ is either fixed at constant 1 or estimated using \eqref{e:rho} for $\lambda=0.6$.
\begin{table}[H]
	\centering
	\begin{tabular}{|c|c|cc|cc|cc|cc|cc|}\hline
		L  &n    & Frechet(5) & & Frechet(2)& &Frechet(1)& &$|{\rm T}|(4)$ & &	$|{\rm T}|(10)$ &\\\hline
		&    &   1 & $\hat{\rho}^{(1)}$ & 1 & $\hat{\rho}^{(1)}$ &  1 & $\hat{\rho}^{(1)}$ & 1 & $\hat{\rho}^{(1)}$ & 1 & $\hat{\rho}^{(1)}$ \\\hline
		&100  &  0.74 &  0.63 &  0.31 &  0.18 &  0.27 &  0.24 &  0.86 &  0.63 &  0.88 &  0.64\\
		5  &200  &  0.71 &  0.64 &  0.37 &  0.51 &  0.72 &  0.49 &  0.87 &  0.66 &  0.86 &  0.64
		\\
		&500  &  0.78 &  0.60 &  0.66 &  0.53 &  0.71 &  0.46 &  0.83 &  0.62 &  0.80 &  0.69
		\\\hline
		&100  & 0.76 & 0.61 &  0.87 &  0.68 &  0.50 &  0.55 &  0.94 &  0.74 &  0.90 &  0.77
		\\
		20 &200  &  0.74 & 0.71 & 0.86 & 0.60 & 0.74 & 0.47 & 0.89 & 0.66 & 0.85 & 0.69
		\\
		&500  &  0.79 &  0.63 &  0.88 &  0.51 &  0.75 &  0.44 &  0.82 &  0.80 & 0.63 &	0.68
		\\\hline
	\end{tabular}
	\caption{{\rm ARE} of the adaptive trimmed Hill relative to trimmed Hill for $k_0=10$.}
	\label{tab:ARE_k0k_1}
\end{table}

We observe that the ARE values are nearly 75\% for the Frechet and become as large as 90\% for the T distribution. The performance is better when $\rho=1$ rather than estimated from the data using \eqref{e:rho}. Large values of $\alpha=1/\xi$ lead to greater ARE values for both T and Frechet. This behavior is similar to that observed in the case of known $k$ (see Figure \ref{fig:alpha-rol}). Increase in the severity of outliers, $L$ leads to overall improvement in the efficiency, a phenomenon also seen previously in Section \ref{subsec:ha_large}.

In order to allow for a comparative baseline to our results in Table \ref{tab:ARE_k0k_1}, we replicate the settings of Tables 3 and 6 in \cite{drees}. We  consider the ratio of root mean squared error of the adaptive trimmed to that of the trimmed as:
\begin{equation*}
\label{e:R-def}
R=\sqrt{{\rm MSE}(\widehat{\xi}_{\hat{k}_0,\hat{k}})}\Big/\sqrt{{\rm MSE}(\widehat{\xi}_{k_0,k^*_{n,k_0}})}
\end{equation*} 
The results in \cite{drees} correpond to $k_0=0$ and $k^*_{n,k_0}=k_n^{\rm opt, sim}$. As can been from the Table, our results  nearly match the ones obtained from \cite{drees}. This further indicates the efficiency of the proposed Algorithm \ref{algo:seq-k} in the joint estimation of $k_0$ and $k$.

\begin{table}[H]
	\centering
	\begin{tabular}{|c|c|cc|cc|cc|cc|cc|}\hline
		n  &L    & & Frechet(5) & & Frechet(2)& & Frechet(1)& & $|{\rm T}|(4)$&	 &$|{\rm T}|(10)$\\\hline
		&    &   1 & $\hat{\rho}^{(1)}$ & 1 & $\hat{\rho}^{(1)}$ &  1 & $\hat{\rho}^{(1)}$ & 1 & $\hat{\rho}^{(1)}$ & 1 & $\hat{\rho}^{(1)}$ \\\hline
		& 5     &  1.16 &  1.26 &  1.80 &  2.34 &  1.93 &  2.04 &  1.08 &  1.26 &  1.06 &  1.25\\
		100  & 20    &  1.15 &  1.28 &  1.07 &  1.22 &  1.41 &  1.35 &  1.03 &  1.16 &  1.05 &  1.14\\
		& drees & 1.29& 	1.22 & 1.08 & 1.24 & 1.28 & 1.12 & 1.36 & 1.15 & 1.24 & 1.48\\\hline
		& 5     &  1.18 &  1.25 &  1.65 &  1.40 &  1.18 &  1.44 &  1.07 &  1.23 &  1.08 &  1.25\\
		200  & 20    &  1.16 &  1.19 &  1.08 &  1.30 &  1.16 &  1.46 &  1.06 &  1.23 &  1.09 &  1.20\\
		& drees &  1.19 & 1.21 & 1.08 & 1.23 & 1.34 & 1.14 & 1.28 & 1.14 & 1.28 & 1.46\\\hline
		& 5     &  1.14 &  1.29 &  1.23 &  1.38 &  1.18 &  1.48 &  1.10 &  1.27 &  1.12 &  1.20\\
		500  & 20    &  1.12 &  1.26 &  1.07 &  1.40 &  1.16 &  1.51 &  1.11 &  1.12 &  1.26 & 1.21\\
		& drees & 1.12 & 1.18 & 1.05 & 1.26 & 1.30 & 1.12 & 1.27 & 1.14 & 1.3 & 1.41
		\\\hline
	\end{tabular}
	\caption{Ratio of mean squared errors: adaptive trimmed Hill  to  trimmed Hill for $k_0=10$.}
	\label{tab:ARE_k0k_2}
\end{table}

\section{Comparisons with existing estimators and adaptivity}
\label{sec:comp}

\subsection{Comparison with other robust estimators}
\label{subsec:comp-prev}

In this section, we present a comparative analysis of the performance of our proposed trimmed Hill estimator, $\hat{\xi}_{k_0,k}$ with respect to the already existing robust tail estimation procedures in the literature. For observations from the Pareto distribution, a robust estimator of $\alpha$ based on the trimmed Hill estimator, $\hat{\xi}_{\rm k_0,n-1}$ is given by
\begin{equation}
\label{e:alpha-trim}
\hat{\alpha}_{\rm TRIM}=\left(1-\frac{2}{n}\right)\frac{1}{\hat{\xi}_{k_0,n-1}}
\end{equation}
where $(1-2/n)$ is the correction factor for $\hat{\alpha}_{\rm MLE}$ as in \cite{Brzezinski2016}. 
\vspace{-5mm}
\begin{figure}[H]
	\centering
	\includegraphics[width=0.45\textwidth]{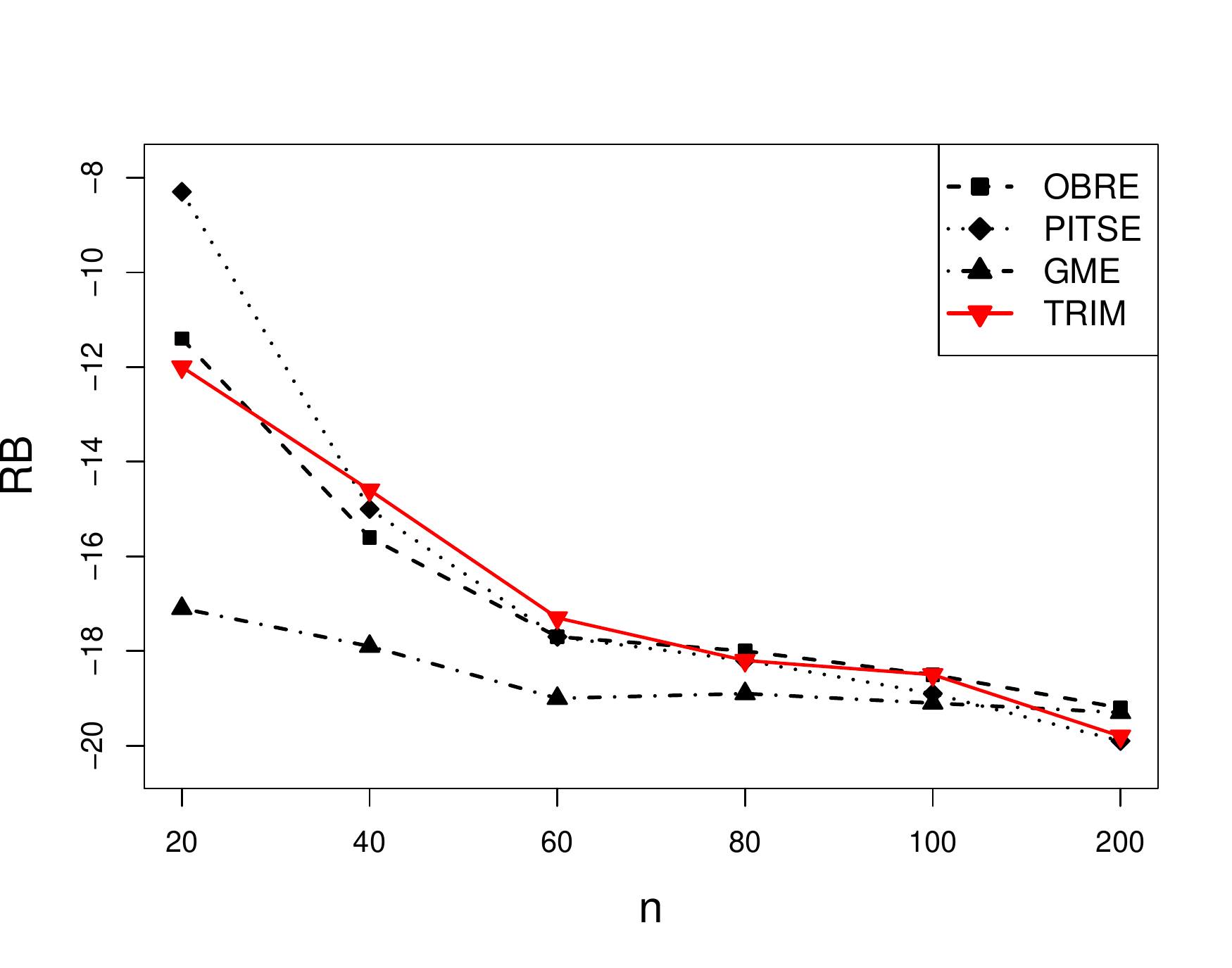}
	\includegraphics[width=0.45\textwidth]{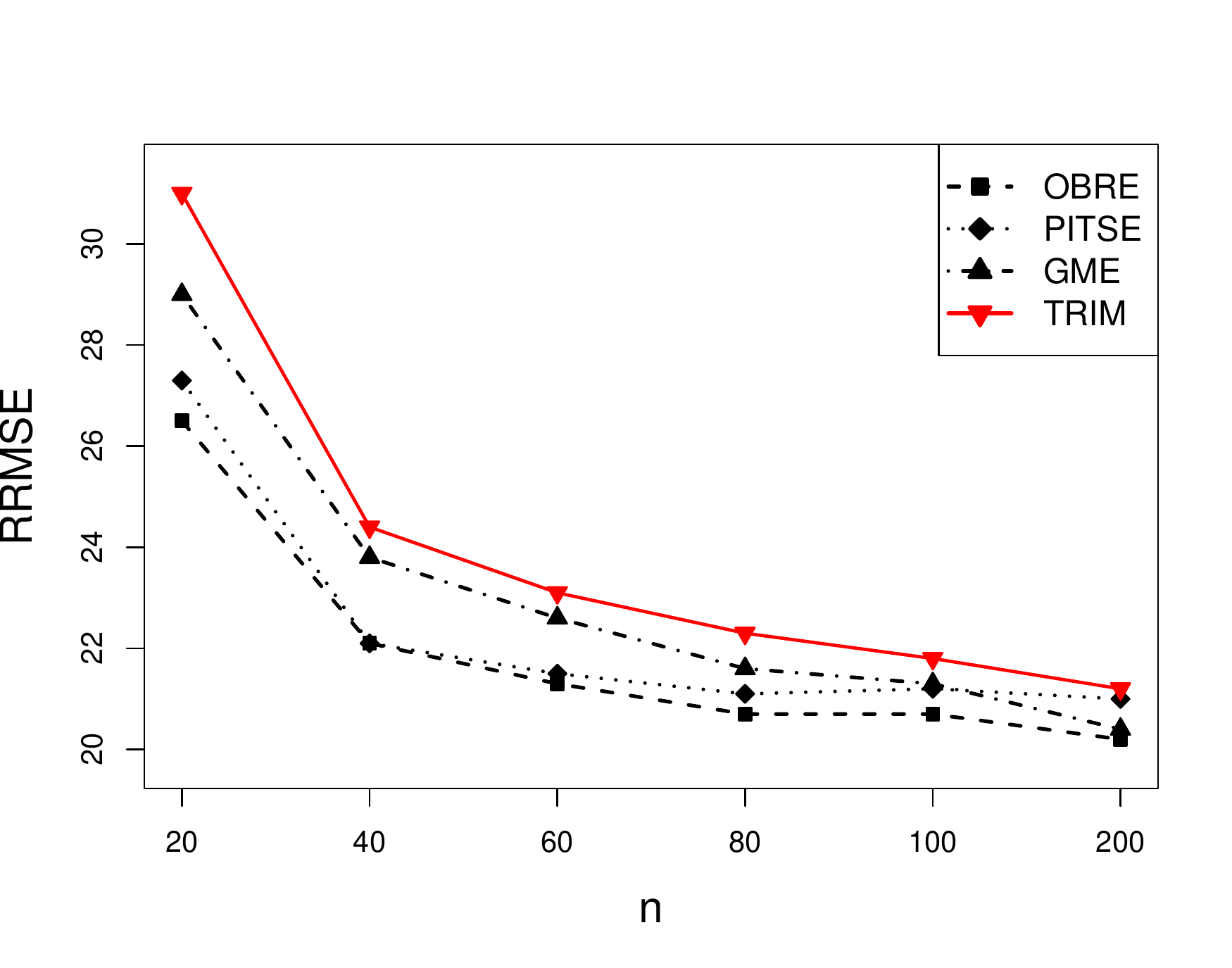}
\end{figure}
\vspace{-15mm}
\begin{figure}[H]
	\centering
	\includegraphics[width=0.45\textwidth]{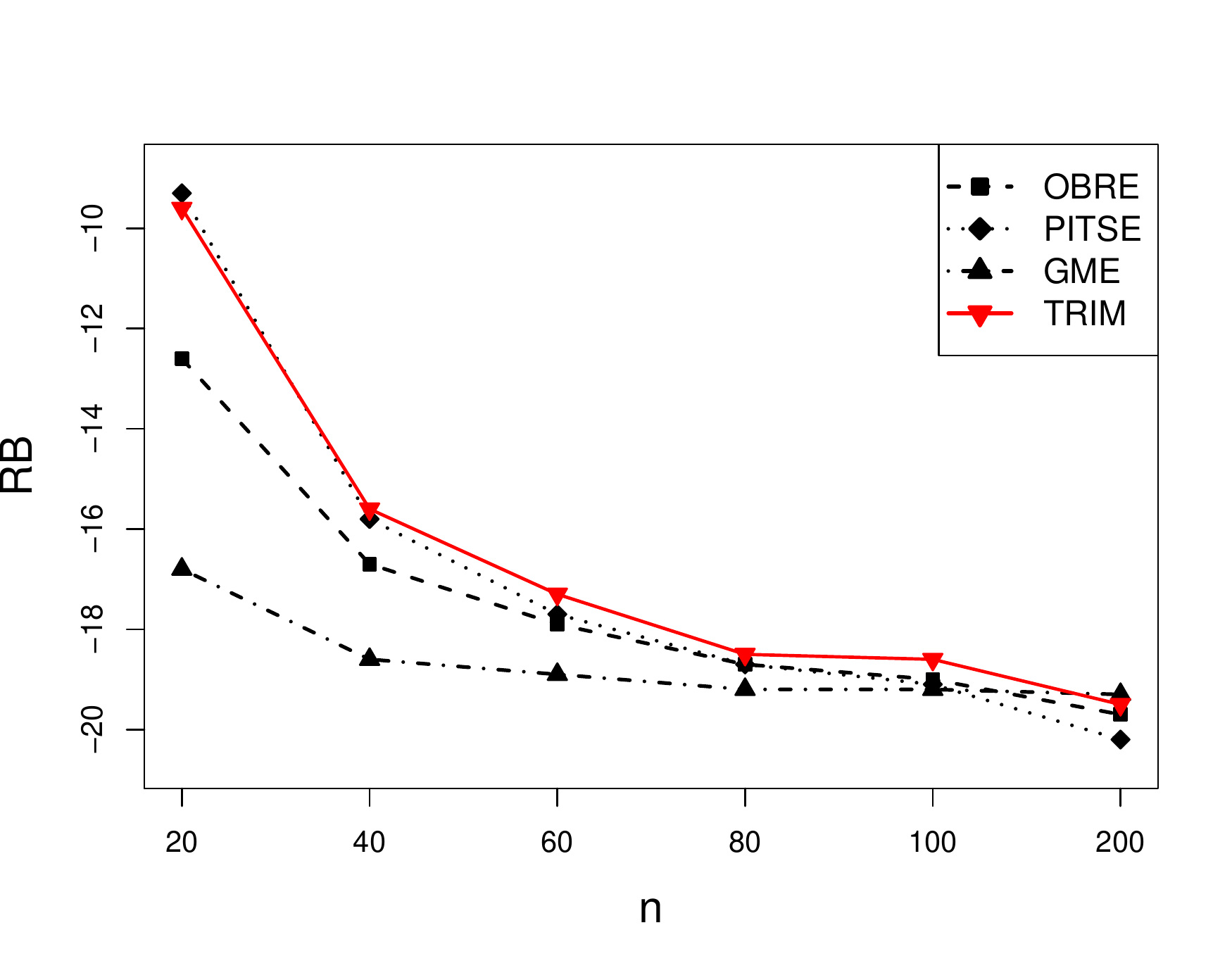}
	\includegraphics[width=0.45\textwidth]{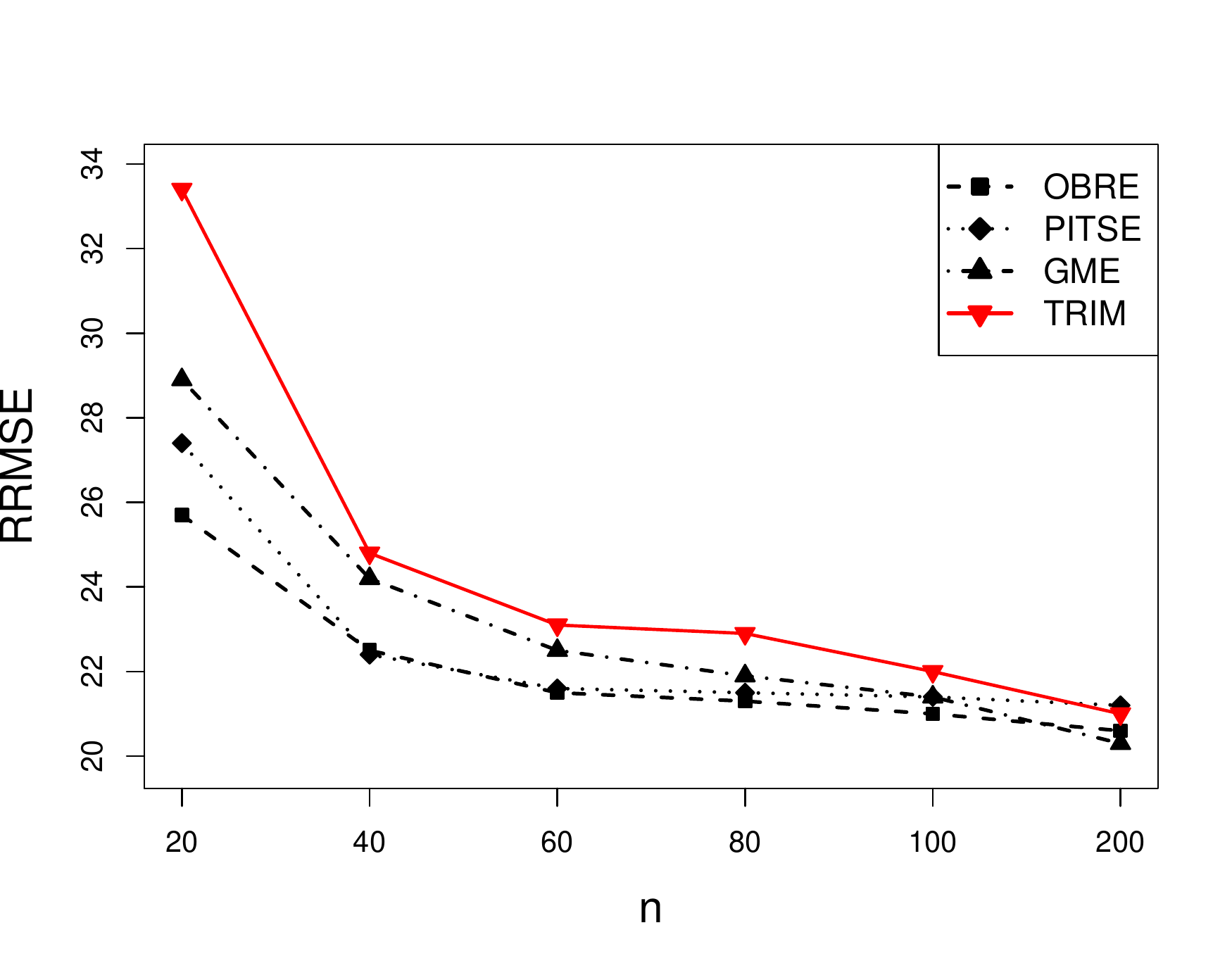}
	\caption{Performance of robust estimators for  $0.9 P(\alpha,1)+0.1 P(\alpha,1000)$ at ARE=78\%. Top left and right correspond to RB and RRMSE values  for $\alpha=1$. Bottom left and right correspond to RB and RRMSE values  for $\alpha=3$.}
	\label{fig:mix-78}
\end{figure}

The comprehensive comparative analysis in \cite{Brzezinski2016} evaluates many robust estimators of the exponent $\alpha=1/\xi$ with respect to the maximum likelihood estimator $\hat{\alpha}_{\rm MLE}$ for i.i.d. Pareto observations.  
The  class of estimators used in \cite{Brzezinski2016} include the optimal B-robust estimator, (OBRE) proposed in \cite{CJS:CJS247}, the weighted maximum likelihood estimator (WMLE)  introduced in \cite{wmle}, the generalized median estimator (GME) of \cite{MR1856199}, the partial density component estimator (PDCE) proposed in \cite{Vandewalle:2007:RET:1280299.1280640} and the probability integral transform statistic estimator (PITSE) of \cite{pitse}. Among these estimators of $\alpha$, the OBRE, PITSE and GME exhibit a superior performance in comparison to the rest and shall be used as the comparative baseline.

\vspace{-5mm}
\begin{figure}[H]
	\centering
	\includegraphics[width=0.45\textwidth]{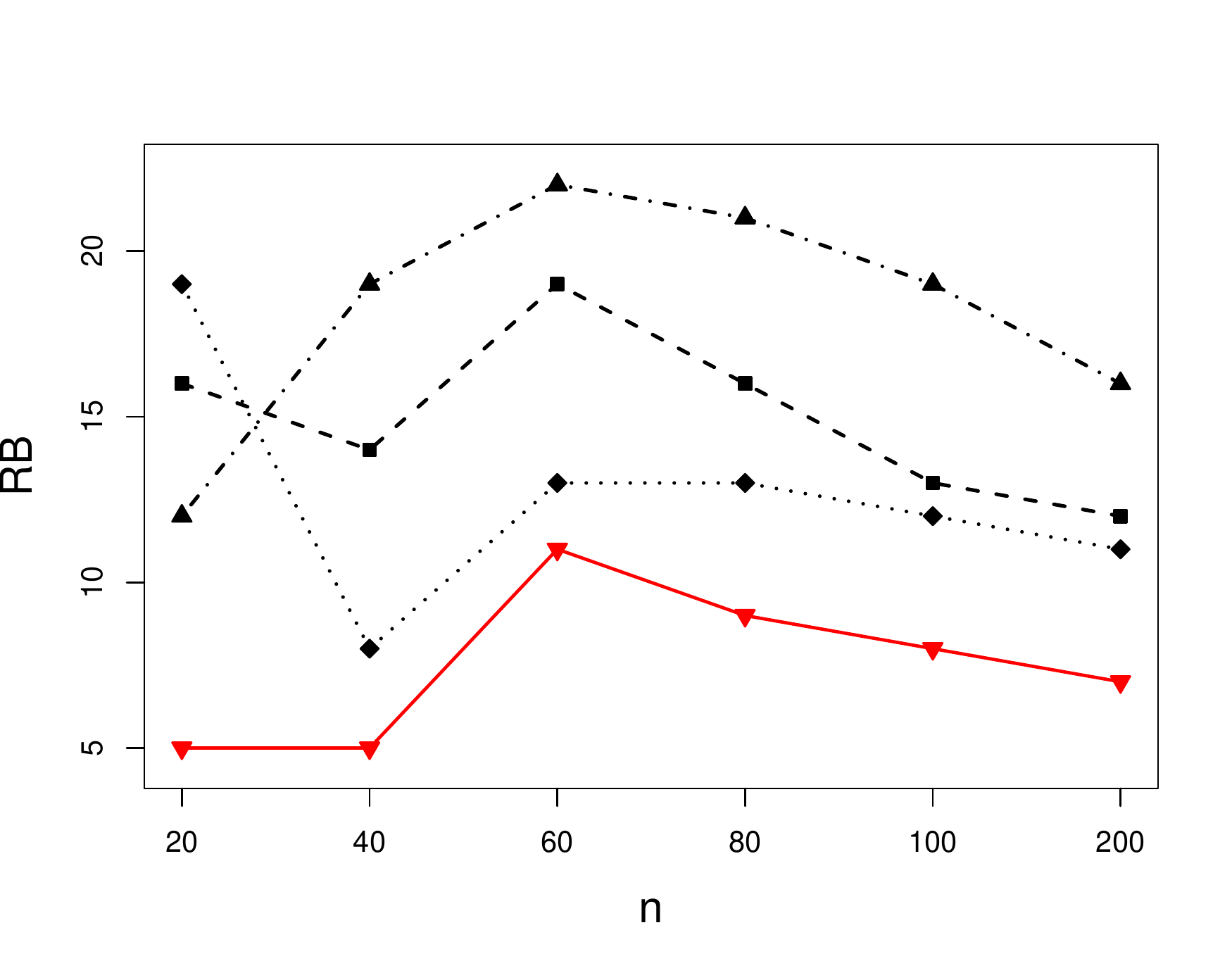}
	\includegraphics[width=0.45\textwidth]{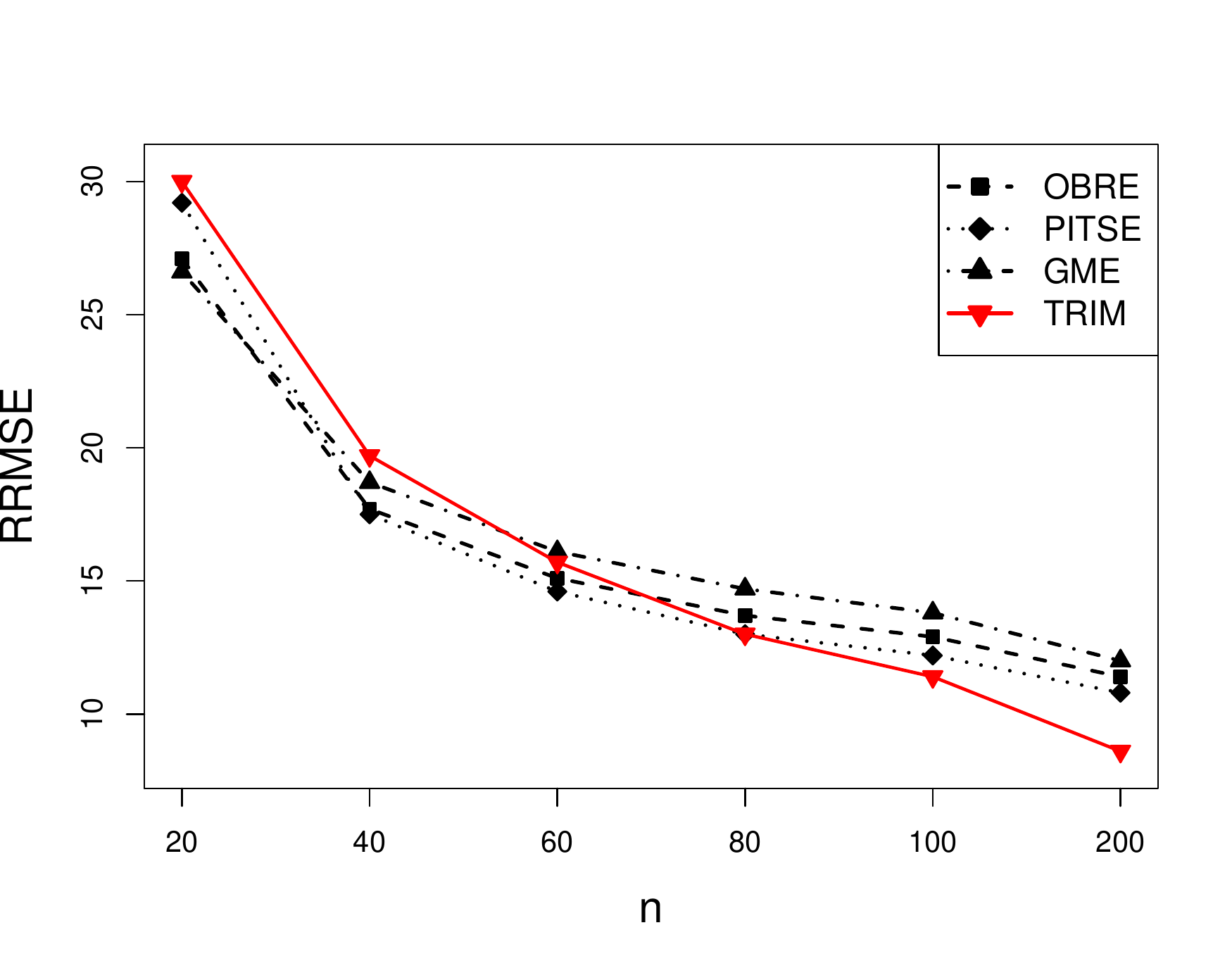}
\end{figure}
\vspace{-15mm}
\begin{figure}[H]
	\centering
	\includegraphics[width=0.45\textwidth]{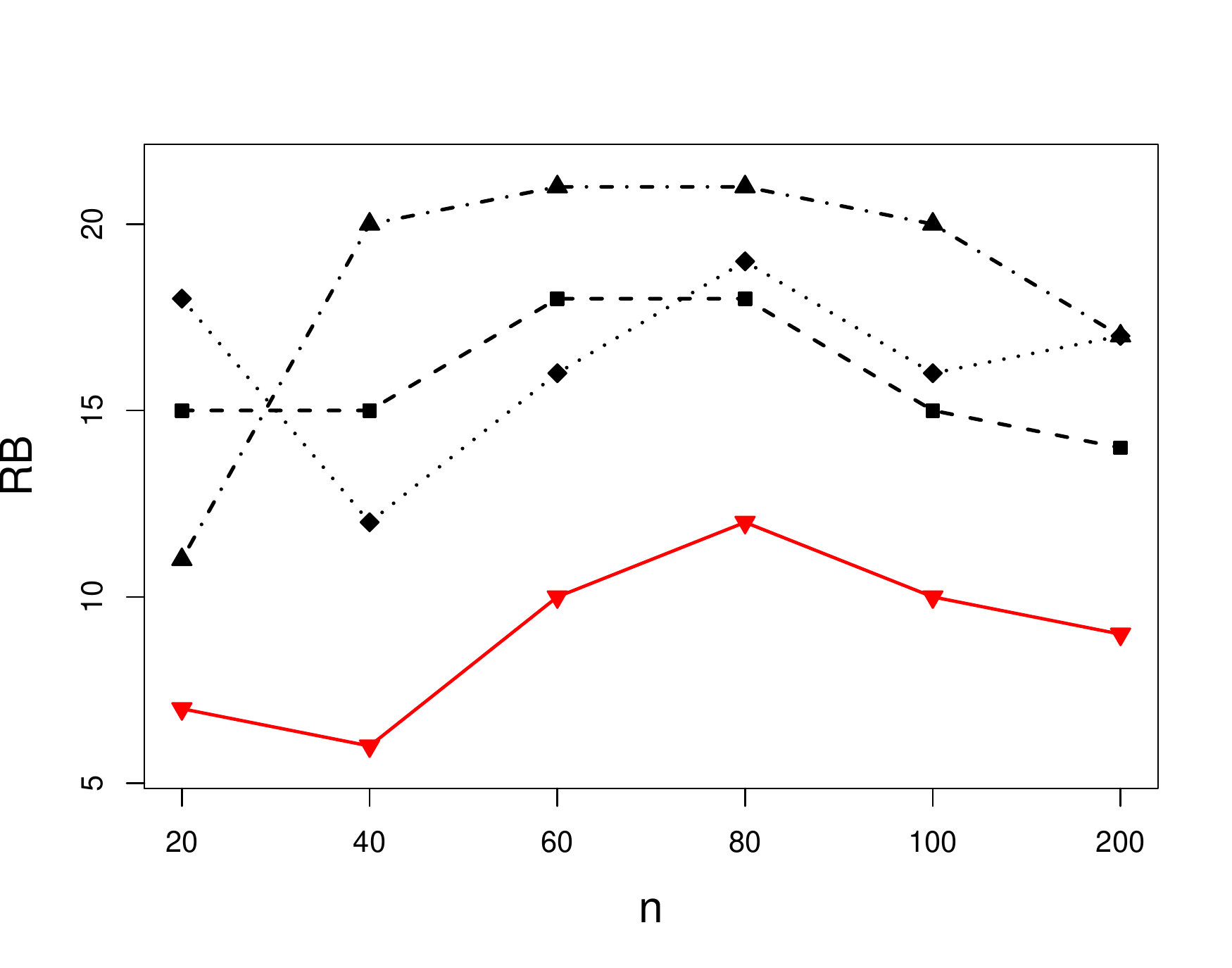}
	\includegraphics[width=0.45\textwidth]{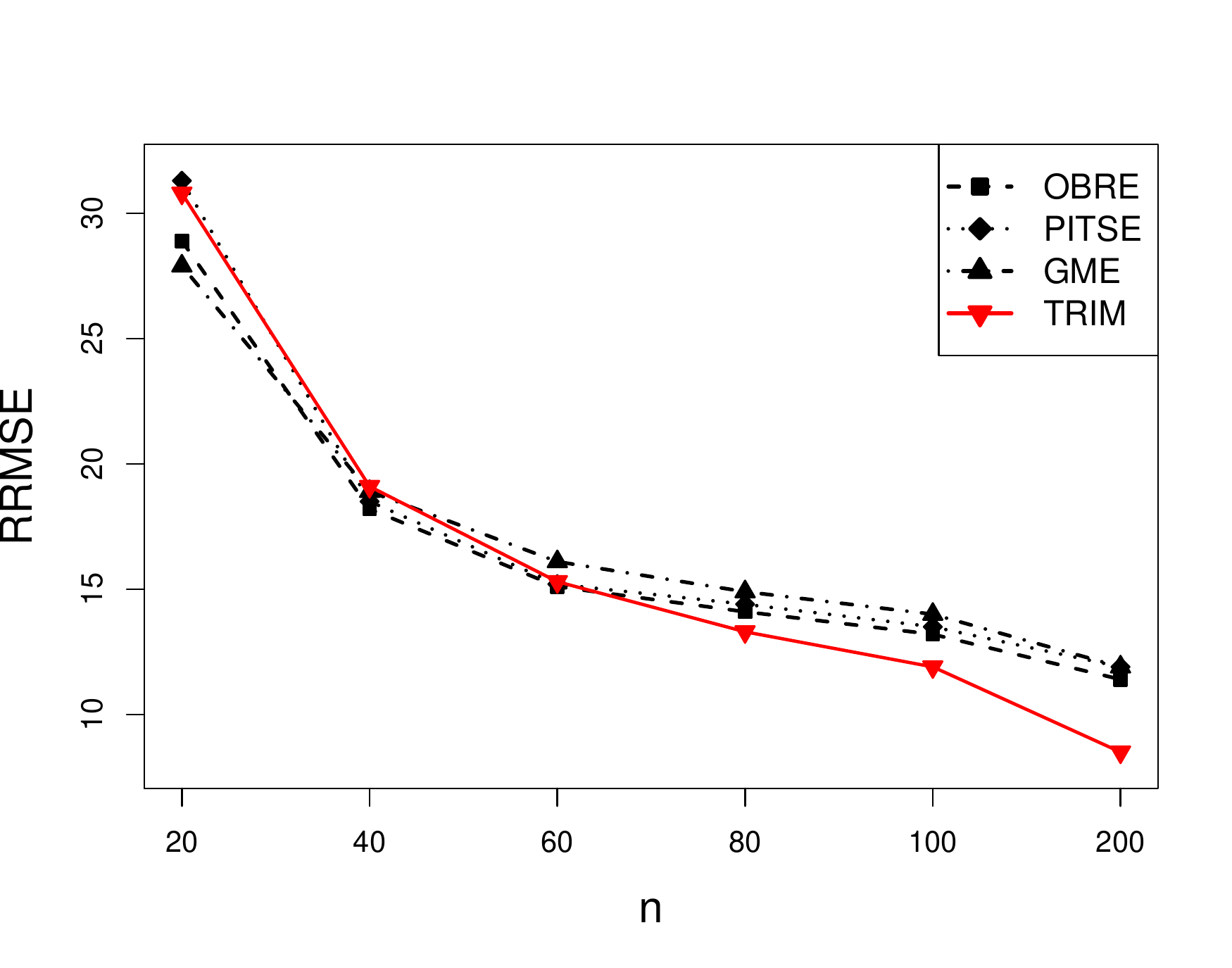}
	\caption{Performance of robust estimators where 5\% observations of $P(\alpha,1)$ are inflated by 10 at ARE=78\%. Top left and right correspond to RB and RRMSE values for $\alpha=1$. Bottom left and right correspond to RB and RRMSE values for $\alpha=3$.}
	\label{fig:scl-78}
\end{figure}
The comparison criterion chosen is the relative bias, RB and relative mean squared error, RRMSE as in \cite{Brzezinski2016}. The explicit formulas for RB and RRMSE are given by
\begin{eqnarray}
\label{e:b-mse}
{\rm RB}(\alpha)&=&\frac{1}{\alpha}\Big(\frac{1}{m}\sum_{i=1}^m (\widehat{\alpha}_i-\alpha)\Big)\times 100\%\\\nonumber
{\rm RRMSE}(\alpha)&=&\frac{1}{\alpha}\Big(\frac{1}{m}\sum_{i=1}^m (\widehat{\alpha}_i-\alpha)^2\Big)^{1/2} \times 100\%
\end{eqnarray} 
where the $\widehat{\alpha}_i$'s are independent realizations of a particular estimator of $\alpha=1/\xi$. 

To be able to compare with  \cite{Brzezinski2016}, we need to determine $k_0$ in \eqref{e:alpha-trim} so as to match the target ARE (Asymptotic Relative Efficiency) of the estimators considered therein. By relation  \eqref{e:covar} in Proposition \ref{prop:xi-exp} it is easy to see that
\begin{equation}
\label{e:are-trim}
{\rm ARE}(\hat{\alpha}_{\rm TRIM})=\frac{{\rm Var}(\hat{\alpha}_{\rm MLE})}{{\rm Var}(\hat{\alpha}_{\rm TRIM})}\approx \frac{1/n}{1/(n-1-k_0)}
\end{equation}

where the last asymptotic equivalence follows by a simple application of delta method to the function form of  $\hat{\alpha}_{\rm TRIM}$ in terms of the statistic $\hat{\xi}_{k_0,n-1}$. Given $n$, to achieve a target ARE, we use \eqref{e:are-trim} to solve for $k_0$.

\vspace{-5mm}
\begin{figure}[H]
	\includegraphics[width=0.45\textwidth]{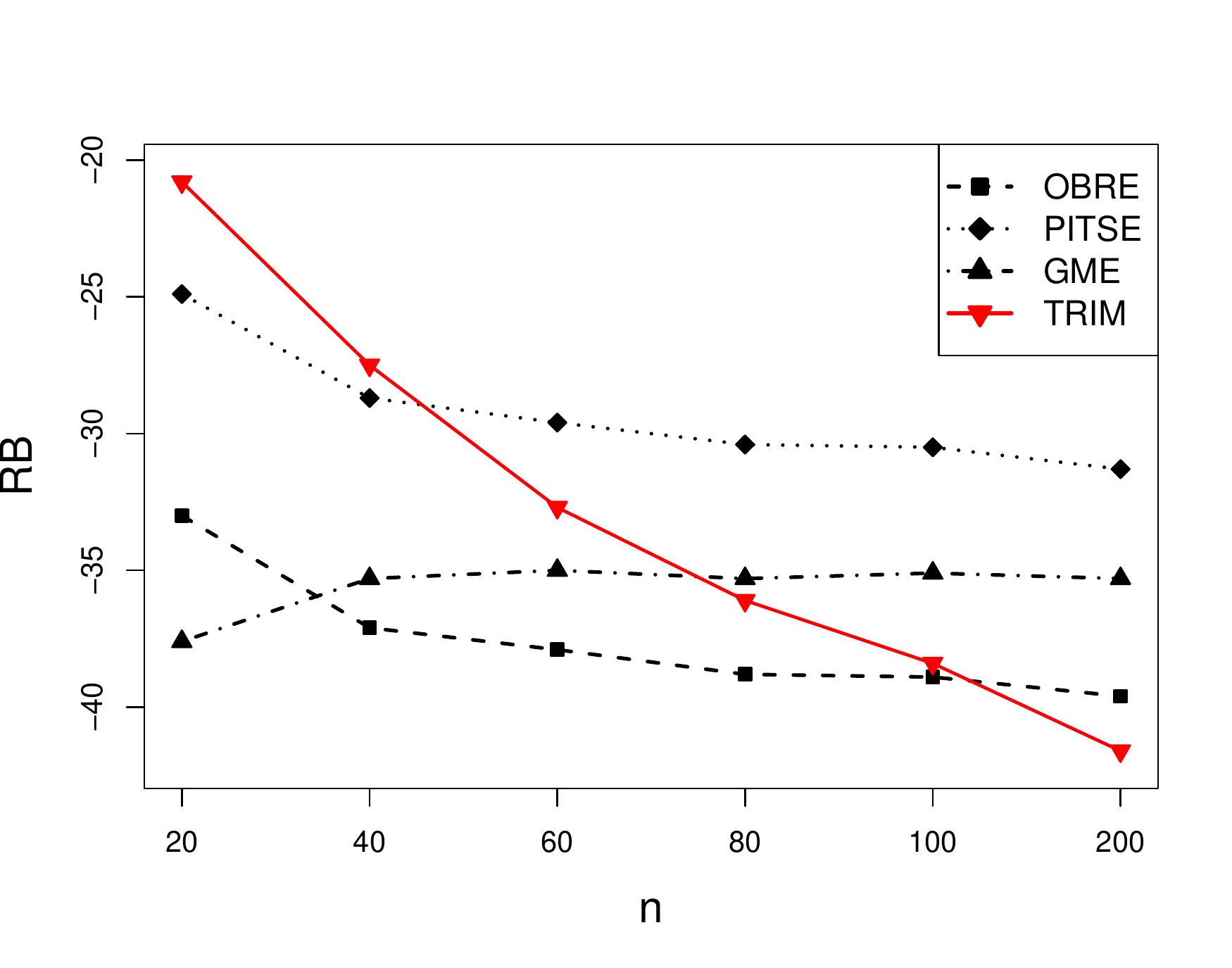}
	\includegraphics[width=0.45\textwidth]{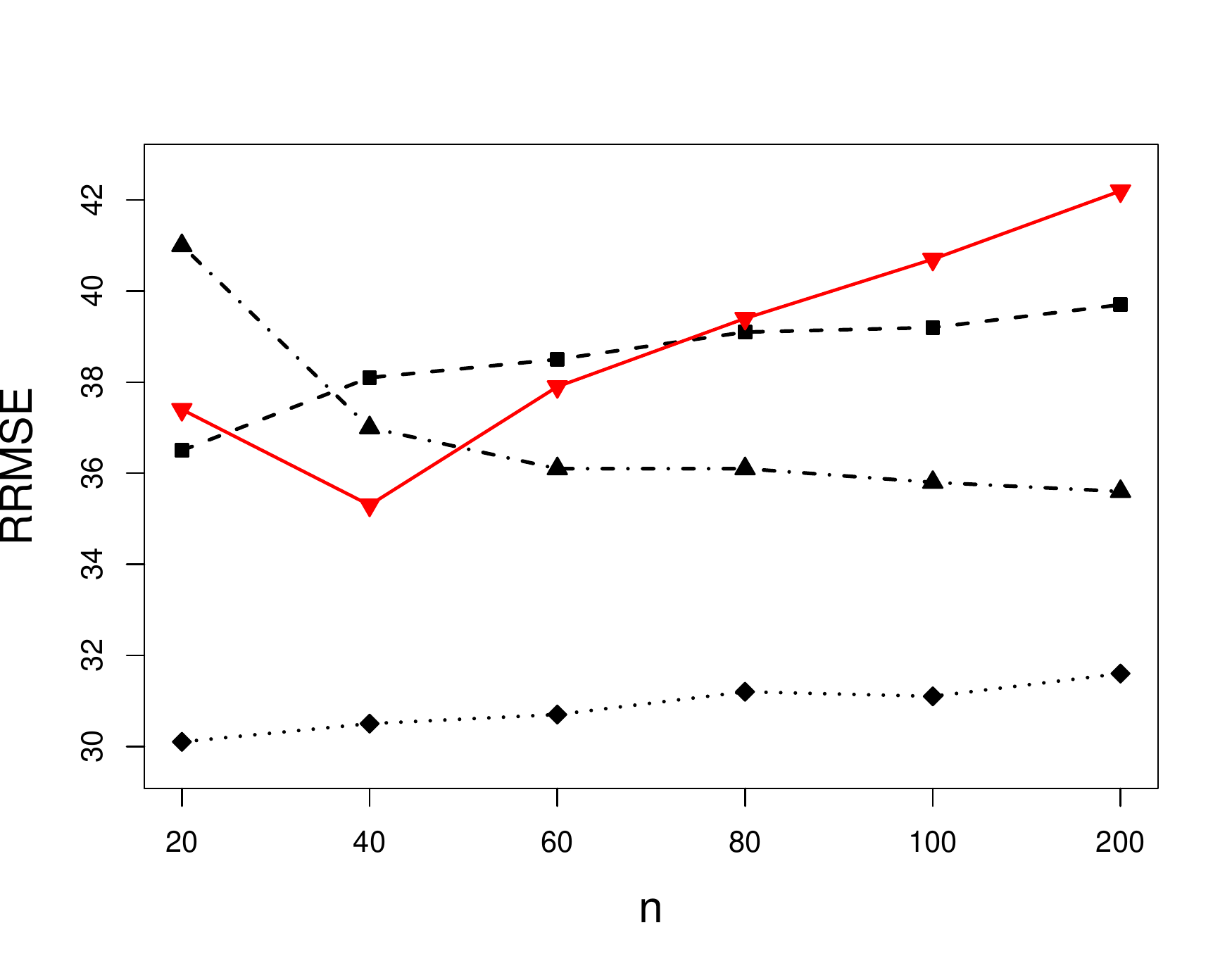}
\end{figure}
\vspace{-15mm}
\begin{figure}[H]
	\includegraphics[width=0.45\textwidth]{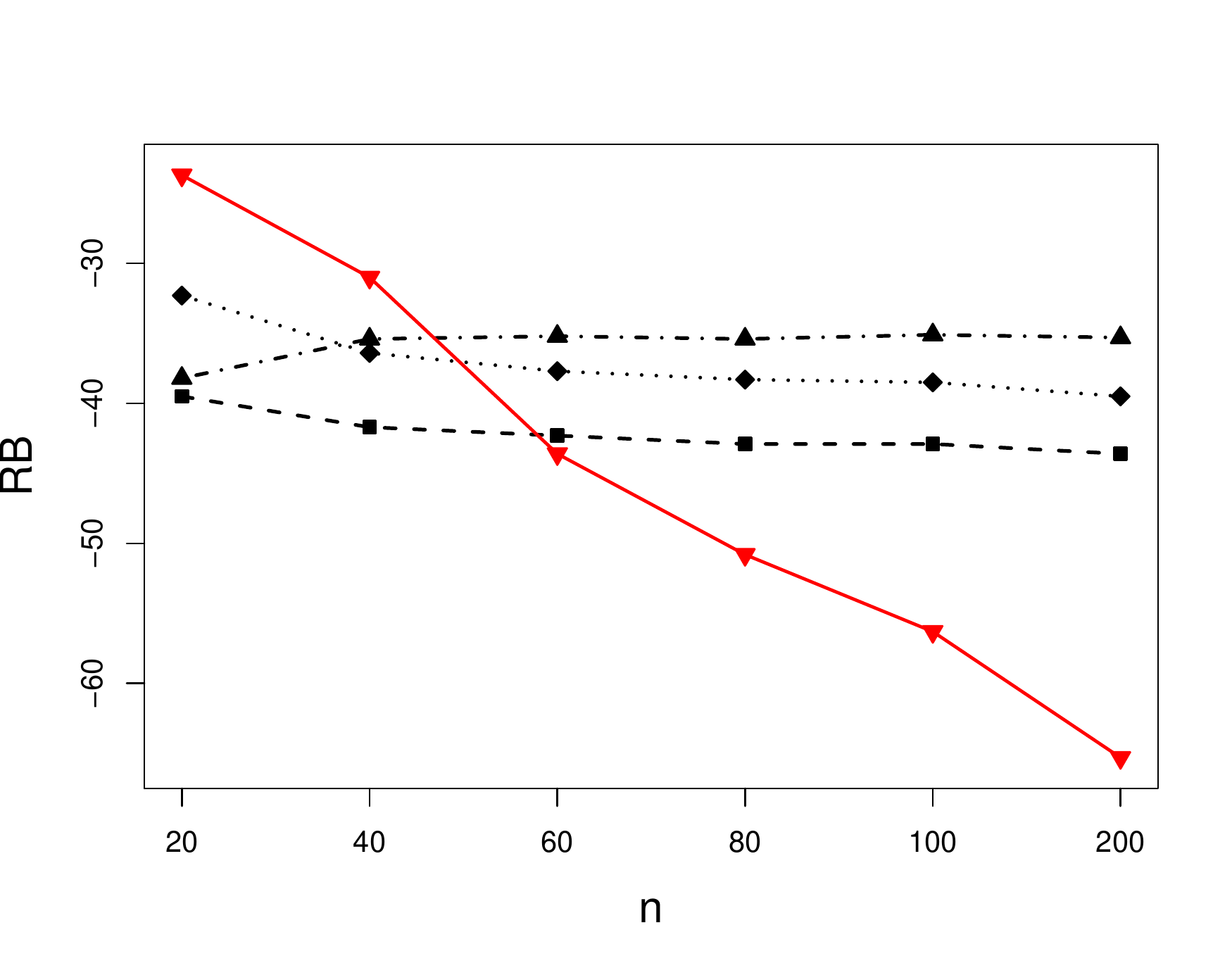}
	\includegraphics[width=0.45\textwidth]{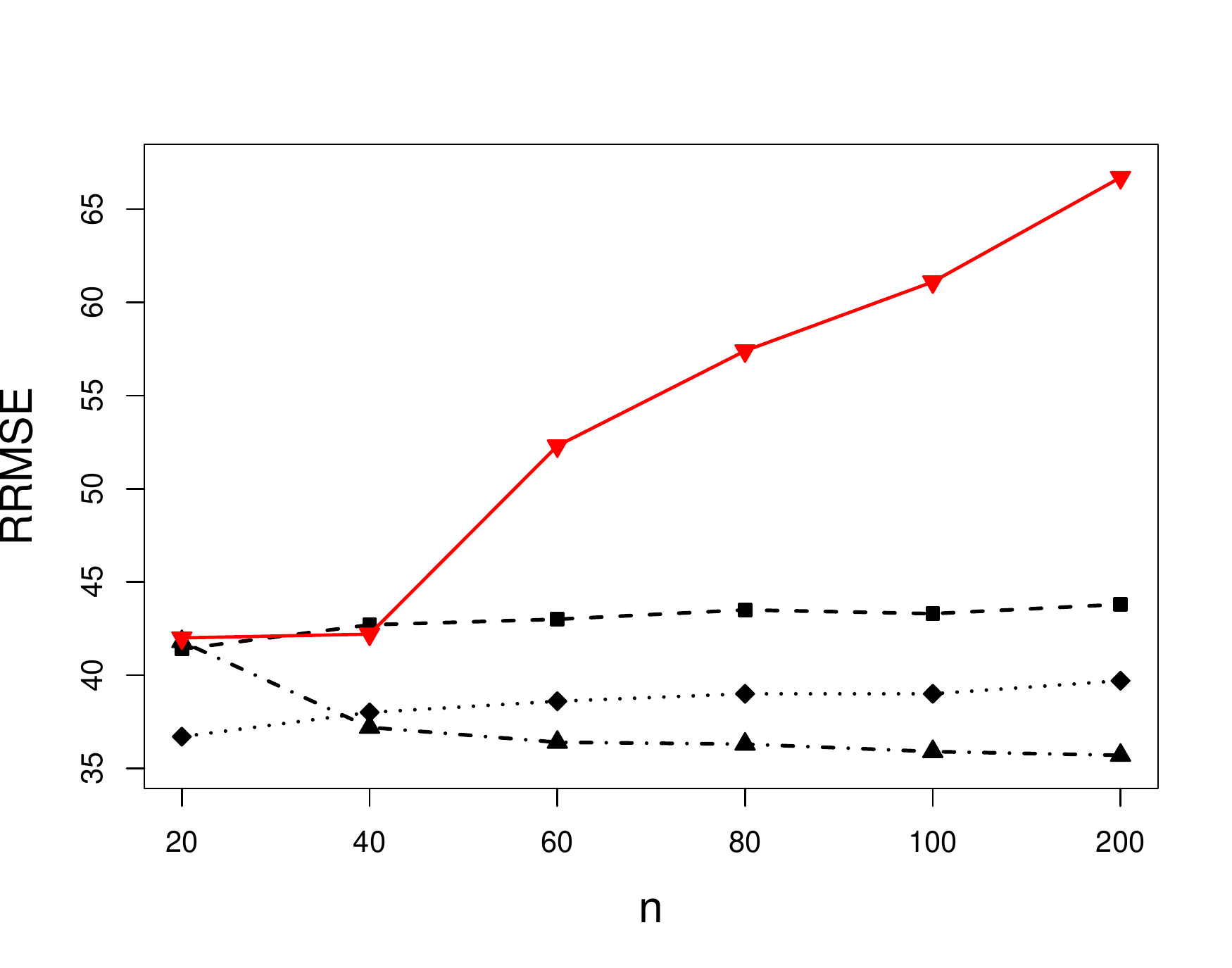}
	\caption{Performance of robust estimators for  $0.9 P(\alpha,1)+0.1 P(\alpha,1000)$ at ARE=94\%. Top left and right correspond to RB and RRMSE values  for $\alpha=1$. Bottom left and right correspond to RB and RRMSE values  for $\alpha=3$.}
	\label{fig:mix-90}
\end{figure}

As in \cite{Brzezinski2016}, the data sets are simulated from the Pareto distribution ${\rm Pareto}(1,1)$ and contaminated in two ways. In the first method of introducing outliers, we generate observations from the following mixture distribution
\begin{equation}
\label{e:F-cont-1}
F=(1-\varepsilon)\:{\rm Pareto}(\alpha,1)+\varepsilon\:{\rm Pareto}(\alpha,1000)
\end{equation}
for $\varepsilon \in (0,1)$ and $\alpha>0$. In the second method of contamination, $s$ proportion of the  observations is randomly selected  from ${\rm Pareto}(\alpha,1)$ and multiplied by a constant factor of 10.
\vspace{-5mm}
\begin{figure}[H]
	\centering
	\includegraphics[width=0.45\textwidth]{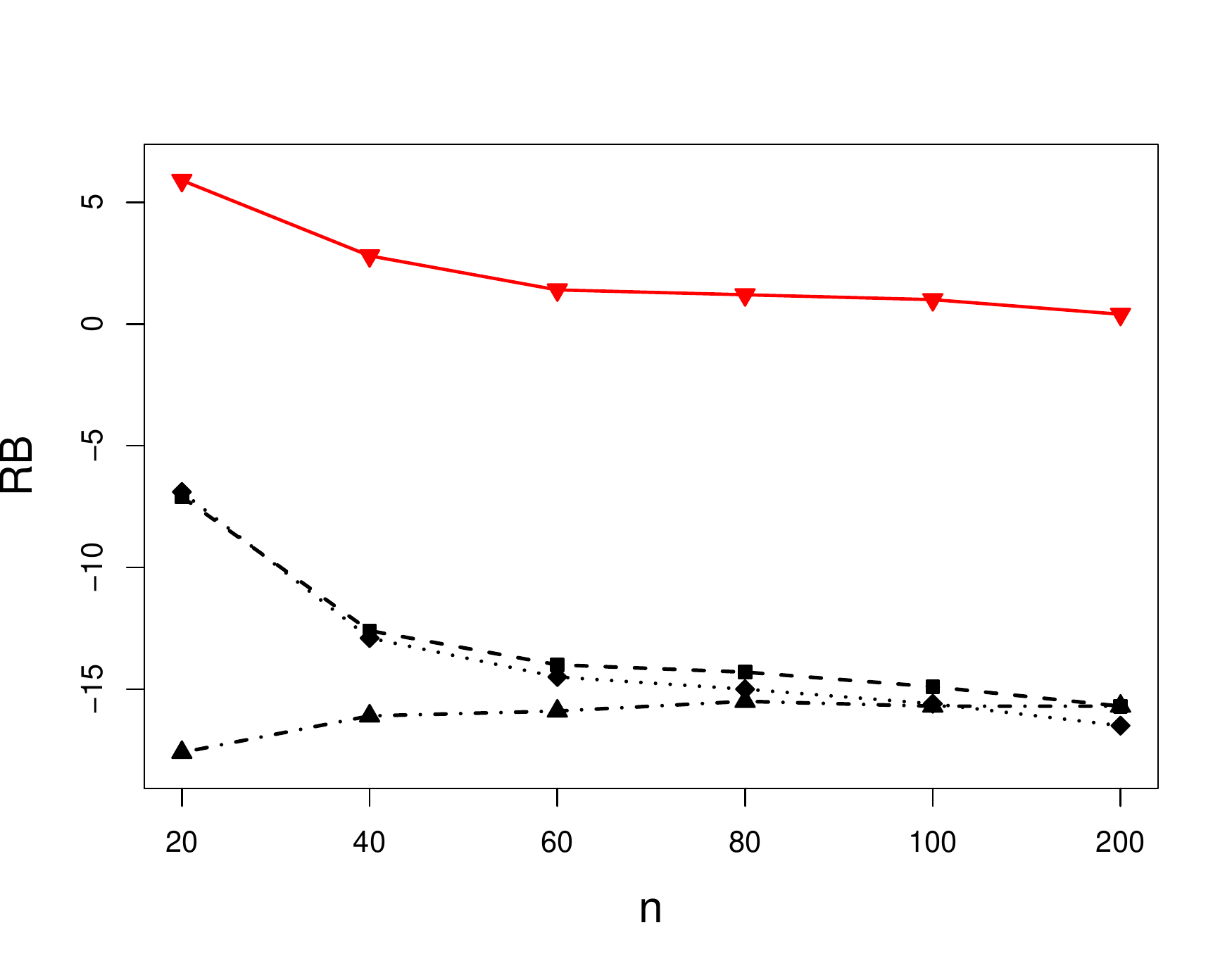}
	\includegraphics[width=0.45\textwidth]{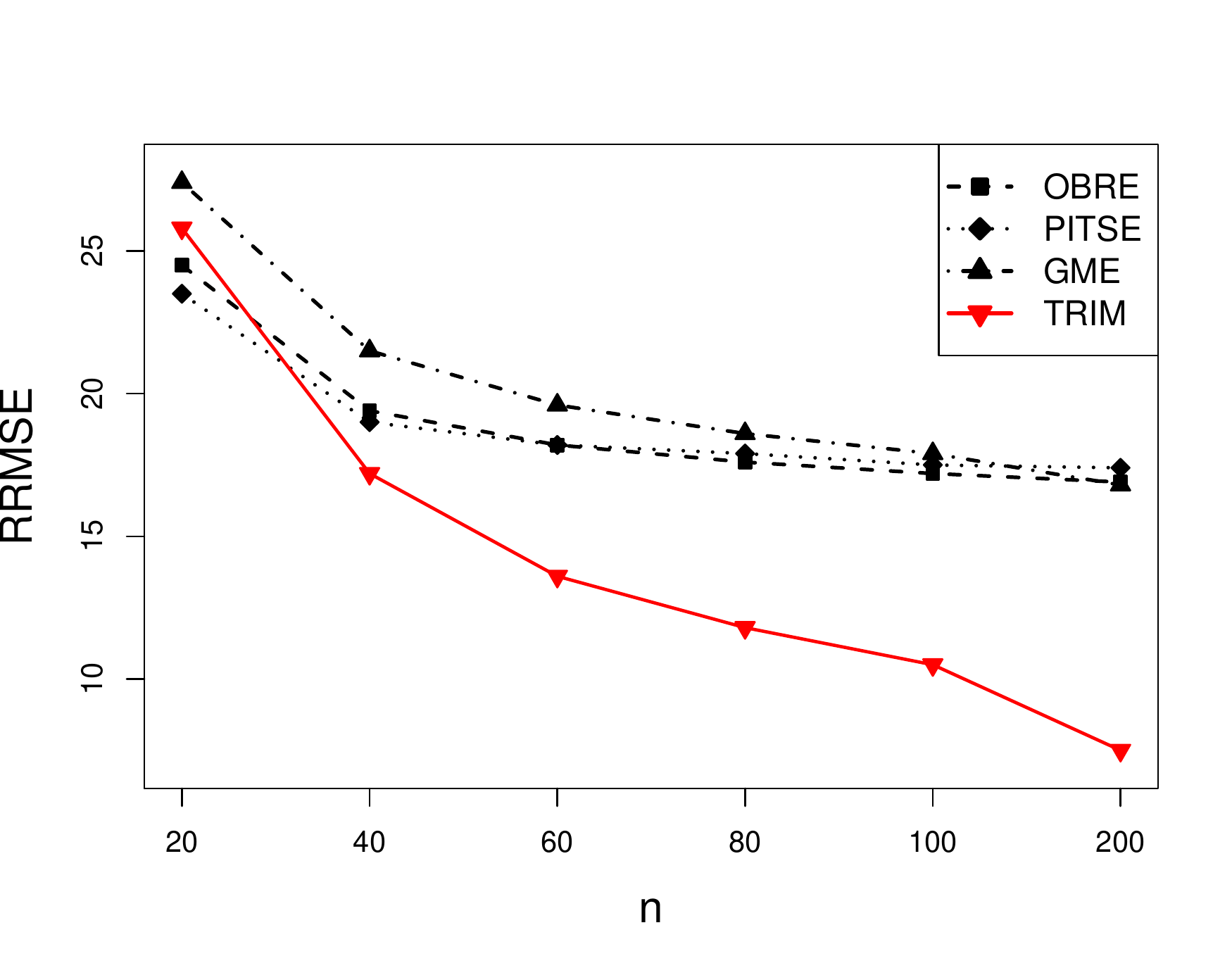}
\end{figure}
\vspace{-15mm}
\begin{figure}[H]
	\centering
	\includegraphics[width=0.45\textwidth]{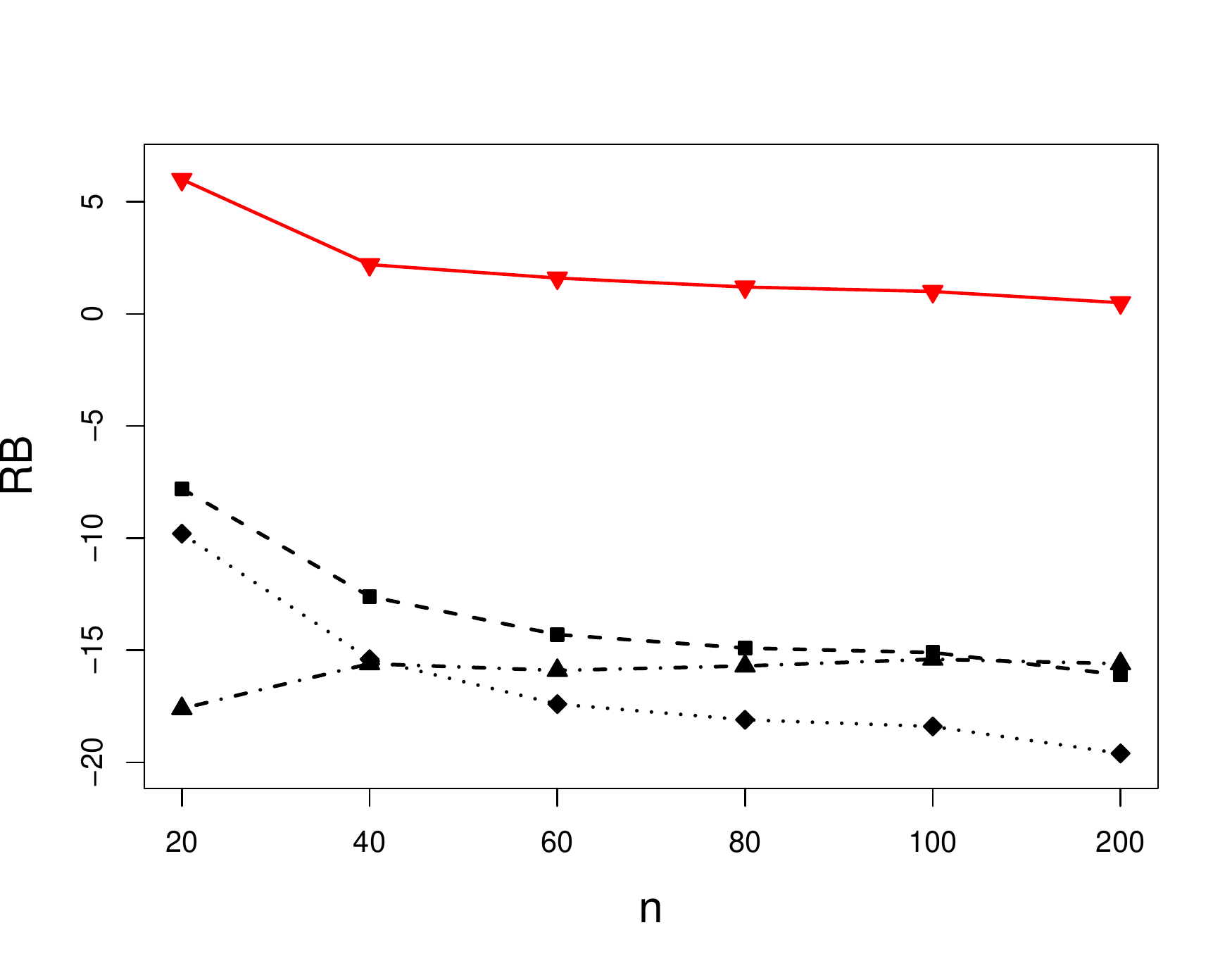}
	\includegraphics[width=0.45\textwidth]{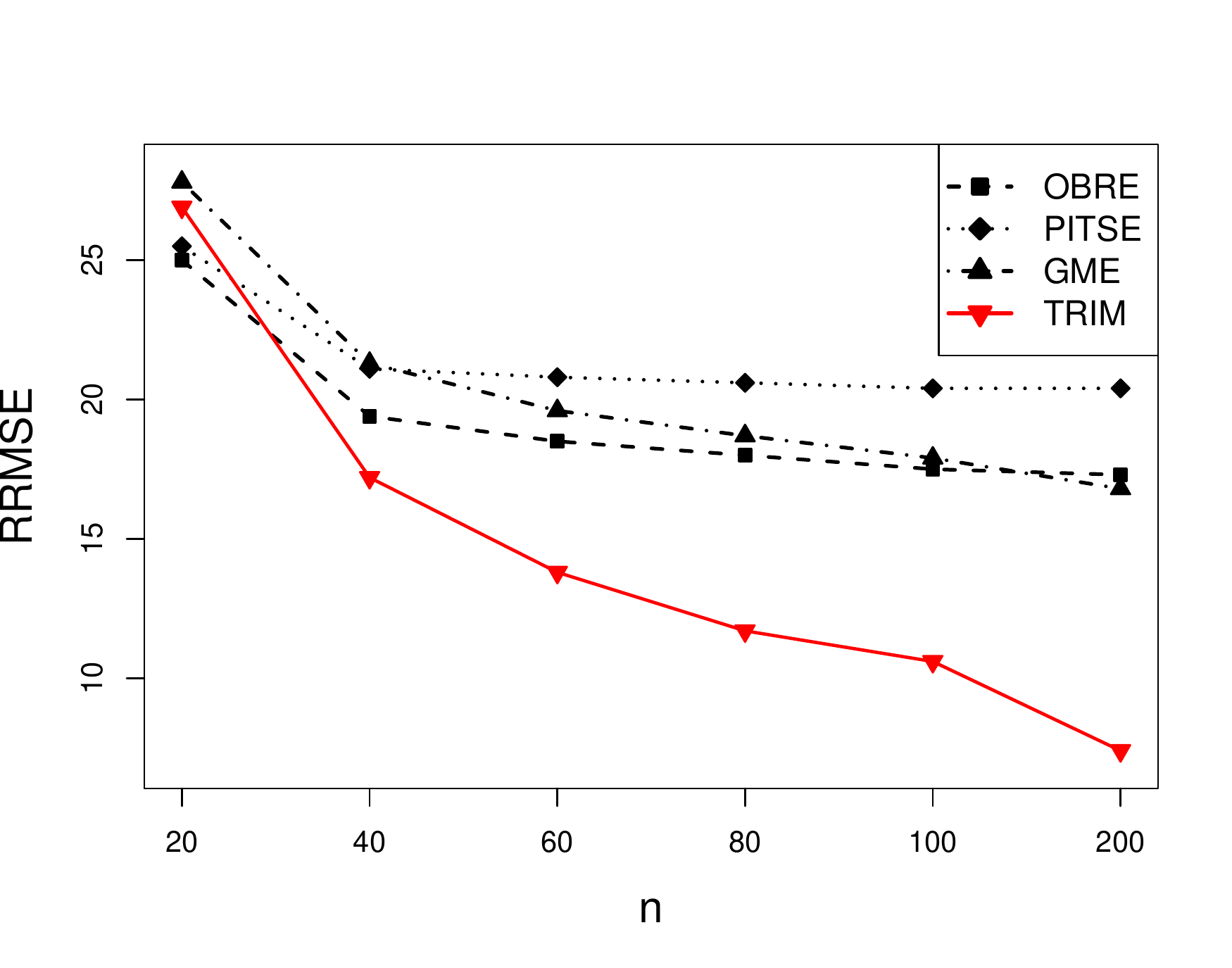}
	\caption{Performance of robust estimators for  $0.95 P(\alpha,1)+0.05 P(\alpha,1000)$ at ARE=94\%. Top left and right correspond to RB and RRMSE values  for $\alpha=1$. Bottom left and right correspond to RB and RRMSE values  for $\alpha=3$.}
	\label{fig:mix-95}
\end{figure}

For both methods of data contamination, we analyze performance of the four estimators viz OBRE, PITSE, GME and TRIM. We first fix the asymptotic relative efficiency for these estimators at $78\%$. Figure \ref{fig:mix-78} shows the  performance under the first method of data contamination with $\varepsilon=0.9$ and $\alpha=1$ and 3.  We observe that the performance of $\hat{\alpha}_{\rm TRIM}$ closely follows that of $\hat{\alpha}_{\rm OBRE}$, $\hat{\alpha}_{\rm PITSE}$ and $\hat{\alpha}_{\rm GME}$. In fact, all the estimators are relatively similar in this case and their difference is relatively small as the sample size $n$ grows.  Figure \ref{fig:scl-78} on the other hand,  shows the performance under the second method of data contamination with $s=0.05$ and $\alpha=1$ and 3. For this case, we observe the superior performance of $\hat{\alpha}_{\rm TRIM}$ in comparison to the estimators. This behavior is more apparent in larger sample sizes ($n=200$) where the trimmed estimator has more than 50\% lower RRMSE values than the rest.

We next fix the asymptotic relative efficiency for these estimators at $94\%$. Figure \ref{fig:mix-90} shows the  performance under the first method of data contamination with $\varepsilon=0.9$ and $\alpha=1$ and 3.  We observe that in this case the performance of $\hat{\alpha}_{\rm TRIM}$ is relatively poor when compared to that of $\hat{\alpha}_{\rm OBRE}$, $\hat{\alpha}_{\rm PITSE}$ and $\hat{\alpha}_{\rm GME}$ especially for larger sample sizes, $n$. However this phenomenon gets entirely reversed when $\varepsilon=0.95$ (see Figure \ref{fig:mix-95}). The performance of $\hat{\alpha}_{\rm TRIM}$ improves drastically with increase in sample size $n$ and surpasses the performance of all the other robust estimators. For $n=200$, the improvement is up to a factor 200\% in the RRMSE values. The surprising difference in the performance observed in Figures \ref{fig:mix-90} and \ref{fig:adap-94} can be explained as follows.

Since the ARE of $\hat{\alpha}_{\rm TRIM}$ is directly  related to the trimming value $k_0$ (see \eqref{e:alpha-trim}), large ARE or small $k_0$ values can control against small proportion of contamination ($1-\varepsilon=0.05$) but not against large proportions ($1-\varepsilon=0.1$). In scenario of Figure \ref{fig:mix-90}, setting the ARE as $94\%$ and contaminating $10\%$ of the data, our trimmed estimator is artificially forced to include outliers. This leads to the relatively poor performance of $\hat{\alpha}_{\rm TRIM}$. For other estimators, the link between ARE and robustness is not as direct which gives them an advantage. At $5\%$ contamination, our trimmed estimator picks up all the outliers at ARE level $94\%$ and hence outperforms the competitors (Figure \ref{fig:mix-95}).

In the following section, we illustrate an important advantage of our trimmed estimator when $k_0$ is estimated from the data. This allows us to adapt the degree of robustness to the proportion of outliers.

\subsection{Adaptive robustness}
\label{subsec:comp-adap}

\vspace{-5mm}

\begin{figure}[H]
	\centering
	\includegraphics[width=0.45\textwidth]{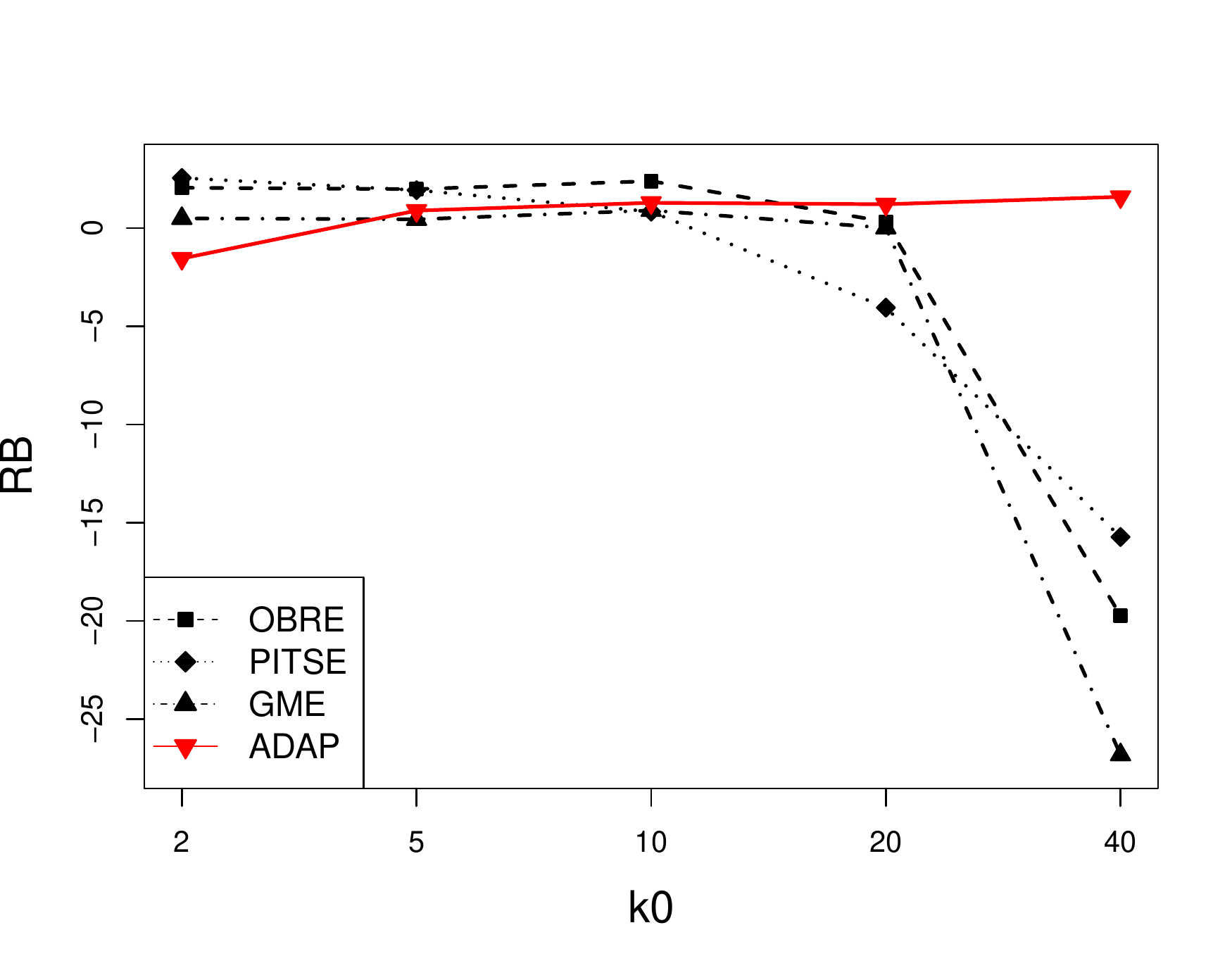}
	\includegraphics[width=0.45\textwidth]{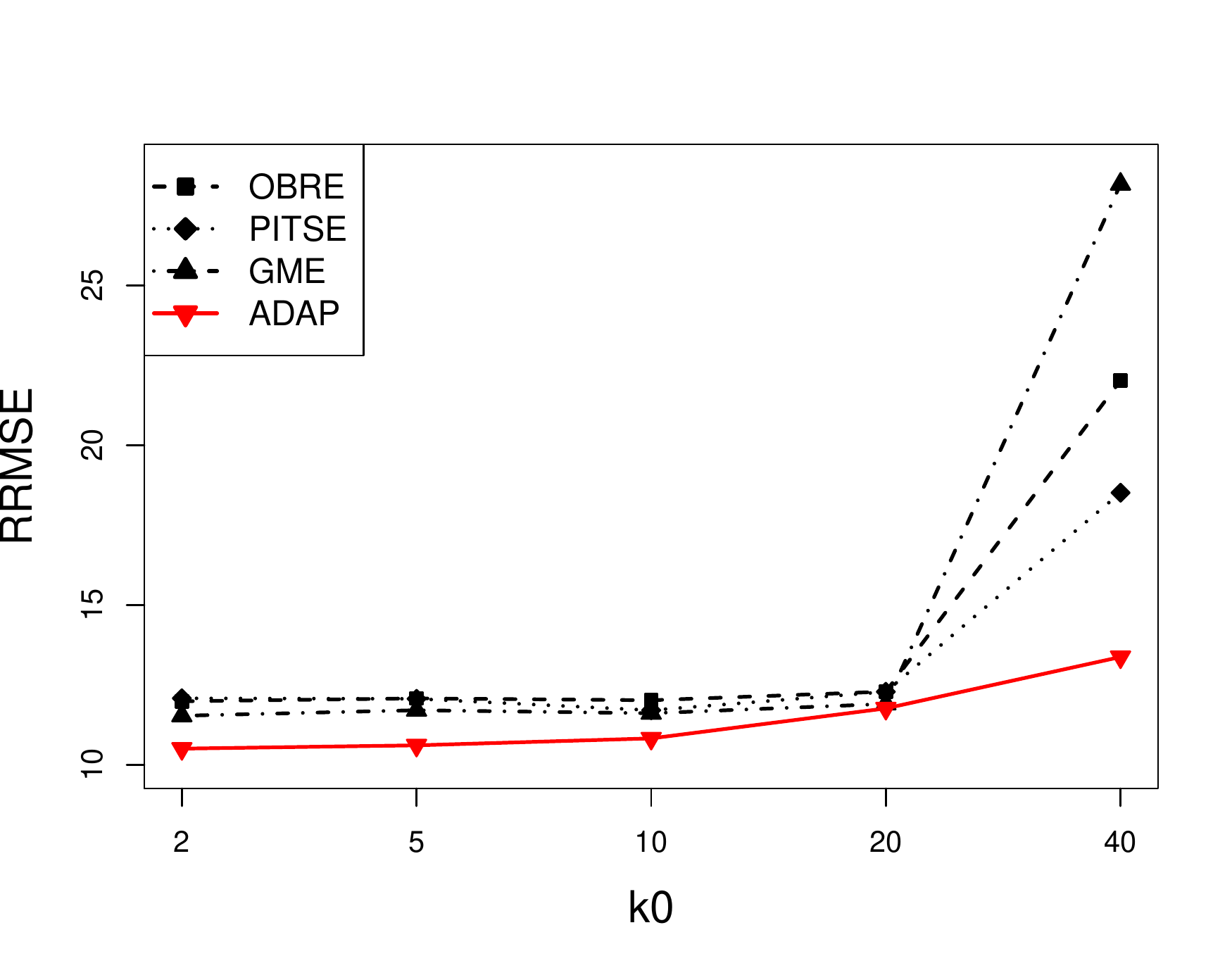}
	\caption{Performance of robust estimators at ARE=78\%. Top left and right correspond to RB and RRMSE values.}
	\label{fig:adap-78}
\end{figure}
In this section, we describe the superior performance of the adaptive trimmed Hill estimator (ADAP), $\widehat{\xi}_{\hat{k}_0,k}$, relative to several well known existing estimators when the degree of contamination is unknown. The performance of these existing robust estimators depends on the choice of parameters, which is directly related to their asymptotic relative efficiency.

For example, the optimal B-robust estimator (OBRE) requires a suitable choice of the parameter $c$ (see \cite{CJS:CJS247}) and the probability integral transform estimator (PITSE) requires a suitable choice of the parameter $t$ (see \cite{pitse})in order to allow for a given degree of robustness. Unless the degree of contamination is pre specified, it is impossible to accurately determine these parameters, which control the degree of robustness. Our estimator, on the other hand is adaptive in nature and automatically picks the trimming parameter, thereby producing a estimator of the tail index which can adapt to potentially unknown degree of contamination of the top order statistics.

We demonstrate the adaptive property of the proposed estimator, ADAP for the Pareto model where the outliers are injected as in \eqref{e:exp-trans-0}. For comparative purposes, we use the three best robust estimators, OBRE, PITSE and GME from \cite{Brzezinski2016} also described in Section \ref{subsec:comp-prev}. The comparison is made in terms of RRMSE and RB values as in \eqref{e:b-mse}. As in Section \ref{subsec:comp-prev}, we calibrate the parameters of the competing estimators by setting the ARE to be 78\% or 94\%.
\vspace{-5mm}

\begin{figure}[H]
	\centering
	\includegraphics[width=0.45\textwidth]{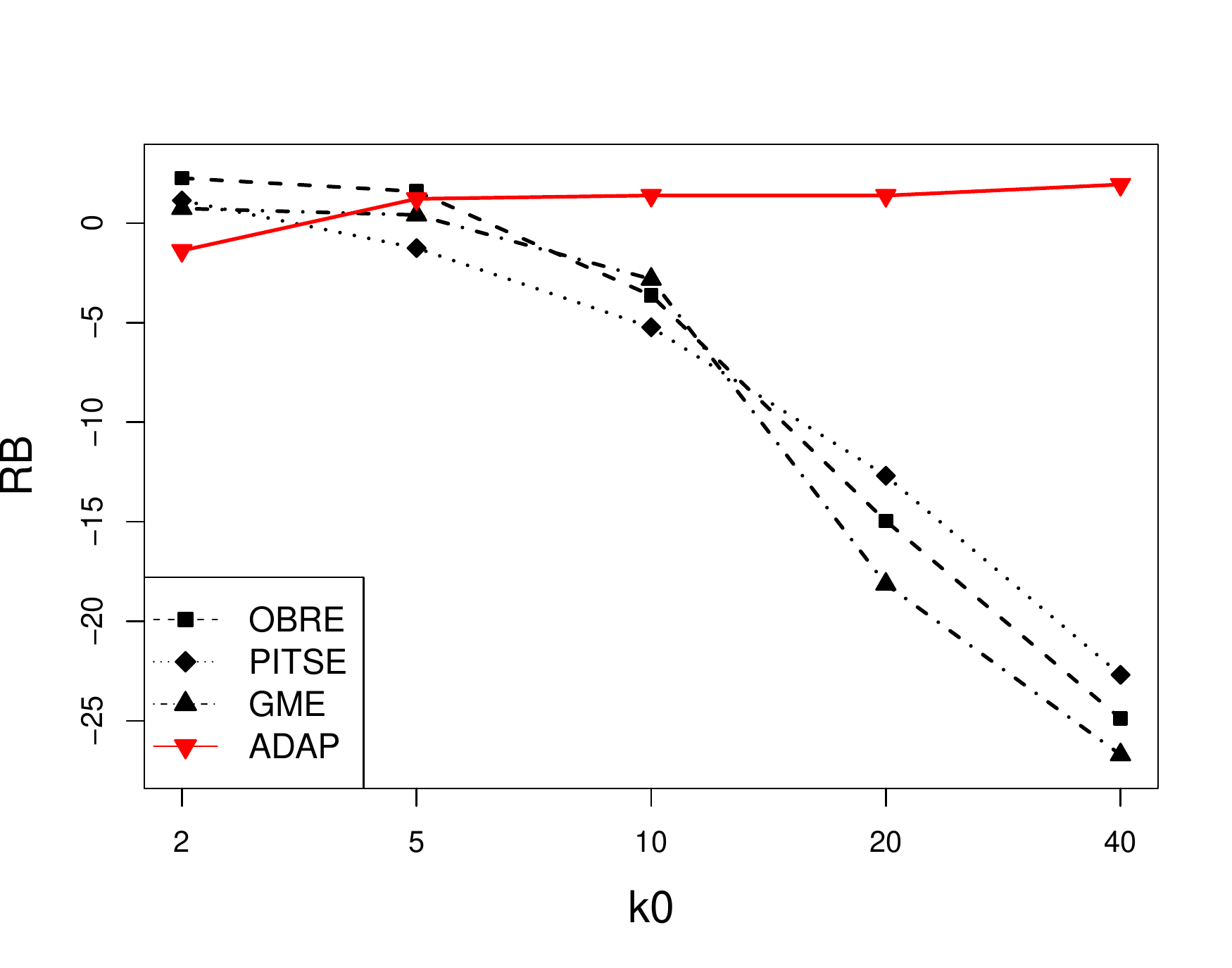}
	\includegraphics[width=0.45\textwidth]{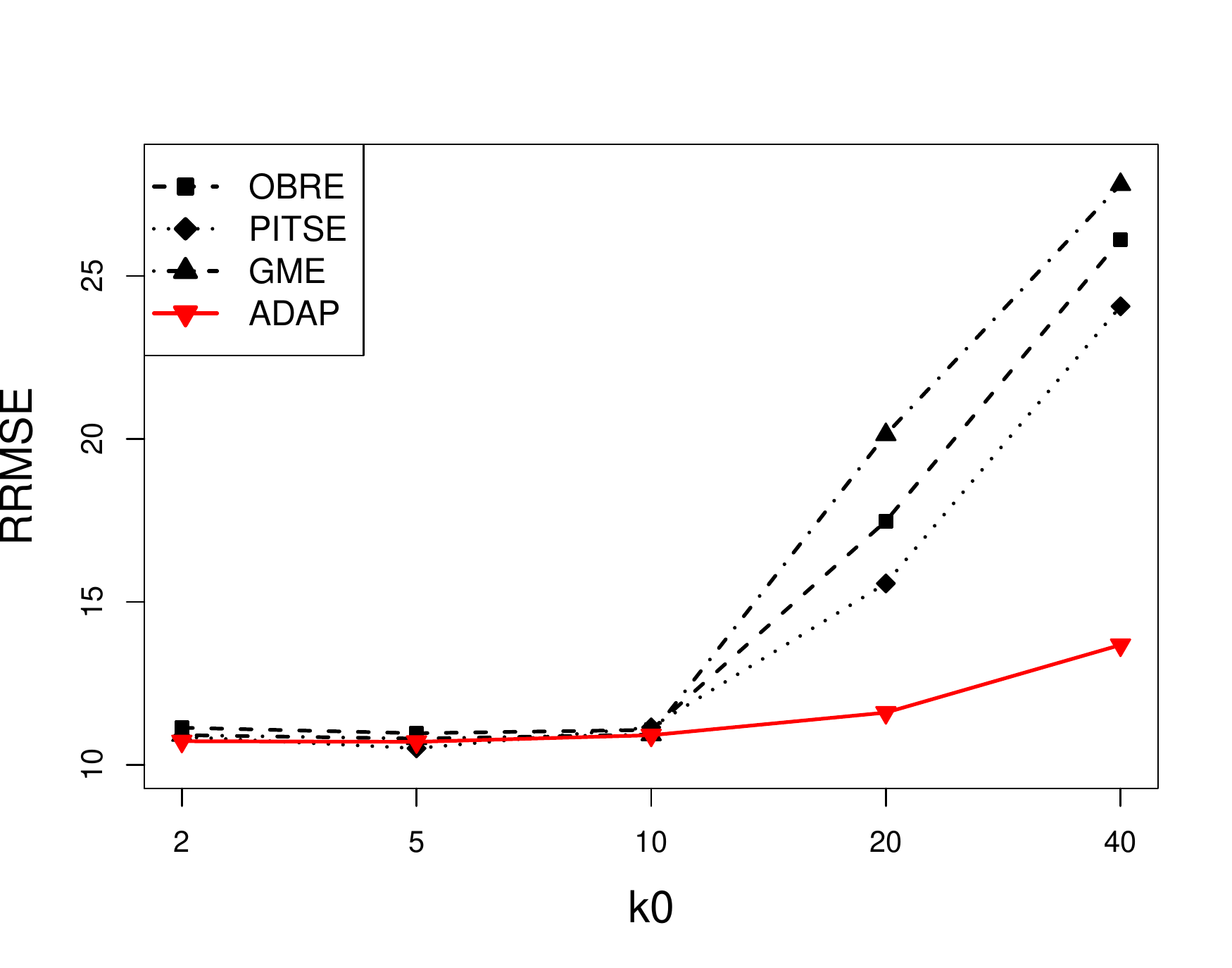}
	\caption{Performance of robust estimators at ARE=94\%. Top left and right correspond to RB and RRMSE values.}
	\label{fig:adap-94}
\end{figure}

Figure \ref{fig:adap-78} demonstrates the performance of ADAP against the three competitors at ARE=78\%. Observe that the competitors fail to adapt to the growing degree of contamination and essentially break down at $k_0/n=40\%$. On the other hand, apart from a mild loss in efficiency, our estimator is resilient to the degree of contamination and adapts itself even to higher values of $k_0/n$. This feature is even more prominent in Figure \ref{fig:adap-94} where the ARE for all estimators is fixed at 94\%. Even at contamination proportion as low as 10\%, ADAP outperforms all the competitors. This is expected since the performance of the competitors sensitive to the choice of ARE.  Large ARE values (94\%) allow for a smaller degree of robustness, hence the poor performance of the OBRE, PITSE and GME even at lower contamination levels. To the best of our knowledge, the remarkable adaptive robustness property inherent to our estimator is not present in any other estimator in the literature.

\section{Discussion}
\label{sec:discussion}

In this paper, we introduced the trimmed Hill estimator for the heavy-tail exponent $\xi$.  We established its finite-sample optimality in the ideal Pareto setting
and its asymptotic normality under a second order regular variation condition. In Section \ref{sec:minmax}, we established a uniform consistency result for the trimmed Hill estimator.  For the Hall
class of distributions, we argued that the trimmed Hill estimator attains the same minimax optimal rate as in the case of no outliers,
provided that $k_0 = o(n^{2\rho/(2\rho +1)})$, where $\rho>0$ is the second order regular variation
exponent.  One open problem is to establish the minimax optimal rate of the trimmed Hill estimator, in the case when 
the rate of contamination $k_0$ exceeds the minimax optimal rate.  

In Section \ref{subsec:k0-k-jt}, we develop a methodology for the joint selection of the parameters $k_0$ and $k$, based on the work of Drees and Kaufman \cite{drees}. 
We formulate an extension of their results when $k_0 = o(n^{2\rho/(2\rho+1)})$.  This leads to a practical method for the joint selection of $k_0$ and $k$.  This method is shown 
to work as well as the original method of Drees and Kaufman even if the top order statistics are contaminated.  As in the case of uncontaminated extremes, however, the main 
challenge is the accurate estimation of the second order exponent $\rho$.  In the future, perhaps other bootstrap-based methods for the joint 
estimation of $k_0$ and $k$ should be explored as in \cite{danielsson}.

Our key methodological contribution is the data--driven selection of the trimming parameter $k_0$ using weighted sequential testing.  
It leads to a robust estimator that adapts to the potentially unknown degree of contamination in the extremes.  This unique feature is not available in many 
other robust estimators, which require the selection of tuning parameters.  As demonstrated in Section \ref{subsec:comp-adap}, the adaptive trimmed Hill estimator has superior performance 
with practically no tuning.  As an added bonus, we obtain a method for the identification of suspect outliers in the extremes of the data, which can be used to perform forensics or detect 
anomalies \cite{amon}. 

Finally, we would like to advocate broadly for using robust methods for the estimation of the tail index.  Our experience
with extensive simulation studies (see e.g., Tables \ref{tab:ARE_exp_0} and \ref{tab:ARE_scl_0}) convinced us that contamination in small proportion of the extreme order statistics 
leads to severe bias in the non-robust estimation methods. Trimming and especially data--adaptive trimming provide good alternatives at the expense of little to no loss in efficiency 
in the case when no contamination is present.

\section{Appendix}
\label{sec:append}
\subsection{Auxiliary Lemmas}

\begin{lemma} 
	\label{lem:u-ind}
	Let $E_j \stackrel{i.i.d}{\sim}{\rm Exp}(1)$, $j=1,2, \cdots, n+1$ be standard exponential random variables. Then, the ${\rm Gamma}(i,1)$ random variables defined as
	\begin{equation}
	\label{e:gam-dist}
	\Gamma_i = \sum_{j=1}^iE_j \hspace{5mm} i=1, \cdots, n+1,
	\end{equation} 
	satisfy 
	\begin{equation}
	\label{e:gam-ind}
	\Big( \frac{\Gamma_1}{\Gamma_{n+1}},\cdots, \frac{\Gamma_n}{\Gamma_{n+1}}\Big)\:\textmd{  and  }\:\Gamma_{n+1}\textmd{ are independent.}\\
	\end{equation}and
	\begin{equation}
	\label{e:gam-u}
	\Big( \frac{\Gamma_1}{\Gamma_{n+1}},\cdots, \frac{\Gamma_n}{\Gamma_{n+1}}\Big)\stackrel{d}{=}( U_{(1,n)},\cdots,U_{(n,n)}) 
	\end{equation}
	
	where $U_{(1,n)}< \cdots< U_{(n,n)}$ are the order statistics of $n$ i.i.d. U(0,1) random variables.
\end{lemma}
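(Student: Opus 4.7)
The plan is to derive both conclusions simultaneously by computing the joint density of $(\Gamma_1/\Gamma_{n+1},\ldots,\Gamma_n/\Gamma_{n+1},\Gamma_{n+1})$ via a single change of variables, and then recognizing the factorization.

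First, I would start from the joint density of $(E_1,\ldots,E_{n+1})$, which is $\exp(-\sum_{j=1}^{n+1} e_j)$ on $(0,\infty)^{n+1}$. Applying the triangular linear map $e_j \mapsto \gamma_j = \sum_{i\le j} e_i$, whose Jacobian determinant equals $1$, yields the joint density of $(\Gamma_1,\ldots,\Gamma_{n+1})$:
\begin{equation*}
f_{\Gamma}(\gamma_1,\ldots,\gamma_{n+1}) \;=\; e^{-\gamma_{n+1}}\,\mathbf{1}\{0<\gamma_1<\gamma_2<\cdots<\gamma_{n+1}\}.
\end{equation*}

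Next, I would perform the substitution $R_i = \Gamma_i/\Gamma_{n+1}$ for $i=1,\ldots,n$ and $S = \Gamma_{n+1}$. Since $\Gamma_i = R_i S$ for $i\le n$, the Jacobian of the inverse map $(r_1,\ldots,r_n,s) \mapsto (r_1 s,\ldots,r_n s,s)$ is $s^n$. The transformed density becomes
\begin{equation*}
f_{R,S}(r_1,\ldots,r_n,s) \;=\; e^{-s}\,s^n\,\mathbf{1}\{0<r_1<\cdots<r_n<1\}\,\mathbf{1}\{s>0\},
\end{equation*}
which factors as
\begin{equation*}
f_{R,S}(r_1,\ldots,r_n,s) \;=\; \underbrace{n!\,\mathbf{1}\{0<r_1<\cdots<r_n<1\}}_{=\,f_{(U_{(1,n)},\ldots,U_{(n,n)})}}\;\cdot\;\underbrace{\frac{s^n e^{-s}}{n!}\mathbf{1}\{s>0\}}_{=\,f_{\mathrm{Gamma}(n+1,1)}}.
\end{equation*}

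This clean product factorization delivers both claims at once: independence of $(R_1,\ldots,R_n)$ from $S=\Gamma_{n+1}$ gives \eqref{e:gam-ind}, and the recognition of the first factor as the joint density of the order statistics of $n$ i.i.d.\ $U(0,1)$ variables gives \eqref{e:gam-u}. I do not anticipate any real obstacle here; the only point of care is tracking the Jacobian $s^n$ and the indicator $\{0<r_1<\cdots<r_n<1\}$, which together produce exactly the normalizing constant $n!$ needed to identify the order-statistics density.
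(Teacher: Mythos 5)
Your proof is correct. The paper itself gives no proof of this lemma, only a citation to Example~4.6 of Ahsanullah, Nevzorov and Shakil, \emph{An Introduction to Order Statistics}; your change-of-variables argument (successive-differences map with unit Jacobian, then scaling by $\Gamma_{n+1}$ with Jacobian $s^n$, yielding the product of the order-statistics density and the ${\rm Gamma}(n+1,1)$ density) is the standard one and is exactly what the cited source establishes, so you have in effect supplied the self-contained proof the paper outsources.
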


For details on the proof see Example 4.6 on page 44 in \cite{MR3025012}. The next result, quoted from page 37 in \cite{Haan}, shall be used throughout the course of the paper to switch 
between order statistics of exponentials and i.i.d.\ exponential random variables.  

\begin{lemma}[R\'enyi, 1953]
	\label{lem:renyi}
	Let $E_1, E_2, \cdots, E_n$ be a sample of $n$ i.i.d. standard exponential random variables and $E_{(1,n)} \leq E_{(2,n)} \leq E_{(n,n)} $ be the order statistics. By R\'enyi's (1953) representation, we have for fixed $k\leq n$,
	\begin{equation}
	\label{e:renyi}
	(E_{(1,n)},\cdots, E_{(i,n)},\cdots,E_{(k,n)})\stackrel{d}{=}\Big(\frac{E_1^*}{n},\cdots,\sum_{j=1}^i \frac{E^*_j}{n-j+1}, \cdots, \sum_{j=1}^k \frac{E^*_j}{n-j+1}\Big) 
	\end{equation}
	where $E_1^*, \cdots, E_k^*$ are also i.i.d. standard exponentials.
\end{lemma}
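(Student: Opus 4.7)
The plan is to prove the full joint representation (i.e.\ the case $k=n$, from which the stated case $k\le n$ follows simply by restricting to the first $k$ coordinates) by a direct Jacobian calculation on the spacings of the order statistics. Throughout, let $D_i:=E_{(i,n)}-E_{(i-1,n)}$ for $i=1,\ldots,n$, with the convention $E_{(0,n)}:=0$.

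First I would write down the joint density of the order statistics. Since $E_1,\ldots,E_n$ are i.i.d.\ standard exponential with marginal density $e^{-x}\mathbf{1}_{\{x\ge 0\}}$, the standard formula for order statistics gives density $n!\,e^{-\sum_{i=1}^n x_i}$ on the simplex $\{0\le x_1\le\cdots\le x_n\}$. Next, I would change variables from order statistics to spacings via $(x_1,\ldots,x_n)\mapsto(D_1,\ldots,D_n)$, $D_i=x_i-x_{i-1}$. The inverse map $x_i=\sum_{j=1}^i D_j$ is lower-triangular with unit diagonal, so the Jacobian is $1$, and the image of the simplex is the positive orthant $\{D_i\ge 0\}$.

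The key step is the combinatorial identity
\begin{equation*}
\sum_{i=1}^n x_i \;=\; \sum_{i=1}^n\sum_{j=1}^i D_j \;=\; \sum_{j=1}^n (n-j+1)\, D_j,
\end{equation*}
which rewrites the joint density of $(D_1,\ldots,D_n)$ as $\prod_{j=1}^n (n-j+1)\,e^{-(n-j+1)D_j}$ on the positive orthant. This factorization shows immediately that the rescaled spacings $E_j^*:=(n-j+1)D_j$, $j=1,\ldots,n$, are mutually independent, each standard exponential. Substituting $D_j=E_j^*/(n-j+1)$ into the telescoping identity $E_{(i,n)}=\sum_{j=1}^i D_j$ yields the R\'enyi representation on all $n$ components, and restricting to the first $k$ entries gives the claimed distributional equality \eqref{e:renyi}.

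I do not expect a genuine obstacle here: the whole content is a standard Jacobian computation coupled with the counting identity $\sum_{i=1}^n\sum_{j\le i}=\sum_{j=1}^n(n-j+1)$, which is exactly what produces the weights $1/(n-j+1)$ in the representation and forces the joint density to factor into a product of exponentials. Intuitively, the independence of the rescaled spacings reflects the memorylessness of the exponential distribution: once the smallest of $n-j+1$ i.i.d.\ exponentials is removed, the remaining $n-j$ differences are again i.i.d.\ exponentials by the lack-of-memory property, and the minimum of $n-j+1$ i.i.d.\ standard exponentials is itself $\mathrm{Exp}(n-j+1)$, matching the law of $D_j$.
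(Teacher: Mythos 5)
Your proof is correct and complete. Note that the paper itself does not prove this lemma at all --- it is quoted directly from de Haan's monograph --- so there is no in-paper argument to compare against; your write-up supplies a self-contained proof where the paper only gives a citation. Your route (joint density of the order statistics, change of variables to the spacings $D_i=E_{(i,n)}-E_{(i-1,n)}$ with unit Jacobian, and the counting identity $\sum_{i=1}^n x_i=\sum_{j=1}^n(n-j+1)D_j$ so that the density factors as $\prod_{j=1}^n(n-j+1)e^{-(n-j+1)D_j}$, using $n!=\prod_{j=1}^n(n-j+1)$) is the standard density-factorization proof and is entirely rigorous; passing from the full case $k=n$ to fixed $k\le n$ by projecting onto the first $k$ coordinates is also fine. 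The alternative classical argument, which you correctly sketch as intuition, proceeds by induction using memorylessness: the minimum of $n-j+1$ i.i.d.\ standard exponentials is ${\rm Exp}(n-j+1)$ and, conditionally, the residual lifetimes are again i.i.d.\ exponential. Either argument establishes the lemma; yours has the advantage of being a single explicit computation with no conditioning step.
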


\begin{lemma}
	\label{lem:ratio-conv}
	For $\Gamma_m=E_1+E_2+\cdots+E_m$ where the $E_i'$s are i.i.d. standard exponential random variables, for any $\rho$ 
	\begin{eqnarray}
	\label{e:gam-conv-0}
	\sup_{m\geq M}\Big|\Big(\frac{\Gamma_{m}}{m}\Big)^{-\rho}-1\Big| &\stackrel{a.s.}{\longrightarrow}& 0, \hspace{2mm}M \rightarrow \infty\\
	\label{e:gam-conv-1}
	\sup_{m,n\geq M}\Big|\Big(\frac{\Gamma_{m}/m}{\Gamma_{n}/n}\Big)^{-\rho}-1\Big| &\stackrel{a.s.}{\longrightarrow}& 0, \hspace{2mm}M \rightarrow \infty
	\end{eqnarray}
\end{lemma}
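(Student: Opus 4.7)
The plan is to reduce both statements to the Strong Law of Large Numbers together with a continuity argument. The SLLN applied to the i.i.d.\ sequence $E_1, E_2, \ldots$ with $\mathbb{E}[E_1] = 1$ gives $\Gamma_m/m \to 1$ almost surely as $m \to \infty$. The key elementary observation I would use repeatedly is that for any real sequence $a_m \to 0$ one automatically has $\sup_{m \geq M} |a_m| \to 0$ as $M \to \infty$, since this is essentially the definition of convergence. So both statements will follow once I identify the right pointwise-convergent sequences.

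For \eqref{e:gam-conv-0}, I would combine the SLLN with the fact that $x \mapsto x^{-\rho}$ is continuous at $x = 1$ for every real $\rho$ (and defined in a neighborhood of $1$ since $\Gamma_m/m > 0$ for all $m$). By the continuous mapping principle applied along each sample path, $(\Gamma_m/m)^{-\rho} - 1 \to 0$ almost surely, and hence the supremum over $m \geq M$ vanishes a.s.\ as $M \to \infty$ by the elementary observation above.

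For \eqref{e:gam-conv-1}, I would first upgrade the argument to a uniform control on the ratio $\Gamma_m/m$ itself. Fix $\varepsilon \in (0, 1/2)$; on the full-measure event where SLLN holds, choose a (random) index $M_\varepsilon$ such that $|\Gamma_m/m - 1| < \varepsilon$ for all $m \geq M_\varepsilon$. Then for $m, n \geq M_\varepsilon$ a triangle-inequality bound gives
$$\left| \frac{\Gamma_m/m}{\Gamma_n/n} - 1 \right| = \frac{|\Gamma_m/m - \Gamma_n/n|}{\Gamma_n/n} \leq \frac{2\varepsilon}{1-\varepsilon}.$$
Since $\varepsilon$ was arbitrary, this shows $\sup_{m,n \geq M} \bigl| (\Gamma_m/m)/(\Gamma_n/n) - 1 \bigr| \to 0$ almost surely as $M \to \infty$. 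Composing with the continuous map $x \mapsto x^{-\rho}$ (which, by the same uniform control, is applied only to arguments in a compact neighborhood of $1$ where it is uniformly continuous) yields \eqref{e:gam-conv-1}.

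There is no real obstacle: the whole argument is a routine combination of SLLN and continuity. The only subtlety worth flagging is the distinction between pointwise a.s.\ convergence and the a.s.\ convergence of the supremum over a tail, but since the quantities inside the supremum are themselves converging to zero as the indices tend to infinity, this distinction collapses into the elementary fact recalled in the first paragraph.
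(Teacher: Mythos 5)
Your proof is correct. The paper states Lemma \ref{lem:ratio-conv} without supplying a proof (it is treated as standard, in contrast with the neighboring lemmas, which are either proved or attributed to a reference), so there is no "paper's proof" to compare against; your argument is precisely the natural one a reader is meant to supply. The SLLN gives $\Gamma_m/m \to 1$ almost surely, and the elementary fact that pointwise convergence of a sequence is equivalent to its tail suprema vanishing converts this into \eqref{e:gam-conv-0} after composing with the continuous map $x \mapsto x^{-\rho}$. Your handling of \eqref{e:gam-conv-1} is also sound: the joint supremum over $m, n \geq M$ requires the uniform-in-both-indices control you extract from the single-sequence convergence via $\bigl|\Gamma_m/m - \Gamma_n/n\bigr| \leq |\Gamma_m/m - 1| + |\Gamma_n/n - 1|$, and the denominator $\Gamma_n/n > 1 - \varepsilon$ keeps the ratio bounded away from zero so that $x \mapsto x^{-\rho}$ is applied on a compact set where it is uniformly continuous. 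No gaps.
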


\begin{lemma}
	For all $\rho>0$, we have 
	\label{lem:gam-int}
	$$
	\sup_{m\geq M}\Big|\frac{1}{m}\sum_{i=1}^{m}{\Big(\frac{\Gamma_{i+1}}{\Gamma_{m+1}}\Big)}^{\rho}-\frac{1}{1-\rho}\Big|\stackrel{a.s.}{\longrightarrow}0, \hspace{2mm} M \rightarrow \infty
	$$
\end{lemma}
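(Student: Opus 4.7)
The plan is to reduce the sum to an average of i.i.d.\ random variables via the gamma--uniform identity in Lemma \ref{lem:u-ind}, apply Kolmogorov's strong law, and then promote pointwise almost-sure convergence to the stated $\sup_{m\ge M}$ form.

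First I would invoke Lemma \ref{lem:u-ind}: for each $m$, the vector $(\Gamma_1/\Gamma_{m+1},\dots,\Gamma_m/\Gamma_{m+1})$ has the same joint law as the order statistics $(U_{(1,m)},\dots,U_{(m,m)})$ of $m$ i.i.d.\ $U(0,1)$ random variables. The summands with $i=1,\dots,m-1$ correspond to $U_{(i+1,m)}$, while the $i=m$ summand is identically $1$. For the stated limit $\frac{1}{1-\rho}$ to coincide with $\int_0^1 u^{?}\,du$ we need integrand $u^{-\rho}$ with $\rho\in(0,1)$; accordingly I read the summand as $(\Gamma_{i+1}/\Gamma_{m+1})^{-\rho}$ (the natural sign arising when the ratio is inverted in the second-order expansion that this lemma feeds into). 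Under the quantile coupling that makes Lemma \ref{lem:u-ind} hold simultaneously in $m$ almost surely, this yields
\[
\frac{1}{m}\sum_{i=1}^m\Big(\frac{\Gamma_{i+1}}{\Gamma_{m+1}}\Big)^{-\rho}
\;=\; \frac{1}{m}\sum_{j=1}^m U_{(j,m)}^{-\rho} \;+\; \frac{1-U_{(1,m)}^{-\rho}}{m}.
\]

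Next I would use exchangeability of the order statistics to rewrite $\frac{1}{m}\sum_{j=1}^m U_{(j,m)}^{-\rho} = \frac{1}{m}\sum_{j=1}^m U_j^{-\rho}$, and since $E[U_1^{-\rho}]=\int_0^1 u^{-\rho}\,du = \frac{1}{1-\rho}<\infty$ for $\rho\in(0,1)$, Kolmogorov's SLLN delivers a.s.\ convergence of this average to $\frac{1}{1-\rho}$. The boundary contribution is handled by a Borel--Cantelli estimate on the minimum order statistic: since $m U_{(1,m)}\Rightarrow \mathrm{Exp}(1)$ and $\P(U_{(1,m)}\le m^{-1-\varepsilon})\le m\cdot m^{-1-\varepsilon}=m^{-\varepsilon}$ is summable along sparse subsequences, one gets $U_{(1,m)}\ge m^{-1-\varepsilon}$ eventually a.s., so $U_{(1,m)}^{-\rho}/m \le m^{\rho(1+\varepsilon)-1}\to 0$ for any $\varepsilon$ with $\rho(1+\varepsilon)<1$.

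The $\sup_{m\ge M}$ form then comes for free, since almost-sure convergence $X_m\to L$ is equivalent to $\sup_{m\ge M}|X_m-L|\to 0$ a.s.\ as $M\to\infty$. The main obstacle I anticipate is the integrability constraint that forces $\rho<1$ (the stated hypothesis $\rho>0$ must be tightened to $\rho\in(0,1)$ for the claimed limit to make sense), together with the tail control on $U_{(1,m)}^{-\rho}$, which cannot be obtained from a bounded-variable SLLN and requires the Borel--Cantelli argument above. Everything else---the gamma-to-uniform identification and the SLLN step---is routine once the correct exponent/limit matching is in place.
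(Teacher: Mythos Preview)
Your reduction via Lemma \ref{lem:u-ind} has a genuine gap at the coupling step. That lemma gives only a distributional identity for each fixed $m$; there is no single i.i.d.\ uniform sequence $U_1,U_2,\dots$ on the same probability space as the $\Gamma_i$'s for which $(\Gamma_1/\Gamma_{m+1},\dots,\Gamma_m/\Gamma_{m+1})=(U_{(1,m)},\dots,U_{(m,m)})$ holds almost surely simultaneously in $m$. (Already $m=1$ forces $U_1=\Gamma_1/\Gamma_2$, while $m=2$ forces $\min(U_1,U_2)=\Gamma_1/\Gamma_3$ and $\max(U_1,U_2)=\Gamma_2/\Gamma_3$, hence $\min(U_1,U_2)=U_1\max(U_1,U_2)$, a measure-zero constraint.) So the SLLN for $\tfrac{1}{m}\sum_j U_j^{-\rho}$ proves a.s.\ convergence of a quantity that is merely equal in law, for each $m$, to the $\Gamma$-average; transferring back yields convergence in probability, not the almost-sure statement the lemma asserts.

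The paper sidesteps this by working directly on the $\Gamma$-space and using dominated convergence instead of the SLLN. One writes the average as $\int_0^1 f_m(x)\,dx$ with $f_m(x)=(\Gamma_{[mx]+1}/\Gamma_{m+1})^{\rho}$, obtains $f_m(x)\to x^{\rho}$ pointwise for a.e.\ $\omega$ from Lemma \ref{lem:ratio-conv}, and uses $|f_m|\le 1$ (the ratio is $\le 1$ and the exponent is positive) to pass to the limit $\int_0^1 x^{\rho}\,dx=\tfrac{1}{1+\rho}$. A side remark on the sign issue you flagged: the printed limit $\tfrac{1}{1-\rho}$ is a typo; the application in Lemma \ref{lem:r-def} uses positive exponent $\rho$ and target $\tfrac{1}{1+\rho}$. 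With that reading the summands are bounded by $1$, so no integrability restriction $\rho<1$ and no Borel--Cantelli control of a minimum order statistic is needed.
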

\begin{proof} It is equivalent to show that, as $m\to\infty$,
	\label{lem:sum-conv}
	\begin{equation}
	\label{e:sum-rho-conv}
	\Big|\frac{1}{m}\sum_{i=1}^{m}{\Big(\frac{\Gamma_{i+1}}{\Gamma_{m+1}}\Big)}^{\rho}-\frac{1}{1+\rho}\Big|\stackrel{a.s.}{\longrightarrow}0.
	\end{equation}
	For a fixed $\omega \in \Omega$, let us define the following sequence of functions
	$$f_m(x)=\sum_{i=1}^m {(\Gamma_{i+1}/\Gamma_{m+1})}^{\rho}(\omega){\mathbf{1}}_{(\frac{i-1}{m},\frac{i}{m}]}(x), \hspace{5mm}x>0$$
	Suppose $x \in ((i-1)/m,i/m]$, then 
	\begin{equation}
	\label{e:dct}
	f_m(x)={(\Gamma_{[mx]+1}/\Gamma_{m+1})}^{-\rho}(\omega)=\Big(\frac{[mx]+1}{m}\Big)^{-\rho}\Big(\frac{\Gamma_{[mx]+1}/([mx]+1)}{\Gamma_{m}/m}\Big)^{\rho}(\omega) \rightarrow x^{-\rho}
	\end{equation}
	where the convergence follows from \eqref{e:gam-conv-1}. Moreover since $\Gamma_{[mx]+1}<\Gamma_m$ and $\rho<0$, therefore $|f_m(x)|\leq 1$, for all $x>0$. Thus by dominated convergence theorem,
	\begin{equation}
	\label{e:dct-1}
	\int_{0}^{1}f_m(x)dx=\frac{1}{m}\sum_{i=1}^m {(\Gamma_{i+1}/\Gamma_{m+1})}^{-\rho}(\omega) \rightarrow \int_0^1 x^{-\rho}dx=\frac{1}{1-\rho}
	\end{equation}
	Since \eqref{e:dct} hold for all $\omega \in \Omega$ with $P[\Omega]=1$, so does \eqref{e:dct-1}. This completes the proof.
\end{proof}

\subsection{Proofs for Section \ref{sec:trim-hill}}
\begin{lemma}
	\label{lem:blue} 
	If $E_i$'s $i=1,\cdots,n$ are i.i.d. observations from ${\rm Exp}(\xi)$, the best linear unbiased estimator (BLUE) of $\xi$ based on the order statistics, $E_{(1,n)}<\cdots<E_{(r,n)}$ is given by
	\begin{equation*} 
	\hat{\xi}=\frac{1}{r}\sum_{i=1}^{r-1} E_{(i,n)} +\frac{n-r+1}{r}E_{(r,n)}\end{equation*}
\end{lemma}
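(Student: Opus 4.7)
The plan is to reduce the problem to a constrained quadratic optimization via R\'enyi's representation (Lemma \ref{lem:renyi}), so that both the unbiasedness constraint and the variance become explicit linear/quadratic forms in the coefficients.

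First, write a generic linear estimator as $\widehat{\xi}=\sum_{i=1}^{r} a_i E_{(i,n)}$. Using R\'enyi's representation (after scaling the standard exponentials $E_j^*$ in Lemma \ref{lem:renyi} by $\xi$, since here the $E_i$'s are ${\rm Exp}(\xi)$), we have the joint equality in distribution
\begin{equation*}
\sum_{i=1}^{r} a_i E_{(i,n)} \stackrel{d}{=} \xi\sum_{i=1}^{r} a_i \sum_{j=1}^{i} \frac{E_j^*}{n-j+1}
= \xi\sum_{j=1}^{r} \frac{b_j}{n-j+1}\,E_j^*,
\end{equation*}
where $b_j := \sum_{i=j}^{r} a_i$. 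Because the $E_j^*$'s are i.i.d.\ standard exponentials with unit mean and unit variance, this immediately gives
\begin{equation*}
\mathbb{E}[\widehat{\xi}] = \xi\sum_{j=1}^{r} \frac{b_j}{n-j+1},\qquad
\operatorname{Var}(\widehat{\xi}) = \xi^2\sum_{j=1}^{r} \frac{b_j^{2}}{(n-j+1)^2}.
\end{equation*}

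Next, minimize $\sum_{j=1}^{r} b_j^{2}/(n-j+1)^2$ subject to $\sum_{j=1}^{r} b_j/(n-j+1) = 1$. This is a clean Lagrangian problem: differentiating $\sum_j b_j^{2}/(n-j+1)^2 - \lambda\bigl(\sum_j b_j/(n-j+1)-1\bigr)$ in $b_j$ yields $b_j = \lambda(n-j+1)/2$. Plugging into the constraint gives $\lambda = 2/r$, hence the optimal
\begin{equation*}
b_j^{*} = \frac{n-j+1}{r},\qquad j=1,\dots,r.
\end{equation*}

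Finally, invert the change of variables: $a_j = b_j^{*} - b_{j+1}^{*}$ for $j<r$, and $a_r = b_r^{*}$. A one-line calculation gives $a_j = 1/r$ for $j=1,\dots,r-1$ and $a_r = (n-r+1)/r$, which is exactly the claimed BLUE. The only potential obstacle is notational: one must be careful that R\'enyi's representation is being applied to the order statistics of the full sample of size $n$ (not a sub-sample), so that the denominators $n-j+1$ appear and not $r-j+1$; once this is kept straight, the rest is routine.
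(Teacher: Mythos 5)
Your proof is correct, and it takes the same first step as the paper (reduction to independent exponentials via R\'enyi's representation) but finishes differently. The paper, having written the general linear estimator as $\sum_{j=1}^r \delta_j E_j^*$ with $\delta_j := \sum_{i=j}^r \gamma_i/(n-j+1)$, immediately invokes the Lehmann--Scheff\'e theorem to assert that the sample mean $\frac1r\sum_j E_j^*$ is UMVUE for $\xi$ (since it is a complete sufficient statistic for an i.i.d.\ $\mathrm{Exp}(\xi)$ sample), and since the sample mean is itself linear, it must also be the best \emph{linear} unbiased one, giving $\delta_j=1/r$ directly. You instead run the constrained quadratic optimization explicitly: minimize $\sum_j b_j^2/(n-j+1)^2$ subject to $\sum_j b_j/(n-j+1)=1$ via Lagrange multipliers, obtaining $b_j^*=(n-j+1)/r$, which is exactly $\delta_j=1/r$ in the paper's notation. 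Your route is more elementary and self-contained---it avoids appealing to completeness/sufficiency machinery that is overkill for a purely linear problem---at the cost of a few extra lines; the paper's route is shorter but imports a heavier theorem. The inversion step $a_j=b_j^*-b_{j+1}^*$ (with $a_r=b_r^*$) matches the paper exactly, and your caution about using denominators $n-j+1$ rather than $r-j+1$ is well placed and correctly handled.
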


\begin{proof}
	Let $\widehat{\xi}=\sum_{i=1}^{r} \gamma_i E_{(i,n)}$ denote the BLUE of $\xi$. By Relation \eqref{e:renyi} in Lemma \ref{lem:renyi}, the BLUE can then be expressed as
	\begin{equation}
	\label{e:z-y-1}
	\hat{\xi}=\sum_{i=1}^{r} \gamma_i\sum_{j=1}^i \frac{E^*_j}{(n-j+1)}=\sum_{j=1}^{r} E^*_j \sum_{i=j}^{r} \frac{\gamma_i}{(n-j+1)}=:\sum_{j=1}^{r} E^*_j \delta_j
	\end{equation} 
	where the $E_j^*$ are i.i.d. from ${\rm Exp}(\xi)$. 
	
	For i.i.d. observations from ${\rm Exp}(\xi)$, the sample mean is the uniformly minimum variance unbiased estimator for $\xi$ (see Lehmann Scheffe Theorem, Theorem 1.11, page 88 in~\cite{lehmann:casella}). Thus $\delta_j=1/r$ yields the required best linear unbiased estimator. 
	
	Using the fact that $\sum_{i=j}^r\gamma_i=\delta_j(n-j+1)=(n-j+1)/r$, we obtain
	$$\gamma_{i}=\begin{cases}
	\frac{n-{r}+1}{r} & \text{ }  i=r\\                                                                                                                                                                                                       \frac{1}{{r}}& \text{ } i<r
	\end{cases}$$
	This completes the proof.
\end{proof}

\begin{lemma}
	\label{lem:g-asy}
	Suppose $g$ is $-\rho$-varying for $\rho\geq 0$ and $Y_{(n-k,n)}$ is the $(k+1)^{th}$ order statistic for $n$ observations from ${\rm Pareto}(1,1)$, then 
	\begin{equation}
	\label{e:g-asy}
	\frac{g(Y_{(n-k,n)})}{g(n/k)}\stackrel{P}{\longrightarrow}1
	\end{equation}
	provided $k\rightarrow \infty$, $n \rightarrow \infty$ and $k/n \rightarrow \infty$.
\end{lemma}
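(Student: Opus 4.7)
The plan is to reduce the statement to a standard fact about the extreme order statistics of a Pareto$(1,1)$ sample and then invoke the uniform convergence theorem for regularly varying functions.

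First I would represent the Pareto order statistics in terms of uniform order statistics: if $U_i := 1/Y_i$, then $U_1,\dots,U_n$ are i.i.d.\ Uniform$(0,1)$, and
\begin{equation*}
Y_{(n-k,n)} \stackrel{d}{=} \frac{1}{U_{(k+1,n)}},
\end{equation*}
where $U_{(1,n)}<\cdots<U_{(n,n)}$ are the uniform order statistics. Using the well-known facts $\mathbb{E}[U_{(k+1,n)}]=(k+1)/(n+1)$ and $\var(U_{(k+1,n)})=O(k/n^2)$, a direct second-moment computation (or Chebyshev's inequality) yields $nU_{(k+1,n)}/(k+1)\stackrel{P}{\to}1$ under $k\to\infty$, $n\to\infty$, $k/n\to 0$, and hence
\begin{equation*}
\frac{k}{n}\,Y_{(n-k,n)} \;\stackrel{P}{\longrightarrow}\;1.
\end{equation*}

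Next I would transfer this convergence through $g$ using the fact that $g$ is $-\rho$-varying. By the uniform convergence theorem for regularly varying functions (see, e.g., Theorem 1.5.2 in Bingham, Goldie and Teugels), for every $0<a<b<\infty$,
\begin{equation*}
\sup_{\lambda\in[a,b]}\left|\frac{g(\lambda t)}{g(t)}-\lambda^{-\rho}\right|\;\longrightarrow\;0,\qquad t\to\infty.
\end{equation*}
Setting $t=n/k$ (which tends to $\infty$) and $\lambda_n := Y_{(n-k,n)}/(n/k)$, the previous step shows $\lambda_n\stackrel{P}{\to}1$, so for any $\varepsilon\in(0,1)$ the event $\{\lambda_n\in[1-\varepsilon,1+\varepsilon]\}$ has probability tending to one. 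On this event the uniform bound gives
\begin{equation*}
\left|\frac{g(Y_{(n-k,n)})}{g(n/k)}-\lambda_n^{-\rho}\right|\le \sup_{\lambda\in[1-\varepsilon,1+\varepsilon]}\left|\frac{g(\lambda\,n/k)}{g(n/k)}-\lambda^{-\rho}\right|\;\longrightarrow\;0,
\end{equation*}
and since $\lambda_n^{-\rho}\stackrel{P}{\to}1$ by continuity, the conclusion $g(Y_{(n-k,n)})/g(n/k)\stackrel{P}{\to}1$ follows.

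The only genuinely delicate point is the passage from convergence in probability of the argument $\lambda_n\to 1$ to convergence in probability of $g(\lambda_n\cdot n/k)/g(n/k)\to 1$; a pointwise application of regular variation is not quite enough because $\lambda_n$ is random, which is why one needs the uniform convergence theorem on compact subsets of $(0,\infty)$ (equivalently, Potter's bounds). Everything else is routine.
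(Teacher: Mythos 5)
Your proof is correct and follows essentially the same route as the paper: both establish $\tfrac{k}{n}Y_{(n-k,n)}\stackrel{P}{\to}1$ via a standard order-statistics representation (the paper uses $Y_{(n-k,n)}\stackrel{d}{=}\Gamma_{n+1}/\Gamma_{k+1}$ and the law of large numbers, you use the uniform representation and Chebyshev) and then transfer this through the regularly varying function by the uniform convergence theorem on a compact neighborhood of $1$, after restricting to the high-probability event that the ratio lies in that neighborhood. The only cosmetic difference is that the paper factors $g(t)=t^{-\rho}l(t)$ and applies the uniform convergence theorem to the slowly varying part $l$, whereas you apply it directly to $g$ with limit $\lambda^{-\rho}$; both are valid.
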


\begin{proof}
	Since $g$ is $-\rho$ varying, $g$ may be expressed as $g(t)=t^{-\rho}l(t)$, for some slowly varying function $l(\cdot)$. Thus, we have
	$$\frac{g(Y_{(n-k,n)})}{g(n/k)}={\Big(\frac{Y_{(n-k,n)}}{n/k}\Big)}^{-\rho}\frac{l(Y_{(n-k,n)})}{l(n/k)}$$
	From \eqref{e:Pareto-order}, we have $Y_{(n-k,n)}\stackrel{d}{=}\Gamma_{n+1}/\Gamma_{k+1}$ and therefore, by weak law of large numbers, we have $Y_{(n-k,n)}/(n/k)\stackrel{P}{\longrightarrow}1$.
	
	Thus to prove \eqref{e:g-asy}, it suffices to show $l(Y_{(n-k,n)})/l(n/k)\stackrel{P}{\longrightarrow}1$. In this direction, observe that for some $\delta>0$, we have
	\begin{eqnarray*}
		\label{e:conv-p}
		P\Bigg[\Big|\frac{l(Y_{(n-k,n)})}{l(n/k)}-1\Big|>\varepsilon\Bigg]&\leq&P\Bigg[\Big|\frac{l(Y_{(n-k,n)})}{l(n/k)}-1\Big|>\varepsilon, \Big|\frac{Y_{(n-k,n)}}{n/k}-1\Big|\leq \delta\Bigg]+P\Bigg[\Big|\frac{Y_{(n-k,n)}}{n/k}-1\Big|>\delta\Bigg]\\
		&\leq& P\Bigg[ \sup_{\lambda \in [1-\delta,1+\delta]}\Big|\frac{l(\lambda n/k)}{l(n/k)}-1\Big|>\varepsilon\Bigg]+P\Bigg[\Big|\frac{Y_{(n-k,n)}}{n/k}-1\Big|>\delta\Bigg]
	\end{eqnarray*}
	The first term on the right hand side  goes to 0 by Theorem 1.5.2 on page 22 in \cite{bingham1989regular}. The second term goes to since $Y_{(n-k,n)}/(n/k)\stackrel{P}{\longrightarrow}1$.	
\end{proof}

\begin{proof}[ {\bf Proof of Proposition \ref{prop:xi-opt}}]
	Observe that $X_i$'s can be alternatively written as
	$$X_i=\sigma U_{i}^{-\xi}, \hspace{3mm}i=1, \cdots, n,$$
	where $U_i$'s are i.i.d. $U(0,1)$. Therefore by Relation \eqref{e:gam-u} in Lemma \ref{lem:u-ind}, we have
	\begin{equation}
	\label{e:Pareto-order}
	(X_{(n,n)}, \cdots, X_{(1,n)})=\sigma(U_{(1,n)}^{-\xi},\cdots,U_{(n,n)}^{-\xi})\stackrel{d}{=}\sigma\Bigg( {\Big(\frac{\Gamma_1}{\Gamma_{n+1}}\Big)}^{-\xi},\cdots, {\Big(\frac{\Gamma_n}{\Gamma_{n+1}}\Big)}^{-\xi}\Bigg)
	\end{equation}
	where $X_{(n,n)}>\cdots>X_{(1,n)}$ are the order statistics for the $X_i$'s. Hence, for all $1\leq k \leq n-1$, we have
	\begin{eqnarray}
	\label{e:log-u}
	\Bigg(\log \Big(\frac{X_{(n,n)}}{X_{(n-k,n)}} \Big),\cdots,\log \Big(\frac{X_{(k)}}{X_{(n-k,n)}} \Big)\Bigg)  &\stackrel{d}{=}& -\xi\Bigg(\log \Big(\frac{\Gamma_1}{\Gamma_{k+1}} \Big),\cdots,\log \Big(\frac{\Gamma_k}{\Gamma_{k+1}} \Big)\Bigg)\\\nonumber
	&\stackrel{d}{=}&-\xi(\log U_{(1,k)}, \cdots, \log U_{(k,k)}),
	\end{eqnarray}
	where the $U_{(i,k)}$'s are the order statistics for a sample of $k$ i.i.d. $U(0,1)$ and the last equality in \eqref{e:log-u} follows from Relation \eqref{e:gam-u} in Lemma \ref{lem:u-ind}. Since negative log transforms of $U(0,1)$ are standard exponentials, one can define $E_{(i,k)}$, $i=1,\cdots,k$ as
	\begin{equation}
	\label{e:log-exp}
	\Bigg(\log \Big(\frac{X_{(n,n)}}{X_{(n-k,n)}} \Big),\cdots,\log \Big(\frac{X_{(k)}}{X_{(n-k,n)}} \Big)\Bigg) =: (E_{(k,k)}, \cdots,  E_{(1,k)})
	\end{equation}
	such that the $E_{(i,k)}$'s are distributed as order statistics of $k$ i.i.d. exponentials with mean $\xi$, henceforth denoted by ${\rm Exp}(\xi)$.  One can thereby simplify $\widehat{\xi}^{\rm trim}_{k_0,k}$ in \eqref{e:xi-trimmed} as
	\begin{equation}
	\label{e:blue-1}
	\widehat{\xi}^{\rm trim}_{k_0,k} =  \sum_{i=k_0+1}^k c_{k_0,k}(n-i+1,n)E_{(k-i+1,k)}=\sum_{i=1}^{k-k_0}\delta_i E_{(i,k)}
	\end{equation}
	where $\delta_i=c_{k_0,k}(k-i+1)$. The optimal choice of weights $\delta_i$'s which produce the best linear unbiased estimator (BLUE) for $\xi$ is obtained using Lemma \ref{lem:blue} as:
	\begin{equation}
	\label{e:blue-2}
	\delta^{\rm opt}_i=
	\begin{cases}
	\frac{1}{k-k_0}\hspace{5mm} i=1,\cdots, k-k_0-1\\
	\frac{k_0+1}{k-k_0}\hspace{5mm} i=k-k_0
	\end{cases}
	\end{equation}
	
	Rewriting $E_{(i,k)}$'s in terms of $X_{(n-i+1,n)}$'s as in \eqref{e:log-exp} completes the proof.
\end{proof}

\begin{proof}[{\bf Proof of Proposition \ref{prop:xi-exp}}]
	From \eqref{e:blue-1} and \eqref{e:blue-2} in Proposition \ref{prop:xi-opt}, we have
	\begin{equation}
	\label{e:xi-jt}
	{\Big\{\widehat{\xi}_{k_0,k},\ k_0=0,\ldots,k-1 \Big\}}
	= {\Big\{\frac{1}{k-k_0}\sum_{i=1}^{k-k_0-1}E_{(i,k)}+\frac{k_0+1}{k-k_0}E_{(k-k_0,k)},\ k_0=0,\ldots,k-1 \Big\}}
	\end{equation}
	Using Relation \eqref{e:renyi} from Lemma \ref{lem:renyi}, for all $k_0=0, 1, \cdots, k-1$, we have
	\begin{equation}
	\label{e:xi-simp}
	\widehat{\xi}_{k_0,k}=\frac{1}{k-k_0}\sum_{i=1}^{k-k_0-1}\sum_{j=1}^{i}\frac{E^*_j}{(k-j+1)}+\frac{k_0+1}{k-k_0}\sum_{j=1}^{k-k_0}\frac{E^*_j}{(k-j+1)}\\
	\end{equation}
	Interchanging the order of summation in the first term in the right hand side of \eqref{e:xi-simp}, for $k_0=0,1, \cdots, k-1$, we obtain
	\begin{eqnarray*}
		\widehat{\xi}_{k_0,k}
		&=&\sum_{j=1}^{k-k_0-1}\frac{E^*_j}{k-j+1}\sum_{i=j}^{k-k_0-1}\frac{1}{k-k_0}+\frac{k_0+1}{k-k_0}\sum_{j=1}^{k-k_0}\frac{E^*_j}{(k-j+1)}\\
		&=&\sum_{j=1}^{k-k_0-1}\frac{E^*_j}{k-j+1}\left(\sum_{i=j}^{k-k_0-1}\frac{1}{k-k_0}+\frac{k_0+1}{k-k_0}\right)+\frac{E^*_{k-k_0}}{k-k_0}\\
		&=&\sum_{j=1}^{k-k_0-1}\frac{E^*_j}{k-j+1}\frac{(k-j+1)}{k-k_0}+\frac{E^*_{k-k_0}}{k-k_0}\\
		&=&\frac{1}{k-k_0}\sum_{j=1}^{k-k_0}E^*_j, \hspace{5mm} 
	\end{eqnarray*}
	Since $E_j^*$, $j=1, \cdots, k-k_0$ are rescaled i.i.d. standard exponentials, Relation \eqref{e:xi-jt-big} follows. 
	
	The covariance structure in \eqref{e:covar} readily follows from \eqref{e:xi-jt-big} and the fact that 
	$${\rm Cov}\Big(\frac{\Gamma_i}{i}, \frac{\Gamma_j}{j}\Big)=\frac{i \wedge j}{ij}=\frac{1}{i \vee j}, \hspace{5mm} i, j=0,1,\cdots, k$$
	where $\vee$ denotes the max operator. This completes the proof.
\end{proof}

\subsection{Proofs for Section \ref{sec:optimal}} \label{sec:proofs-sec3}

\begin{proof}[{\bf Proof of Theorem \ref{thm:umvue-p}}]
	Suppose for the moment $\sigma$ is known and consider the class of statistics:
	$$	{\cal{U}}^{\sigma}_{k_0}=\left\{T=T(X_{(n-k_0,n)},\cdots,X_{(1,n)}):\: \mathbb{E}(T)=\xi,\: X_1, \cdots, X_n \stackrel{i.i.d.}{\sim} {\rm Pareto}(\sigma,{\xi})\right\}.$$
	Since $\sigma$ is no longer a parameter, every statistic in ${\cal{U}}_{k_0}^{\sigma}$ can be equivalently written as a function of $\log(X_{(n-i+1,n)}/\sigma)$, $i=k_0+1, \cdots, n$.
	Therefore, the set of random variables in ${\cal{U}}_{k_0}^{\sigma}$ equals 
	\begin{equation*}
	\label{e:u-sigma-0}
	{\cal{U}}_{k_0}^{\sigma}=\left\{S=S\left(\log\Big(\frac{X_{(n-k_0,n)}}{\sigma}\Big),\cdots,\log\Big(\frac{X_{(1,n)}}{\sigma}\Big)\right): \: \mathbb{E} (S)=\xi, \: X_1, \cdots, X_n \stackrel{i.i.d.}{\sim} {\rm Pareto}(\sigma,{\xi})\right\}.
	\end{equation*}
	Since $X_i$'s follow ${\rm Pareto(\sigma,\xi)}$, we have $\log(X_i/\sigma) \sim {\rm Exp}(\xi)$ and therefore 
	\begin{equation*}
	\left(\log\Big(\frac{X_{(n-k_0,n)}}{\sigma}\Big),\cdots,\log\Big(\frac{X_{(1,n)}}{\sigma}\Big) \right) \stackrel{d}{=}\left(E_{(n-k_0,n)}, \cdots,E_{(1,n)}\right),
	\end{equation*} 
	where $E_{(1,n)} \leq \cdots \leq E_{(n,n)}$ are the order statistics of $n$ i.i.d. observations from ${\rm Exp}(\xi)$. Therefore
	\begin{equation}
	\label{e:u-sigma-1}
	{\cal{U}}_{k_0}^{\sigma}\stackrel{d}{=}\left\{S=S(E_{(n-k_0,n)}, \cdots,E_{(1,n)}):\: \mathbb{E}(S)=\xi, \:E_1, \cdots, E_n \stackrel{i.i.d.}{\sim} {\rm Exp}(\xi)\right\},
	\end{equation}
	where we observe that the distribution of the $E_i$'s does not depend on $\sigma$.
	
	Using Relation \eqref{e:renyi} from Lemma \ref{lem:renyi}, we have
	\begin{eqnarray*}
		S(E_{(n-k_0,n)}, \cdots,E_{(1,n)})&=&S\Big(\sum_{j=1}^{n-k_0} \frac{E^*_j}{n-j+1}, \cdots, \sum_{j=1}^{n-k} \frac{E^*_j}{n-j+1}\Big)\\
		&=&R(E^*_1, \cdots,E^*_{n-k_0})	
	\end{eqnarray*}
	Using this on the right hand side of \eqref{e:u-sigma-1}, we get
	\begin{equation}
	\label{e:umvue-1}
	{\cal{U}}_{k_0}^{\sigma}
	\stackrel{d}{=}{\cal{V}}_{k_0}:=\left\{R=R(E^*_1, \cdots,E^*_{n-k_0}): \: \mathbb{E}(R)=\xi,\: E^*_{1}, \cdots, E^*_{n-k_0} \stackrel{i.i.d.}{\sim} {\rm Exp}(\xi)\right\}.
	\end{equation}		
	where the first equality is in the sense of finite dimensional distributions. 
	
	Therefore, $L=\inf_{T \in {\cal{U}}_{k_0}^{\sigma}} {\rm Var}(T)=\inf_{R \in {\cal{V}}_{k_0}} {\rm Var}(R)$. The quantity $L$ can be easily obtained as
	\begin{equation}L={\rm Var}({\overline{E}}^*_{n-k_0})=\frac{\xi^2}{n-k_0}\end{equation}
	since the sample mean, ${\overline{E}}^*_{n-k_0}=\sum_{i=1}^{n-k_0} E^*_i/(n-k_0)$ is uniformly minimum variance estimator (UMVUE), for $\xi$ among the class described by ${\cal{V}}_{k_0}$. This follows from the fact that ${\overline{E}}^*_{n-k_0}$ is an unbiased and complete sufficient statistic for $\xi$ (see Lehmann Scheffe Theorem, Theorem 1.11, page 88 in~\cite{lehmann:casella}).
	
	To complete the proof, observe that every statistic, $T$ in ${\cal{U}}_{k_0}$ is an unbiased estimator of $\xi$ for any arbitrary choice of $\sigma$. This implies that $T \in {\cal{U}}_{k_0}^{\sigma}$ and therefore $L\leq\inf_{T \in {\cal{U}}_{k_0}} {\rm{Var}}(T)$, which yields the lower bound in \eqref{e:opt-var}.
	
	For the upper bound in \eqref{e:opt-var}, we observe that $\widehat{\xi}_{k_0,n-1} \in {\cal{U}}_{k_0}$, which in view of Proposition \ref{prop:xi-exp} implies $$\inf_{T \in {\cal{U}}_{k_0}} {\rm{Var}}(T) \leq {\rm Var}(\widehat{\xi}_{k_0,n-1})= \frac{\xi^2}{n-k_0-1}.$$
	This completes the proof.
\end{proof}

We shall next present the proof of Theorem \ref{prop:E-conv}. To begin with we state the following three lemmas which shall be used as a part of the proof.

\begin{lemma}
	\label{lem:r-def}
	\begin{equation}
	\label{e:R-conv-2}
	\max_{0 \leq k_0 <k}\Big|\frac{k-k_0}{kg(Y_{(n-k,n)})}S_{k_0,k}+\frac{c}{1+\rho}\left(\frac{k_0}{k}\right)^{1+\rho}-\frac{c}{1+\rho}\Big|\stackrel{P}{\longrightarrow}0.
	\end{equation}
	where $S_{k_0,k}$ is defined as
	\begin{equation}
	\label{e:s-def}
	S_{k_0,k}:=\frac{cg(Y_{(n-k,n)})}{k-k_0}\Big((k_0+1) \int_{1}^{Y_{(n-k_0,n)}/Y_{(n-k,n)}} \nu^{-\rho-1} d\nu +\sum_{i=k_0+2}^{k} \int_{1}^{Y_{(n-i+1,n)}/Y_{(n-k,n)}} \nu^{-\rho-1} d\nu\Big).
	\end{equation}
	where $Y_i$'s are n i.i.d observations from Pareto(1,1)
\end{lemma}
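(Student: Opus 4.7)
The plan is to reduce the claim to Riemann-sum style approximations using the distributional representation of Pareto order statistics in terms of Gamma partial sums. First I would use \eqref{e:Pareto-order} to write $Y_{(n-i+1,n)}/Y_{(n-k,n)}=\Gamma_{k+1}/\Gamma_i$, and in the main case $\rho>0$ evaluate the inner integral as $\int_1^{\Gamma_{k+1}/\Gamma_i}\nu^{-\rho-1}d\nu=\rho^{-1}(1-(\Gamma_i/\Gamma_{k+1})^\rho)$. Substituting into \eqref{e:s-def} and collecting terms gives
$$\frac{k-k_0}{k\,g(Y_{(n-k,n)})}\,S_{k_0,k}=\frac{c}{\rho}\bigl[1-A_{k_0,k}-B_{k_0,k}\bigr],$$
with $A_{k_0,k}:=k^{-1}(k_0+1)(\Gamma_{k_0+1}/\Gamma_{k+1})^\rho$ and $B_{k_0,k}:=k^{-1}\sum_{i=k_0+2}^{k}(\Gamma_i/\Gamma_{k+1})^\rho$. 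The elementary identity $\rho^{-1}[1-a^{1+\rho}-(1-a^{1+\rho})/(1+\rho)]=(1-a^{1+\rho})/(1+\rho)$ at $a=k_0/k$ reduces the lemma to the two uniform convergences in probability
$$\max_{0\le k_0<k}\bigl|A_{k_0,k}-(k_0/k)^{1+\rho}\bigr|\to 0,\qquad\max_{0\le k_0<k}\bigl|B_{k_0,k}-(1-(k_0/k)^{1+\rho})/(1+\rho)\bigr|\to 0.$$

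To prove both uniformly I would split the range of $k_0$ at a threshold $M=M(k)\to\infty$ with $M/k\to 0$, e.g., $M=\lfloor\sqrt{k}\rfloor$. For $k_0\le M$, the trivial bound $(\Gamma_{k_0+1}/\Gamma_{k+1})^\rho\le 1$ makes both $A_{k_0,k}$ and its target $(k_0/k)^{1+\rho}$ of order $O(M/k)=o(1)$, and the first $M$ summands of $B_{k_0,k}$ contribute at most $M/k=o(1)$ against the corresponding piece of the target integral. For $k_0>M$, Lemma \ref{lem:ratio-conv} delivers the uniform almost-sure relation $(\Gamma_j/\Gamma_{k+1})^\rho=(j/(k+1))^\rho(1+o(1))$ for $j,k+1\ge M$, which substitutes directly into $A_{k_0,k}$ to give $A_{k_0,k}=(k_0/k)^{1+\rho}+o_p(1)$ uniformly, and reduces $B_{k_0,k}$, up to an $o_p(1)$ remainder, to the deterministic Riemann sum $k^{-1}\sum_{i=k_0+2}^k(i/k)^\rho$. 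The latter converges to $\int_{k_0/k}^1 x^\rho\,dx=(1-(k_0/k)^{1+\rho})/(1+\rho)$ uniformly in $k_0$ because $x\mapsto x^\rho$ is bounded and monotone on $[0,1]$, with per-step discrepancy controlled by its modulus of continuity.

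The boundary case $\rho=0$ is parallel after replacing the power integrals by $\log x$: Lemma \ref{lem:ratio-conv} gives $\log(\Gamma_{k+1}/\Gamma_j)=\log((k+1)/j)+o_p(1)$ uniformly for $j\ge M$, the Riemann sum $k^{-1}\sum_{i=k_0+2}^{k}\log(k/i)$ converges to $-\int_{k_0/k}^1\log x\,dx=(1-k_0/k)+(k_0/k)\log(k_0/k)$, and the boundary term $k^{-1}(k_0+1)\log(\Gamma_{k+1}/\Gamma_{k_0+1})$ contributes $-(k_0/k)\log(k_0/k)+o_p(1)$, cancelling the logarithmic piece to produce the target $c(1-k_0/k)$. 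I expect the main technical obstacle to be maintaining uniformity in $k_0$ across the full range $[0,k-1]$, particularly for small $k_0$ where the Gamma-ratio law of large numbers does not apply termwise; this is resolved precisely by the two-regime split above, with crude magnitude bounds handling $k_0\le M$ and the uniform convergence of Lemma \ref{lem:ratio-conv} covering $k_0>M$.
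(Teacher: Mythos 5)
Your decomposition is algebraically correct (the identity $\frac{1}{\rho}\bigl[1-a^{1+\rho}-\frac{1-a^{1+\rho}}{1+\rho}\bigr]=\frac{1-a^{1+\rho}}{1+\rho}$ does the job), and the two-regime split at $M=M(k)$ with crude magnitude bounds for $k_0\le M$ and Lemma \ref{lem:ratio-conv} plus a deterministic Riemann-sum estimate for $k_0>M$ is sound. However, your route differs from the paper's in two respects. For $\rho>0$ the paper does not keep the sum $\sum_{i=k_0+2}^k$ directly; it re-telescopes against the \emph{full} sum $\sum_{i=1}^k$, producing a $k_0$-\emph{independent} term $A_k$ (handled once by Lemma \ref{lem:gam-int}) plus a $k_0$-dependent ``excess'' $B_{k_0,k}$ which it shows goes to $0$ uniformly; you instead carry two $k_0$-dependent Riemann sums and prove two uniform limits, which is comparable in effort but requires stating the uniform Riemann-sum estimate explicitly. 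The sharper divergence is at $\rho=0$: the paper exploits the exact distributional identity $\widehat{\xi}^{**}_{k_0,k}\stackrel{d}{=}\Gamma_{k-k_0}/(k-k_0)$ from Proposition \ref{prop:xi-exp}, collapsing the whole expression to $c(\Gamma_{k-k_0}/k-(k-k_0)/k+1)$ and hence to a uniform LLN for $\Gamma_{k-k_0}/k$ --- no cancellation of any kind. Your $\rho=0$ argument instead produces a genuine cancellation of $\pm(k_0/k)\log(k_0/k)$ between the boundary term and the Riemann sum, and the small-$k_0$, small-$i$ range requires extra care because the summands $\log(\Gamma_{k+1}/\Gamma_i)$ (and $\log\Gamma_{k_0+1}$ in the boundary term) are not uniformly bounded: one must bound $\max_{i\le M}|\log\Gamma_i|$ and use $M\log k/k\to 0$, a step your sketch glosses over but which does close with $M=\lfloor\sqrt{k}\rfloor$. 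Both approaches are valid; the paper's is cleaner and cancellation-free for $\rho=0$, while yours is more uniform across the two cases at the cost of handling unbounded logarithmic terms.
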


\begin{proof}
	The proof of \eqref{e:R-conv-2} involves two cases:  $\rho>0$ and $\rho=0$.\vspace{2mm}
	
	\noindent\textbf{Case $\rho>0$:} Using the expression of $S_{k_0,k}$ in \eqref{e:s-def}, we get
	
	\begin{eqnarray}\label{e:lem:r-def-1}
	\frac{k-k_0}{kg(Y_{(n-k,n)})} S_{k_0,k} &=&-\frac{c}{k\rho}\Bigg((k_0+1)\Big(\frac{Y_{(n-k_0,n)}}{Y_{(n-k,n)}}\Big)^{-\rho}+\sum_{i=k_0+2}^{k}\Big(\frac{Y_{(n-i+1,n)}}{Y_{(n-k,n)}}\Big)^{-\rho}-k\Bigg) \nonumber\\
	&=&\frac{c}{k\rho}\sum_{i=1}^{k_0}
	\left\{\Big(\frac{Y_{(n-i+1,n)}}{Y_{(n-k,n)}}\Big)^{-\rho}-\Big(\frac{Y_{(n-k_0,n)}}{Y_{(n-k,n)}}\Big)^{-\rho}\right\} \nonumber\\
	& & -\frac{c}{k\rho}\sum_{i=1}^{k}\left\{\Big(\frac{Y_{(n-i+1,n)}}{Y_{(n-k,n)}}\Big)^{-\rho}-1\right\}
	\end{eqnarray}
	Expressing the order statistics of Pareto in terms of Gamma random variables as in \eqref{e:Pareto-order}, we get
	\small
	\begin{equation*}
	\frac{k-k_0}{kg(Y_{(n-k,n)})} S_{k_0,k}+c_{1+\rho}{\left(\frac{k_0}{k}\right)}^{1+\rho}\stackrel{d}{=}
	\underbrace{c_{1+\rho}{\left(\frac{k_0}{k}\right)}^{1+\rho}+\frac{c_\rho}{k}\sum_{i=1}^{k_0}\Big\{\Big(\frac{\Gamma_{i+1}}{\Gamma_{k+1}}\Big)^{\rho}-\Big(\frac{\Gamma_{k_0+1}}{\Gamma_{k+1}}\Big)^{\rho}\Big\}}_{B_{k_0,k}}-\underbrace{\frac{c_\rho}{k}\sum_{i=1}^{k}\Big\{\Big(\frac{\Gamma_{i+1}}{\Gamma_{k+1}}\Big)^{\rho}-1\Big\}}_{A_k}
	\end{equation*}
	\normalsize with $c_{t}=c/t$. 
	
	\noindent To prove \eqref{e:R-conv-2}, we first show that $\max_{0 \leq k_0 <k}|A_k+c/(1+\rho)| \stackrel{a.s.}{\longrightarrow}0.$ For this \eqref{e:sum-rho-conv}, we have $|(1/k)\sum_{i=1}^{k}(\Gamma_{i+1}/\Gamma_{k+1})^{\rho}-1/(1+\rho)| \stackrel{a.s.}{\longrightarrow} 0 $. Therefore for any $\omega \in \Omega$ with $P[\Omega]=1$,  
	$$\Big|A_k(\omega)+\frac{c}{1+\rho}\Big|=\Big|\frac{c}{\rho k}\sum_{i=1}^k\Big(\frac{\Gamma_{i+1}}{\Gamma_{k+1}}\Big)^{\rho}(\omega)-\frac{c}{\rho}+\frac{c}{1+\rho}\Big|=\Big|\frac{c_{\rho}}{k}\sum_{i=1}^k\Big(\frac{\Gamma_{i+1}}{\Gamma_{k+1}}\Big)^{\rho}(\omega)-\frac{c_\rho}{1+\rho}\Big|{\rightarrow} 0$$
	We next show that $\max_{0 \leq k_0 <k}B_{k_0,k} \stackrel{a.s.}{\longrightarrow}0$.  For this observe that for any $\omega \in \Omega$,
	\begin{eqnarray}
	\label{e:b-1}
	\max_{0 \leq k_0 <M}B_{k_0,k}(\omega)&\leq& \max_{0 \leq k_0 <M} \left\{c_{1+\rho}{\left(\frac{k_0}{k}\right)}^{1+\rho}+\frac{c_\rho}{k}\sum_{i=1}^{k_0}\Big|\Big(\frac{\Gamma_{i+1}}{\Gamma_{k+1}}\Big)^{\rho}(\omega)-\Big(\frac{\Gamma_{k_0+1}}{\Gamma_{k+1}}\Big)^{\rho}(\omega)\Big|\right\}\\\nonumber
	&\leq&  \max_{0 \leq k_0 <M}\left\{c_{1+\rho}{\left(\frac{k_0}{k}\right)}^{1+\rho}+c_{\rho}\frac{2k_0}{k}\right\}\hspace{2mm}(\textmd{since ${(\Gamma_i /\Gamma_{k+1})}^{\rho} \leq 1$, $1\leq i\leq k$,  $\rho>0$})\\\nonumber
	&\leq& \frac{c^{1+\rho}M^{1+\rho}+2c_\rho M}{k} =\frac{B_{0M}}{k}
	\end{eqnarray}
	Additionally,
	\begin{eqnarray}
	\label{e:b-2}
	\max_{M \leq k_0 <k} B_{k_0,k}(\omega)&\leq& \max_{M \leq k_0 <k}  \left|c_{1+\rho}{\left(\frac{k_0}{k}\right)}^{1+\rho}+\frac{c_\rho}{k}\sum_{i=1}^{k_0}\Big\{\Big(\frac{\Gamma_{i+1}}{\Gamma_{k+1}}\Big)^{\rho}(\omega)-\Big(\frac{\Gamma_{k_0+1}}{\Gamma_{k+1}}\Big)^{\rho}(\omega)\Big\}\right|\\\nonumber
	&\leq&  \max_{M \leq k_0 <k}{\left(\frac{k_0}{k}\right)}^{1+\rho}\left|\frac{c_{1+\rho}}{c_\rho}+{\left(\frac{k_0}{k}\right)}^{\rho}\frac{1}{k_0}\sum_{i=1}^{k_0}\Big\{\Big(\frac{\Gamma_{i+1}}{\Gamma_{k+1}}\Big)^{\rho}(\omega)-\Big(\frac{\Gamma_{k_0+1}}{\Gamma_{k+1}}\Big)^{\rho}(\omega)\Big\}\right|\\\nonumber
	&\leq & \max_{M \leq k_0 <k}\Bigg|\frac{\rho}{1+\rho}+\Big(\frac{\Gamma_{k_0+1}/k_0}{\Gamma_{k+1}/k}\Big)^{\rho}(\omega)\Big\{\underbrace{\frac{1}{k_0}\sum_{i=1}^{k_0}\Big(\frac{\Gamma_{i+1}}{\Gamma_{k_0+1}}\Big)^{\rho}(\omega)}_{C_{k_0}(\omega)}-1\Big\}\Bigg|\\\nonumber
	&=&\max_{M \leq k_0 <k}\left|\frac{\rho}{1+\rho}+(C_{k_0}(\omega)-1)+(C_{k_0}(\omega)-1)\Big\{\Big(\frac{\Gamma_{k_0+1}/k_0}{\Gamma_{k+1}/k}\Big)^{\rho}-1\Big\}\right|\nonumber
	\end{eqnarray}

	Since $\Gamma_{i+1}<\Gamma_{k_0+1}$ and $\rho>0$, thereby $|C_{k_0}|<1$. This allows us to simplify \eqref{e:b-2} as
	\begin{eqnarray*}
		\max_{M \leq k_0 <k} B_{k_0,k}(\omega)&\leq&\underbrace{ \sup_{M \leq k_0 }\left|C_{k_0}(\omega)-\frac{1}{1+\rho}\right|}_{B_{1M}(\omega)}+2\underbrace{\sup_{M \leq k_0,k}\left|\Big(\frac{\Gamma_{k_0+1}/k_0}{\Gamma_{k+1}/k}\Big)^{\rho}(\omega)-1\right|}_{B_{2M}(\omega)}
	\end{eqnarray*}
	Thus $$\max_{0 \leq k_0 <k}B_{k_0,k}(\omega) \leq \frac{B_{0M}}{k}+B_{1M}(\omega)+B_{2M}(\omega).$$
	Taking $\limsup$ w.r.t to $k$ on both sides we get
	\begin{equation}
	\label{e:k-sup}
	\limsup_{k \rightarrow \infty}\max_{0 \leq k_0 <k}B_{k_0,k}(\omega)\leq B_{1M}(\omega)+B_{2M}(\omega)
	\end{equation}
	Using Lemmas  \ref{lem:sum-conv} and \ref{lem:ratio-conv} shows that $B_{1M}(\omega) \rightarrow 0$  and $B_{2M}(\omega) \rightarrow 0$ for all $\omega \in \Omega$ with $P[\Omega]=1$.
	
	Thus taking  $\limsup$ w.r.t $M$ on both sides of \eqref{e:k-sup} completes the proof for $\rho<0$.
	\vspace{2mm}
	
	\noindent\textbf{Case $\rho=0$:}  Using the expression of $S_{k_0,k}$ in \eqref{e:s-def}, we get
	\begin{eqnarray}\label{e:rho=0-case}
	\frac{k-k_0}{kg(Y_{(n-k,n)})} S_{k_0,k}+\frac{ck_0}{k} &=&\frac{c}{k}\Bigg((k_0+1)\log \Big(\frac{Y_{(n-k_0,n)}}{Y_{(n-k,n)}}\Big)+\sum_{i=k_0+2}^{k}\log\Big(\frac{Y_{(n-i+1,n)}}{Y_{(n-k,n)}}\Big)\Bigg)+\frac{ck_0}{k}\nonumber\\
	&\stackrel{d}{=}&\frac{c(k-k_0)}{k}\widehat{\xi}^{**}_{k_0,k}+\frac{ck_0}{k}\nonumber\\
	&\stackrel{d}{=}&c\Big(\frac{\Gamma_{k-k_0}}{k}-\frac{k-k_0}{k}+1\Big)
	\end{eqnarray}
	where $\widehat{\xi}^{**}_{k_0,k}$ is the trimmed Hill estimator in \eqref{e:xi-opt} with $X_i$'s replaced by the i.i.d. ${\rm Pareto}(1,1)$.
	
	Thus to prove \eqref{e:R-conv-2}, it suffices to show $\max_{0\leq k_0<k}|\Gamma_{k-k_0}-(k-k_0)|/k\stackrel{a.s.}{\longrightarrow}0$. For every $\omega$ in $\Omega$ with $P[\Omega]=1$, we have
	\begin{eqnarray}
	\label{e:max-gam}
	\max_{0\leq k_0<k}\frac{|\Gamma_{k-k_0}(\omega)-(k-k_0)|}{k}&=&\max_{0\leq k_0<k}\frac{(k-k_0)}{k}\Big|\frac{\Gamma_{k-k_0}}{k-k_0}(\omega)-1\Big|\\\nonumber
	&\leq&\frac{M}{k}\underbrace{\max_{0\leq k-k_0<M}\Big|\frac{\Gamma_{k-k_0}}{k-k_0}(\omega)-1\Big|}_{F_{1M}(\omega)}+\underbrace{\sup_{k-k_0\geq M}\Big|\frac{\Gamma_{k-k_0}}{k-k_0}(\omega)-1\Big|}_{F_{2M}(\omega)}
	\end{eqnarray} 
	Observe that by the SLLN, $|\Gamma_n/n-1| \stackrel{a.s.}{\longrightarrow}0$. Therefore  $\sup_{n}|\Gamma_n(\omega)/n-1|$ is bounded for all $\omega \in \Omega$ with $P[\Omega]=1$. This implies $F_{1M}(\omega) \leq  \sup_{n}|\Gamma_n(\omega)/n-1|$ is bounded. 
	
	Thus taking $\limsup$ with respect to $k$ on both sides of \eqref{e:max-gam} we get 
	$$\limsup_{k \rightarrow \infty}\max_{0\leq k_0<k}\frac{|\Gamma_{k-k_0}(\omega)-(k-k_0)|}{k}\leq F_{2M}(\omega)$$
	Taking $\lim_{M\rightarrow \infty}$ on both sides and using \eqref{e:gam-conv-1}, the proof follows.
\end{proof}

\begin{lemma}
	\label{lem:s-def}
	Assumption \eqref{e:SR2} imply
	\begin{equation}
	\label{e:rem-1}
	\max_{0\leq k_0 \leq k}\Big(	\frac{k-k_0}{kg(Y_{(n-k,n)})}|R_{k_0,k}-S_{k_0,k}|\Big)\stackrel{P}{\longrightarrow}0
	\end{equation}
	where $R_{k_0,k}$ and $S_{k_0,k}$ are defined in \eqref{e:r-def} and \eqref{e:s-def}, respectively.
\end{lemma}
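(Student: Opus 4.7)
The plan is to reduce $R_{k_0,k}-S_{k_0,k}$ to a sum of per-term remainders that are controlled uniformly by the second order condition \eqref{e:L-behav}, and then to handle the two cases $\rho>0$ and $\rho=0$ separately. First I will write
$$R_{k_0,k}-S_{k_0,k}=\frac{1}{k-k_0}\Big[(k_0+1)D_{k_0+1}+\sum_{i=k_0+2}^{k}D_i\Big],\quad D_i:=\log\frac{L(Y_{(n-i+1,n)})}{L(Y_{(n-k,n)})}-cg(Y_{(n-k,n)})\int_1^{Y_{(n-i+1,n)}/Y_{(n-k,n)}}\nu^{-\rho-1}d\nu.$$
Since $k/n\to 0$, the order statistic $Y_{(n-k,n)}$ diverges in probability, so for any $\varepsilon>0$ the event $A_\varepsilon=\{Y_{(n-k,n)}\ge t_\varepsilon\}$ has probability tending to one. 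On $A_\varepsilon$ the uniform inequality \eqref{e:L-behav}, applied with $t=Y_{(n-k,n)}$ and $x=Y_{(n-i+1,n)}/Y_{(n-k,n)}\ge 1$, controls every $|D_i|$ simultaneously in $i$ and $k_0$.

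For $\rho>0$ the bound \eqref{e:L-behav} immediately gives $|D_i|\le\varepsilon g(Y_{(n-k,n)})$, so
$$\frac{k-k_0}{kg(Y_{(n-k,n)})}|R_{k_0,k}-S_{k_0,k}|\le \frac{1}{k}\cdot\frac{(k_0+1)+(k-k_0-1)}{1}\cdot\varepsilon=\varepsilon,$$
uniformly in $0\le k_0<k$. Letting $\varepsilon\downarrow 0$ after passing $n\to\infty$ on $A_\varepsilon$ gives the desired convergence in probability.

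For $\rho=0$ the bound becomes $|D_i|\le \varepsilon g(Y_{(n-k,n)})(Y_{(n-i+1,n)}/Y_{(n-k,n)})^\varepsilon$. Since $Y_{(n-i+1,n)}$ is nonincreasing in $i$, the weighted term obeys
$$(k_0+1)(Y_{(n-k_0,n)}/Y_{(n-k,n)})^\varepsilon\le \sum_{i=1}^{k_0+1}(Y_{(n-i+1,n)}/Y_{(n-k,n)})^\varepsilon,$$
so the whole bracket is dominated by $\sum_{i=1}^{k}(Y_{(n-i+1,n)}/Y_{(n-k,n)})^\varepsilon$ uniformly in $k_0$. Hence
$$\frac{k-k_0}{kg(Y_{(n-k,n)})}|R_{k_0,k}-S_{k_0,k}|\le\frac{\varepsilon}{k}\sum_{i=1}^{k}\Big(\frac{Y_{(n-i+1,n)}}{Y_{(n-k,n)}}\Big)^{\varepsilon}.$$
By the representation \eqref{e:Pareto-order}, the right-hand side is equal in distribution to $\frac{\varepsilon}{k}\sum_{i=1}^{k}(\Gamma_i/\Gamma_{k+1})^{-\varepsilon}$, and Lemma \ref{lem:gam-int} (applied with exponent $-\varepsilon$, which requires $\varepsilon<1$) shows this converges a.s.\ to $\varepsilon/(1-\varepsilon)$. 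Since $\varepsilon$ was arbitrary, the upper bound can be made arbitrarily small, concluding the proof.

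The main obstacle will be the $\rho=0$ case, precisely because the remainder $D_i$ carries the factor $(Y_{(n-i+1,n)}/Y_{(n-k,n)})^\varepsilon$, which blows up for small $i$. The success hinges on the observation that the extra weight $(k_0+1)$ attached to $D_{k_0+1}$ can be absorbed into a full telescoping sum from $i=1$ to $k$ (exploiting monotonicity of the order statistics), so that the uniformity in $k_0$ reduces to the single (non-uniform) partial-sum convergence guaranteed by Lemma \ref{lem:gam-int}. Verifying that Lemma \ref{lem:gam-int} indeed applies with negative exponent $-\varepsilon$ and extracting the required almost-sure asymptotic $\varepsilon/(1-\varepsilon)$ is the only technical point that needs care.
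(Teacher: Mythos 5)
Your proof is correct and follows essentially the same route as the paper: the same per-term application of \eqref{e:L-behav} on the event $\{Y_{(n-k,n)}\ge t_\varepsilon\}$ (whose probability tends to one since $k/n\to 0$), the same uniform-in-$k_0$ reduction via monotonicity of the order statistics to the bound $\frac{\varepsilon}{k}\sum_{i=1}^{k}\bigl(Y_{(n-i+1,n)}/Y_{(n-k,n)}\bigr)^{\varepsilon}$ in the $\rho=0$ case, and the same split into $\rho>0$ and $\rho=0$. The only (immaterial) difference is the last step for $\rho=0$: you invoke the almost-sure convergence of $\frac{1}{k}\sum_i(\Gamma_i/\Gamma_{k+1})^{-\varepsilon}$ to $1/(1-\varepsilon)$ via Lemma \ref{lem:gam-int}, while the paper identifies this sum in distribution with $\frac1k\sum_i U_{(i,k)}^{-\varepsilon}$ and uses the weak law of large numbers, bounding it by $2$ for $\varepsilon<1/2$.
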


\begin{proof}
	The proof of \eqref{e:rem-1} involves two cases:  $\rho>0$ and $\rho=0$.\vspace{2mm}
	
	\noindent\textbf{Case $\rho>0$:} Since $Y_{(n-i+1,n)}/Y_{(n-k,n)}>1$, $i=1,\cdots, k$,  over the event $\{Y_{(n-k,n)}>t_\varepsilon\}$, by \eqref{e:SR2}:
	\begin{eqnarray*}
		(k-k_0)|R_{k_0,k}-S_{k_0,k}|&\leq &(k_0+1)\left|\log \frac{L(Y_{(n-k_0,n)})}{L(Y_{(n-k,n)})}-cg(Y_{(n-k,n)} \int_{1}^{Y_{(n-k_0,n)}/Y_{(n-k,n)}} \nu^{-\rho-1} d\nu \right|\\
		&+&\sum_{i=k_0+2}^{k}\Bigg|\log \frac{L(Y_{(n-i+1,n)})}{L(Y_{(n-k,n)})}-cg(Y_{(n-k,n)} \int_{1}^{Y_{(n-i+1,n)}/Y_{(n-k,n)}} \nu^{-\rho-1} d\nu \Bigg|\\	
		&\leq&(k_0+1)g(Y_{(n-k,n)})\varepsilon+\sum_{i=k_0+2}^{k}g(Y_{(n-k,n)})\varepsilon =g(Y_{(n-k,n)})k\varepsilon.\vspace{-5mm}
	\end{eqnarray*} 
	Therefore over the event $\{Y_{(n-k,n)}>t_\varepsilon\}$  
	\begin{equation}\label{e:lem:s-def-1}
	\max_{0\leq k_0 \leq k}\Big(	\frac{k-k_0}{kg(Y_{(n-k,n)})}|R_{k_0,k}-S_{k_0,k}|\Big) \leq \varepsilon.
	\end{equation}
	From \eqref{e:Pareto-order} we get $Y_{(n-k,n)}\stackrel{d}{=}(\Gamma_{k+1}/\Gamma_{n+1})^{-1}$. By Lemma \ref{lem:ratio-conv}, we have 
	$$Y_{(n-k,n)}\stackrel{d}{=}\frac{n}{k}\Big(\frac{\Gamma_{k+1}/k}{\Gamma_{n+1}/n}\Big)^{-1} \stackrel{P}{\longrightarrow}\infty$$
	which implies 	$P[Y_{(n-k,n)} >t_{\varepsilon}]\rightarrow 1$ and hence completes the proof. \vspace{2mm}
	
	\noindent\textbf{Case $\rho=0$:} As in the previous case, over the event $\{Y_{(n-k,n)}>t_\varepsilon\}$, by \eqref{e:SR2} we have	\begin{eqnarray}
	\label{e:diff-0-1}
	(k-k_0)|R_{k_0,k}-S_{k_0,k}|&=&(k_0+1)\left|\log \frac{L(Y_{(n-k_0,n)})}{L(Y_{(n-k,n)})}-cg(Y_{(n-k,n)} \int_{1}^{Y_{(n-k_0,n)}/Y_{(n-k,n)}} \frac{d\nu}{\nu}  \right|\\\nonumber
	&+&\sum_{i=k_0+2}^{k}\left|\log \frac{L(Y_{(n-i+1,n)})}{L(Y_{(n-k,n)})}-cg(Y_{(n-k,n)} \int_{1}^{Y_{(n-i+1,n)}/Y_{(n-k,n)}} \frac{d\nu}{\nu} \right|\\\nonumber
	&\leq&\varepsilon\Bigg((k_0+1)g(Y_{(n-k,n)})\Big(\frac{Y_{(n-k_0,n)}}{Y_{(n-k,n)}}\Big)^{\varepsilon}+\sum_{i=k_0+2}^{k}g(Y_{(n-k,n)})\Big( \frac{Y_{(n-i+1,n)}}{Y_{(n-k,n)}}\Big)^{\varepsilon}\Bigg)\nonumber
	\end{eqnarray}
	Since $Y_{(n-i+1,n)}\geq Y_{(n-k_0,n)}$ for $i=1, \cdots, k_0+1$,  we further obtain
	\begin{equation}\label{e:rho=0-second-instance}
	\max_{0\leq k_0 \leq k}\Big(	\frac{(k-k_0)}{kg(Y_{(n-k,n)})}|R_{k_0,k}-S_{k_0,k}|\Big)\leq \frac{\varepsilon}{k} \sum_{i=1}^{k}\Big(\frac{Y_{(n-i+1,n)}}{Y_{(n-k,n)}}\Big)^{\varepsilon}\leq 2\varepsilon
	\end{equation}
	over the events $\{Y_{(n-k,n)}>t_\varepsilon\}$  and $\{(1/k) \sum_{i=1}^{k}(Y_{(n-i+1,n)}/Y_{(n-k,n)})^{\varepsilon}<2\}$. 
	
	For $\{(1/k) \sum_{i=1}^{k}(Y_{(n-i+1,n)}/Y_{(n-k,n)})^{\varepsilon}<2\}$, from \eqref{e:Pareto-order}, we observe that  
	$$\frac{1}{k} \sum_{i=1}^{k}\Big(\frac{Y_{(n-i+1,n)}}{Y_{(n-k,n)}}\Big)^{\varepsilon}\stackrel{d}{=}\frac{1}{k}\sum_{i=1}^{k}\Big(\frac{\Gamma_{i+1}}{\Gamma_{k+1}}\Big)^{-\varepsilon}=\frac{1}{k}\sum_{i=1}^{k}U_{i,k}^{-\varepsilon}\stackrel{P}{\longrightarrow}\frac{1}{1-\varepsilon}$$
	where the last convergence follows from weak law of large numbers.
	
	Thus $P[(1/k) \sum_{i=1}^{k}(Y_{(n-i+1,n)}/Y_{(n-k,n)})^{\varepsilon}<2] \rightarrow 1$ as long as $\varepsilon<0.5$. Since we already proved that $P[Y_{(n-k,n)} >t_{\varepsilon}]\rightarrow 1$, the proof for the case $\rho<0$ follows.	
\end{proof}

\begin{proof}[{\bf Proof of Theorem \ref{prop:E-conv}}]
	Using \eqref{e:tail-3}, we can rewrite \eqref{e:E-conv} as 
	\begin{equation}
	\label{e:E-conv-mod}
	k^\delta \max_{0\leq k_0<h(k)}\Big|R_{k_0,k}-\frac{k^{-\delta}cA}{(1+\rho)}\Big| \stackrel{P}{\longrightarrow}0
	\end{equation}
	In this direction, observe that
	\begin{equation*}
	\label{e:rs-decom}
	\Big|R_{k_0,k}-\frac{k^{-\delta}cA}{(1+\rho)}\Big|\leq|R_{k_0,k}-S_{k_0,k}|+\Big|S_{k_0,k}-\frac{k^{-\delta}cA}{ ({1+\rho})}\Big|
	\end{equation*}
	where $S_{k_0,k}$ is defined in \eqref{e:s-def}.
	
	\noindent To prove \eqref{e:E-conv-mod}, we first show that $k^\delta \max_{0\leq k_0<h(k)}|R_{k_0,k}-S_{k_0,k}|\stackrel{P}{\longrightarrow}0$. In this direction, we have
	\begin{eqnarray*}
		k^\delta \max_{0\leq k_0<h(k)}|R_{k_0,k}-S_{k_0,k}|&=&k^\delta \max_{0\leq k_0<h(k)}\frac{kg(Y_{n-k,n})}{k-k_0}\Big(\frac{k-k_0}{kg(Y_{(n-k,n)})}|R_{k_0,k}-S_{k_0,k}|\Big)\\\nonumber
		&\leq&\frac{\Delta_{1k}}{1-h(k)/k}\underbrace{\max_{0\leq k_0<h(k)}\Big(\frac{k-k_0}{kg(Y_{(n-k,n)})}|R_{k_0,k}-S_{k_0,k}|\Big)}_{\Delta_{2k}}
	\end{eqnarray*}
	where   $\Delta_{2k} \stackrel{P}{\longrightarrow}0$ by Lemma \ref{lem:s-def}. Since $h(k)=o(k)$, $1-h(k)/k\rightarrow 1$ and $\Delta_{1k}=k^\delta g(Y_{(n-k,n)})$
	\begin{equation}
	\label{e:del-1}
	\Delta_{1k}=k^\delta g(n/k)\frac{g(Y_{(n-k,n)})}{g(n/k)}\stackrel{P}{\longrightarrow}A
	\end{equation}
	where \eqref{e:del-1} follows from assumption \eqref{e:A-def} and Lemma \ref{lem:g-asy}.
	
	\noindent Towards the proof of \eqref{e:E-conv-mod}, we finally show that $k^\delta \max_{0\leq k_0<h(k)}|S_{k_0,k}-(k^{-\delta}cA)/ ({1+\rho})|\stackrel{P}{\longrightarrow}0$. In this direction, we have
	\begin{eqnarray*}
		k^\delta \max_{0\leq k_0<h(k)}\Big|S_{k_0,k}-\frac{k^{-\delta}cA}{(1+\rho)}\Big|&=&k^\delta \max_{0\leq k_0<h(k)}\frac{kg(Y_{n-k,n})}{k-k_0}\Big|\frac{k-k_0}{kg(Y_{(n-k,n)})}S_{k_0,k}-\frac{cA(k-k_0)}{k(1+\rho)\Delta_{1k}}\Big|\\\nonumber
		&\leq&\frac{\Delta_{1k}}{1-h(k)/k}\underbrace{\max_{0\leq k_0<h(k)}\Big|\frac{k-k_0}{kg(Y_{(n-k,n)})}S_{k_0,k}-\frac{cA(k-k_0)}{k(1+\rho)\Delta_{1k}}\Big|}_{\Delta_{3k}}
	\end{eqnarray*}
	where $\Delta_{1k} \stackrel{P}{\longrightarrow} A$ as in \eqref{e:del-1} and $1-h(k)/k \rightarrow 1$. $\Delta_{3k}$ can be further simplified as
	\small
	$$\Delta_{3k}\leq\underbrace{\max_{0\leq k_0<h(k)}\Big|\frac{k-k_0}{kg(Y_{(n-k,n)})}S_{k_0,k}+c{\Big(\frac{k_0}{k}\Big)}^{1+\rho}-\frac{c}{1+\rho}\Big|}_{\Delta_{4k}}+\underbrace{\max_{0\leq k_0<h(k)}\Big|\frac{c}{1+\rho}-c{\Big(\frac{k_0}{k}\Big)}^{1+\rho}-\frac{cA(k-k_0)}{k(1+\rho)\Delta_{1k}}\Big|}_{\Delta_{5k}}$$
	\normalsize where
	$\Delta_{4k}\stackrel{P}{\longrightarrow}0$ by Lemma \ref{lem:r-def}. Since $\max_{0\leq k_0<k}(k_0/k)^{1+\rho}\leq (h(k)/k)^{1+\rho}\rightarrow 0$, thus to prove $\Delta_{5k} \stackrel{P}{\longrightarrow}0$, it suffices to show that
	$$\max_{0\leq k_0<h(k)}\Big|\frac{c}{1+\rho}-\frac{cA(k-k_0)}{k(1+\rho)\Delta_{1k}}\Big|\stackrel{P}{\longrightarrow}0$$
	In this direction, we observe that
	\begin{eqnarray*}
		\max_{0\leq k_0 \leq h(k)}\Big|\frac{c}{1+\rho}-\frac{cA(k-k_0)}{k(1+\rho)\Delta_{1k}}\Big|&\leq& \frac{|c|}{1+\rho}\max_{0\leq k_0<h(k)}\Bigg(\Big|1-\frac{ A}{\Delta_{1k}}\Big|+\frac{ Ak_0}{k\Delta_{1k}}\Bigg)\\
		&\leq&\frac{|c|}{1+\rho}\Bigg(\Big|1-\frac{ A}{\Delta_{1k}}\Big|+\frac{ Ah(k)}{\Delta_{1k}k}\Bigg) \stackrel{P}{\longrightarrow}0
	\end{eqnarray*}
	since $h(k)/k \rightarrow 0$ and $A/\Delta_{1k} \stackrel{P}{\longrightarrow} 1$ as in \eqref{e:del-1}. This completes the proof.
\end{proof}

\subsection{Proofs for Section \ref{sec:aut-trim}} \label{sec:proofs-sec4}

\begin{proof}[{\bf Proof of Theorem \ref{prop:T-conv}}]
	From \eqref{e:T-i-k} we have
	\begin{eqnarray*}
		\label{e:T-conv-mod}
		k^{\delta}\max_{0\leq k_0<h(k)}|T_{k_0,k}-T^*_{k_0,k}|&=&k^\delta \max_{0\leq k_0 <h(k)}\frac{k-k_0-1}{k-k_0}\Big|\frac{\widehat{\xi}_{k_0+1,k}}{\widehat{\xi}_{k_0,k}}-\frac{\widehat{\xi}^*_{k_0+1,k}}{\widehat{\xi}^*_{k_0,k}}\Big|\\
		&\leq&\frac{k^{\delta}}{1-h(k)/k}\max_{0\leq k_0 <h(k)}\underbrace{\Big|\frac{\widehat{\xi}_{k_0+1,k}}{\widehat{\xi}_{k_0,k}}-\frac{\widehat{\xi}^*_{k_0+1,k}}{\widehat{\xi}^*_{k_0,k}}\Big|}_{W_{k_0,k}}
	\end{eqnarray*}
	Since $h(k)=o(k)$, to prove \eqref{e:T-conv-mod}, it remains to show $
	k^\delta \max_{0\leq k_0 <h(k)}W_{k_0,k} \stackrel{P}{\longrightarrow}0$.
	In this direction, we observe that
	\begin{eqnarray*}
		W_{k_0,k} &\leq&\Big|\frac{\widehat{\xi}_{k_0+1,k}}{\widehat{\xi}_{k_0,k}}-\frac{\widehat{\xi}^*_{k_0+1,k}}{\widehat{\xi}_{k_0,k}}-\frac{cAk^{-\delta}}{(1+\rho)\widehat{\xi}_{k_0,k}}\Big|+\frac{|c|Ak^{-\delta}}{(1+\rho)\widehat{\xi}_{k_0,k}}\Big|1-\frac{\widehat{\xi}^*_{k_0+1,k}}{\widehat{\xi}^*_{k_0,k}}\Big|\\
		&+&\Big|\frac{\widehat{\xi}^*_{k_0+1,k}}{\widehat{\xi}_{k_0,k}}-\frac{\widehat{\xi}^*_{k_0+1,k}}{\widehat{\xi}^*_{k_0,k}}+\frac{cAk^{-\delta}}{(1+\rho)\widehat{\xi}_{k_0,k}}\frac{\widehat{\xi}^*_{k_0+1,k}}{\widehat{\xi}^*_{k_0,k}}\Big|\\
		&=&\frac{1}{\widehat{\xi}_{k_0,k}}\Bigg(\Big|R_{k_0,k}-\frac{cAk^{-\delta}}{1+\rho}\Big|+\frac{|c|Ak^{-\delta}}{(1+\rho)}\Big|1-\frac{\widehat{\xi}^*_{k_0+1,k}}{\widehat{\xi}^*_{k_0,k}}\Big|+\frac{\widehat{\xi}^*_{k_0+1,k}}{\widehat{\xi}^*_{k_0,k}}\Big|\frac{cAk^{-\delta}}{(1+\rho)}-R_{k_0+1,k}\Big|\Bigg)\\
	\end{eqnarray*}
	\normalsize
	where $R_{k_0,k}$ is defined in \eqref{e:r-def}. Thus to show $
	k^\delta \max_{0\leq k_0 <h(k)}W_{k_0,k} \stackrel{P}{\longrightarrow}0$,  it 
	\begin{eqnarray*}
		\max_{0\leq k_0 <h(k)}k^\delta W_{k_0,k}&\leq& \Big(M_{1k}+\frac{|c|A}{(1+\rho)}\max_{0\leq k_0  h(k)}|1-B_{k_0,k}|+M_{1k}\max_{0\leq k_0 \leq h(k)}B_{k_0,k}\Big)\max_{0\leq k_0 <h(k)} \frac{1}{\widehat{\xi}_{k_0,k}}\\
		&=&\Big(M_{1k}\max_{0\leq k_0 \leq h(k)}(1+B_{k_0,k})+\frac{|c|A}{(1+\rho)}\max_{0\leq k_0 \leq h(k)}|1-B_{k_0,k}|\Big)\max_{0\leq k_0 <h(k)} \frac{1}{\widehat{\xi}_{k_0,k}}
	\end{eqnarray*}
	where $M_{1k}=k^\delta \max_{0\leq k_0<h(k)}|R_{k_0,k}-(k^{-\delta}cA)/(1+\rho)| \stackrel{P}{\longrightarrow}0$ is a direct consequence of Theorem \ref{prop:E-conv}.
	Using \eqref{e:xi-jt}, we next observe that 
	\begin{eqnarray}
	\label{e:B-conv}
	\max_{0\leq k_0 \leq h(k)}|1-B_{k_0,k}|&\stackrel{d}{=}&\max_{0\leq k_0 \leq h(k)}\Big|1-\frac{\Gamma_{k-k_0-1}/(k-k_0-1)}{\Gamma_{k-k_0}/(k-k_0)}\Big|\\\nonumber
	&\leq &\frac{1}{1-h(k)/k}\max_{k-h(k)\leq i \leq k}\Big|\frac{\Gamma_{i}/i}{\Gamma_{i+1}/(i+1)}-1\Big|\stackrel{a.s.}{\longrightarrow}0
	\end{eqnarray}  is a direct consequence of   \eqref{e:gam-conv-1} in Lemma \ref{lem:ratio-conv}. \eqref{e:B-conv} also proves that $\max_{0\leq k_0 \leq h(k)}(1+B_{k_0,k})$ is bounded in probabibilty. 
	
	Thus, to complete the proof of $
	k^\delta \max_{0\leq k_0 <h(k)}W_{k_0,k} \stackrel{P}{\longrightarrow}0$, we show that $\min_{0\leq k_0 <h(k) }|\widehat{\xi}_{k_0,k}|$ is bounded away from 0 in probability as follows:
	\begin{eqnarray}
	\label{e:xi-bound}
	\min_{0\leq k_0 <h(k) }\widehat{\xi}_{k_0,k}\geq \min_{0\leq k_0 <h(k) }\widehat{\xi}^*_{k_0,k}-\max_{0\leq k_0 <h(k) }|\widehat{\xi}_{k_0,k}-\widehat{\xi}^*_{k_0,k}|
	\end{eqnarray}
	For $\delta>0$,  Theorem \ref{prop:E-conv} implies $\max_{0\leq k_0 <h(k) }|\widehat{\xi}_{k_0,k}-\widehat{\xi}^*_{k_0,k}| \stackrel{P}{\longrightarrow}0$. Therefore $\min_{0 \leq k_0 <h(k)}\widehat{\xi}_{k_0,k}$ is bounded away from 0 as long as  $\min_{0 \leq k_0 <h(k)}\widehat{\xi}^*_{k_0,k}$ is bounded away from 0. This can be easily shown because
	$$\min_{0 \leq k_0 <h(k)}\widehat{\xi}^*_{k_0,k}\stackrel{d}{=}\min_{0 \leq k_0 <h(k)}\frac{\Gamma_{k-k_0}}{k-k_0}\geq 1-\max_{k-h(k) \leq i <k}\Big|\frac{\Gamma_{i}}{i}-1\Big|\stackrel{a.s.}{\longrightarrow}1$$
	where the last convergence is a direct consequence of \eqref{e:gam-conv-0} in Lemma \ref{lem:ratio-conv}. This completes the proof.
	
\end{proof}

{\small
	 }

\section{Supplement}
\label{sec:supple}

\begin{lemma}
	\label{lem:Hall}
	Assumption \eqref{e:D-new-def} implies there exist $M>0$ such that
	\begin{equation}
	\label{e:rem-hall}
	\inf_{F \in {\cal{D}}_{\xi}(B,\rho)}P_F\Big[\max_{0\leq k_0<k}\sqrt{k}|R_{k_0,k}|\leq M\Big]\rightarrow 1 \textmd{ as }h(k) \rightarrow \infty
	\end{equation}
	where $R_{k_0,k}$ is defined in \eqref{e:r-def} and $k=O(n^{2\rho/(1+2\rho)})$.
\end{lemma}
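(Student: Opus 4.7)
The plan is to exploit the Hall-class bound $L(x) = 1 + r(x)$ with $|r(x)| \le B x^{-\rho}$ to control each logarithmic term appearing in \eqref{e:r-def}, and then to observe that the resulting scaling $\sqrt{k}\, Y_{(n-k,n)}^{-\rho}$ is tight under $k \asymp n^{2\rho/(2\rho+1)}$, with a bound that depends only on $(B,\rho)$ and hence is uniform over ${\cal D}_\xi(B,\rho)$. (I read the maximum in \eqref{e:rem-hall} as being over $0 \le k_0 < h(k)$ with $h(k) = o(k)$, matching the usage in Theorem \ref{t:uniform-consistency}.)

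First I would handle the logarithm. For $x \ge x_0 := (2B)^{1/\rho}$ we have $|r(x)| \le 1/2$, hence $|\log L(x)| \le 2|r(x)| \le 2B x^{-\rho}$. Since $Y_{(n-i+1,n)} \ge Y_{(n-k,n)}$ for $1 \le i \le k$, on the event $\{Y_{(n-k,n)} \ge x_0\}$ one gets the uniform bound
\begin{equation*}
\Bigl|\log \tfrac{L(Y_{(n-i+1,n)})}{L(Y_{(n-k,n)})}\Bigr| \le 2B\,Y_{(n-i+1,n)}^{-\rho} + 2B\,Y_{(n-k,n)}^{-\rho} \le 4B\,Y_{(n-k,n)}^{-\rho}, \qquad i=1,\ldots,k.
\end{equation*}
Plugging into \eqref{e:r-def} and noting that the weights sum to $k/(k-k_0)$ yields $|R_{k_0,k}| \le \frac{4Bk}{k-k_0}\,Y_{(n-k,n)}^{-\rho}$. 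For $k_0 < h(k)$ with $h(k) = o(k)$, the factor $k/(k-k_0)$ is bounded by $(1-h(k)/k)^{-1} \le 2$ eventually, so
\begin{equation*}
\max_{0 \le k_0 < h(k)} \sqrt{k}\,|R_{k_0,k}| \;\le\; 8B\,\sqrt{k}\,Y_{(n-k,n)}^{-\rho}.
\end{equation*}

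Next I would show $\sqrt{k}\,Y_{(n-k,n)}^{-\rho} = O_P(1)$. Using the representation $Y_{(n-k,n)} \stackrel{d}{=} \Gamma_{n+1}/\Gamma_{k+1}$ from \eqref{e:Pareto-order},
\begin{equation*}
\sqrt{k}\,Y_{(n-k,n)}^{-\rho} \stackrel{d}{=} \sqrt{k}\,(k/n)^{\rho} \cdot \Bigl(\tfrac{(k+1)/k}{(n+1)/n}\Bigr)^{\rho} \cdot \Bigl(\tfrac{\Gamma_{k+1}/(k+1)}{\Gamma_{n+1}/(n+1)}\Bigr)^{\rho}.
\end{equation*}
By the strong law of large numbers, the last two factors tend to $1$ in probability; the deterministic prefactor $\sqrt{k}(k/n)^{\rho}$ is $O(1)$ precisely because $k \asymp n^{2\rho/(2\rho+1)}$ gives $\sqrt{k} \asymp n^{\rho/(2\rho+1)}$ and $(k/n)^{\rho} \asymp n^{-\rho/(2\rho+1)}$. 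Similarly $Y_{(n-k,n)} \stackrel{P}{\to} \infty$ (since $k/n \to 0$), so the event $\{Y_{(n-k,n)} \ge x_0\}$ has probability tending to one.

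Combining, given $\varepsilon > 0$ choose $M'$ with $\P[\sqrt{k}\,Y_{(n-k,n)}^{-\rho} \le M'] \ge 1-\varepsilon$ for all $n$ large, and set $M = 8BM'$. The uniformity over $F \in {\cal D}_\xi(B,\rho)$ comes for free: the distribution of $Y_{(n-k,n)}$ depends only on $n,k$ (the $Y_i$ are i.i.d.\ Pareto$(1,1)$ regardless of $F$), and the deterministic bounds on $\log L$ use only the Hall-class constants $B,\rho$. The argument is thus essentially a direct computation; the only delicate point is matching $\sqrt{k}(k/n)^{\rho} = O(1)$ with the minimax rate, which is exactly where the choice $k \asymp n^{2\rho/(2\rho+1)}$ is forced.
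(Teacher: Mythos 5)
Your argument is correct and follows essentially the same route as the paper's proof: bound each term $\log(L(Y_{(n-i+1,n)})/L(Y_{(n-k,n)}))$ by a constant multiple of $Y_{(n-k,n)}^{-\rho}$ using the Hall-class envelope $|r(x)|\le Bx^{-\rho}$, observe that the weights in \eqref{e:r-def} sum to $k$ so that $|R_{k_0,k}|\lesssim \frac{k}{k-k_0}Y_{(n-k,n)}^{-\rho}$, and then show $\sqrt{k}\,Y_{(n-k,n)}^{-\rho}=O_P(1)$ via the Gamma representation and $k\asymp n^{2\rho/(2\rho+1)}$. The one small refinement you add over the paper's write-up is explicitly conditioning on $\{Y_{(n-k,n)}\ge x_0\}$ so that $1-BY_{(n-k,n)}^{-\rho}$ stays bounded away from $0$, which the paper's direct use of $\log\frac{1+BY^{-\rho}_{(n-k,n)}}{1-BY^{-\rho}_{(n-k,n)}}$ leaves implicit; this is a minor point of rigor rather than a different method.
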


\begin{proof}
	By \eqref{e:D-new-def}, we have $1-Bx^{-\rho}\leq L(x)\leq 1+Bx^{-\rho}$. Therefore
	\begin{eqnarray}
	\label{e:r-ub}
	(k-k_0)R_{k_0,k}&\leq& (k_0+1) \log\frac{1+BY^{-\rho}_{(n-k_0,n)}}{1-BY^{-\rho}_{(n-k,n)}} +\sum_{i=k_0+2}^{k} \log\frac{1+BY^{-\rho}_{(n-i+1,n)}}{1-BY^{-\rho}_{(n-k,n)}} \\\nonumber
	&\leq& k\log\frac{1+BY^{-\rho}_{(n-k,n)}}{1-BY^{-\rho}_{(n-k,n)}}
	\end{eqnarray}
	since $Y^{-\rho}_{(n-k,n)}\geq Y^{-\rho}_{(n-i+1,n)}$ for $i=k_0+1, \cdots,k$.  Similarly, we also have \begin{equation}
	\label{e:r-lb}
	(k-k_0)R_{k_0,k}\geq k\log\frac{1-BY^{-\rho}_{(n-k,n)}}{1+BY^{-\rho}_{(n-k,n)}}
	\end{equation}and thus, \eqref{e:r-ub} and \eqref{e:r-lb} together imply
	\begin{equation}
	\label{e:r-mod-ub}
	\max_{0\leq k_0<h(k)}\sqrt{k}|R_{k_0,k}|\leq \frac{\sqrt{k}Y^{-\rho}_{(n-k,n)}}{1-h(k)/k}\max_{0\leq k_0<h(k)}\frac{1}{Y^{-\rho}_{(n-k,n)}}\log\frac{1+BY^{-\rho}_{(n-k,n)}}{1-BY^{-\rho}_{(n-k,n)}}\end{equation}
	where $h(k)=o(k)$ and 
	\begin{eqnarray*}
		\sqrt{k}Y^{-\rho}_{(n-k,n)}\frac{1}{Y^{-\rho}_{(n-k,n)}}\log\frac{1+BY^{-\rho}_{(n-k,n)}}{1-BY^{-\rho}_{(n-k,n)}}&
		\stackrel{d}{=}&\underbrace{\sqrt{k}(\Gamma_{k+1}/\Gamma_{n+1})^{\rho}}_{\Delta_{1k}}\underbrace{\frac{1}{(\Gamma_{k+1}/\Gamma_{n+1})^{\rho}}\log\frac{ 1+B(\Gamma_{k+1}/\Gamma_{n+1})^{\rho}}{1-B(\Gamma_{k+1}/\Gamma_{n+1})^{\rho}}}_{\Delta_{2k}}
	\end{eqnarray*}
	Now, by relation \eqref{e:gam-conv-0} in Lemma \ref{lem:ratio-conv}, we have $((\Gamma_{k+1}/k)/(\Gamma_{n+1}/n))^{\rho} \stackrel{a.s.}{\longrightarrow}1$. For $k=O(n^{2\rho/(1+2\rho)})$, $\Delta_{1k}$ is bounded almost surely and  $\Delta_{2k}\stackrel{a.s.}{\longrightarrow}2B$.
	Therefore there exist $M$  such that 	$$\inf_{F \in {\cal{D}}_{\xi}(B,\rho)}P_F\Big[\max_{0\leq k_0<k}\frac{k-k_0}{kY^{-\rho}_{(n-k,n)}}|R_{k_0,k}|\leq M\Big]\geq P[\Delta_{1k}\Delta_{2k}\leq M] \rightarrow 1$$
	This completes the proof.

\end{proof}
\begin{proof}[{\bf Proof of Theorem \ref{t:uniform-consistency}}] Let $P_n=\inf_{F \in {\cal{D}}_{\xi}(B,\rho)}P_F\Big[\max_{0\leq k_0<h(k)}|\widehat{\xi}_{k_0,k}-\xi|\leq a(n)\Big]$, then
	
	$$P_n=\inf_{{\cal{D}}_{\xi}(B,\rho)}P_F\Big[\underbrace{\max_{0\leq k_0<h(k)}\sqrt{k}|R_{k_0,k}|\leq (\sqrt{k}a(n))/2}_{A_{1n}}\cap\underbrace{\max_{0\leq k_0<h(k)}\sqrt{k}|\widehat{\xi}^*_{k_0,k}-\xi|\leq (\sqrt{k}a(n))/2}_{A_{2n}}\Big]$$
	Since $\sqrt{k}a(n) \rightarrow \infty$, by Lemma \ref{lem:Hall}, $\inf_{F \in {\cal{D}}_{\xi}(B,\rho)}P_F[A_{1n}] \rightarrow 1$. We also have that, $$\inf_{F \in {\cal{D}}_{\xi}(B,\rho)}P_F[A_{2n}]=P\Big[\max_{0\leq k_0<h(k)} \sqrt{k}| \widehat {\xi}_{k_0,k}^*(n) - \xi|\leq(\sqrt{k}a(n))/2\Big]$$ since $\widehat{\xi}^*_{k_0,k}$ does not depend on $F \in {\cal{D}}_{\xi}(B,\rho)$.

	By using Donsker's principle, we will show that 
	$$
	\max_{0\le k_0 <h(k)} | \widehat \xi_{k_0,k}^*(n) - \xi|  =o_P(a(n)),
	$$
	which will imply $P_F(A_{2n}) \rightarrow 1$.  Indeed, without loss of generality, suppose $\xi=1$ and let 
	$E_i,\ i=1,2,\dots$ be independent standard exponential random variables. For every $\epsilon\in (0,1)$, we have that
	\begin{equation}\label{e:Wkprocess}
	W_k =\{W_k(t),\ t\in [\epsilon,1]\} :=  \left\{\frac{\sqrt{k}}{[kt]} \sum_{i=1}^{[kt]} (E_i -1),\ t \in [0,1] \right\} \stackrel{d}{\to } \{B(t)/t, \ t\in [\epsilon,1]\},
	\end{equation}
	as $k\to\infty$, where $B = \{B(t),\ t\in [0,1]\}$ is the standard Brownian motion, and where the last convergence is in 
	the space of cadlag functions ${\mathbb D}[\epsilon,1]$ equipped with the Skorokhod $J_1$-topology.  (In fact, since the limit
	has continuous paths, the convergence is also valid in the uniform norm.)
	
	Recall that by \eqref{e:xi-jt-big}, we have 
	$$
	\{\widehat\xi_{k_0,k}^*(n),\ 0\le k_0 < k\}\stackrel{d}{=} \left\{\sum_{i=1}^{k-k_0} E_i/(k-k_0),\ 0\le k_0 < k\right\}.
	$$
	Thus, 
	\begin{equation}\label{e:t:uniform-consistency-1}
	\sqrt{k} \max_{0\le k_0<h(k)} | \widehat \xi_{k_0,k}^*(n) - \xi|  \stackrel{d}{= } \sup_{t \in [1-h(k)/k, 1]} |W_k(t)| \le
	\sup_{t \in [\epsilon,1]} |W_k(t)|, 
	\end{equation}
	where the last inequality holds for all sufficiently large $k$, since $1-h(k)/k\to1$, as $k\to\infty$.  
	Since the supremum is a continuous functional in $J_1$, the convergence in \eqref{e:Wkprocess} implies that the 
	right--hand side of \eqref{e:t:uniform-consistency-1} converges in distribution
	to $\sup_{t\in [\epsilon,1]} |B(t)/t| = O_P(1)$, which is finite with probability one. This, since $a(n)\sqrt{k(n)} \to \infty$, completes the proof.
\end{proof}

\begin{proof}[{\bf Proof of Theorem \ref{prop:U-conv}}]
	We first begin with the proof of \eqref{e:U-conv}. For this from \eqref{e:U-i-k} we have
	\begin{eqnarray*}
		\max_{0\leq k_0<h(k)}|U_{k_0,k}-U^*_{k_0,k}|&=&\max_{0\leq k_0<h(k)}2\Big||{(T_{k_0,k})}^{k-k_0-1}-0.5|-|{(T^*_{k_0,k})}^{k-k_0-1}-0.5|\Big|\\
		&\leq&2\max_{0\leq k_0<h(k)}\Big||{(T_{k_0,k})}^{k-k_0-1}-{(T^*_{k_0,k})}^{k-k_0-1}|\Big|\\
		&\leq&2\max_{0\leq k_0<h(k)}\Big|\Big(\frac{T_{k_0,k}}{T^*_{k_0,k}}\Big)^{k-k_0-1}-1\Big|
	\end{eqnarray*}
	where the last inequality holds since $T^*_{k_0,k}\leq 1$(see \ref{T-def}). Thus to prove \eqref{e:U-conv}, it suffices to show
	\begin{equation}
	\label{e:U-conv-mod}
	k^{\delta-1}\max_{0\leq k_0<h(k)}\Big|\Big(\frac{T_{k_0,k}}{T^*_{k_0,k}}\Big)^{k-k_0-1}-1\Big|\stackrel{P}{\longrightarrow}0
	\end{equation}
	To prove \eqref{e:U-conv-mod}, we begin by showing  \begin{equation} \label{e:TU-conv-mod}
	k^{\delta}\max_{0\leq k_0<h(k)}|\frac{T_{k_0,k}}{T^*_{k_0,k}}-1|\stackrel{P}{\longrightarrow}0.\end{equation}
	In this direction, from \eqref{e:T-conv}, we observe that
	$$k^{\delta}\max_{0\leq k_0<h(k)}\Big|\frac{T_{k_0,k}}{T^*_{k_0,k}}-1\Big|\leq \frac{1}{\min_{0\leq h(k)}T^*_{k_0,k}}\max_{0\leq k_0<h(k)}\underbrace{k^{\delta}|T_{k_0,k}-T^*_{k_0,k}|}_{\Delta_{k}}$$
	From \eqref{e:T-conv}, we have $\Delta_{k}\stackrel{P}{\longrightarrow}0$. Thus \eqref{e:TU-conv-mod} holds as long as $\min_{0\leq k_0< h(k)}T^*_{k_0,k}$ is bounded away from 0 in probability. This can be easily seen as follows
	$$\min_{0 \leq k_0 <h(k)}T^*_{k_0,k}\stackrel{d}{=}\min_{0 \leq k_0 <h(k)}\frac{\Gamma_{k-k_0-1}/(k-k_0-1)}{\Gamma_{k-k_0}/(k-k_0)}\geq 1-\max_{k-h(k) \leq i <k}\Big|\frac{\Gamma_{i}/i}{\Gamma_{i+1}/(i+1)}-1\Big|\stackrel{a.s.}{\longrightarrow}1$$
	where the last convergence is a direct consequence of \eqref{e:gam-conv-1} in Lemma \ref{lem:ratio-conv}.  
	
	In view of \eqref{e:T-conv-mod}, for a subsequence, $\{k_l\}$ there exists a further subsequence ${\widetilde{k}}$ such that
	$$\widetilde{k}^{\delta}\max_{0\leq k_0<h(\widetilde{k})}\Big|\frac{T_{k_0,\widetilde{k}}}{T^*_{k_0,\widetilde{k}}}-1\Big|\stackrel{a.s.}{\longrightarrow}0$$
	This implies there exists $M$ such that for every $\widetilde{k}\geq M$ and $0\leq k_0<h(\widetilde{k})$,
	\begin{equation} \label{e:ineq-1}
	1-\frac{\epsilon}{\widetilde{k}^\delta}\leq\Big(\frac{T_{k_0,\widetilde{k}}}{T^*_{k_0,\widetilde{k}}}\Big)(\omega)\leq 1+\frac{\epsilon}{\widetilde{k}^\delta}\end{equation}
	for all $\omega \in \Omega$ with $P[\Omega]=1$. \eqref{e:ineq-1} further implies
	\begin{eqnarray*}
		\underbrace{\widetilde{k}^{\delta-1}\Big (\Big(1-\frac{\epsilon}{\widetilde{k}^\delta}\Big)^{\widetilde{k}-h(\widetilde{k})-1}-1\Big)}_{-a_{\widetilde{k}}}\leq& \widetilde{k}^{\delta-1}\Big({\Big(\frac{T_{k_0,\widetilde{k}}}{T^*_{k_0,\widetilde{k}}}\Big)}^{\widetilde{k}-k_0-1}(\omega)-1\Big)&\leq \underbrace{\widetilde{k}^{\delta-1}\Big (\Big(1+\frac{\epsilon}{\widetilde{k}^\delta}\Big)^{\widetilde{k}-1}-1\Big)}_{b_{\widetilde{k}}}\end{eqnarray*}
	Therefore,
	\begin{equation}
	\label{e:T-lim-1}
	\widetilde{k}^{{\delta-1}}\max_{0\leq k_0<h(\widetilde{k})}\Big|\Big(\frac{T_{k_0,\widetilde{k}}}{T^*_{k_0,\widetilde{k}}}\Big)^{\widetilde{k}-k_0-1}(w)-1\Big|\leq a_{\widetilde{k}}\vee b_{\widetilde{k}}
	\end{equation}
	First observe that both the sequences $a_{\widetilde{k}}$ and $b_{\widetilde{k}}$ converge to $\epsilon$ as $\widetilde{k}\rightarrow \infty$. Thereby, taking limsup w.r.t $\widetilde{k}$ on both sides of \eqref{e:T-lim-1}, we get
	\begin{equation}
	\label{e:T-lim-2}\limsup_{\widetilde{k}\rightarrow \infty}\widetilde{k}^{{\delta-1}}\max_{0\leq k_0<h(\widetilde{k})}\Big|\Big(\frac{T_{k_0,\widetilde{k}}}{T^*_{k_0,\widetilde{k}}}\Big)^{\widetilde{k}-k_0-1}(w)-1\Big| \leq \epsilon
	\end{equation}
	Since \eqref{e:T-lim-2} holds for all $\epsilon>0$, we have
	$$\widetilde{k}^{{\delta-1}}\max_{0\leq k_0<h(\widetilde{k})}\Big|\Big(\frac{T_{k_0,\widetilde{k}}}{T^*_{k_0,\widetilde{k}}}\Big)^{\widetilde{k}-k_0-1}(w)-1\Big|\rightarrow 0$$
	This entails the proof of convergence in probability of \eqref{e:U-conv-mod}.
	
	We next begin with the proof \eqref{e:type-I-gen}. To this end, we have
	\begin{eqnarray*} 
		1-P_{H_0}[\widehat{k}_0=0]
		&=&1-P_{H_0}\Big[\underbrace{\bigcap_{i=0}^{f(k)}\{U_{i,k}<(1-q)^{ca^{k-i-1}}\}}_{A_k}\Big]
	\end{eqnarray*}where we shall show $P[A_k]\rightarrow 1-q$ as follows.
	\begin{eqnarray*}
		P_{H_0}\Big[A_k\Big]&\leq & P_{H_0}\Big[A_k \cap \underbrace{ \{k^{\delta-1}\max_{0\leq i< f(k)}(U_{i,k}-U^*_{i,k})<\epsilon\}}_{B_{1k}}\Big]+P[ \{k^{\delta-1}\max_{0\leq i< f(k)}(U_{i,k}-U^*_{i,k})>\epsilon\}]\\
		&\leq& P_{H_0}\Big[\underbrace{\bigcap_{i=0}^{f(k)}\{U^*_{i,k}<(1-q)^{ca^{k-i-1}}+\epsilon k^{1-\delta}\}}_{A^*_{1k}}\Big]+P[B_{1k}^c] \hspace{4mm}(\textmd{since $A_k \cap B_{1k} \implies A^*_{1k}$})
	\end{eqnarray*}
	where $P[B_{1k}^c] \rightarrow 0$ by \eqref{e:U-conv}. It remain to show $P_{H_0}[A^*_{1k}] \rightarrow 1-q$. In this direction, we observe that
	\begin{eqnarray*}
		P_{H_0}[A^*_{1k}] &=& \prod_{i=0}^{f(k)}(1-q)^{ca^{k-i-1}}\prod_{i=0}^{f(k)}\Big(1+\frac{\epsilon k^{1-\delta}}{(1-q)^{ca^{k-i-1}}}\Big)\\
		&\leq& \underbrace{\Big(1-q\Big)^{ \frac{a^{(k-1)}-a^{(k-f(k)-2})}{a^{(k-1)}-1}}}_{c_{0k}}\underbrace{\Big(1+\frac{\epsilon}{(1-q) k^{\delta-1}}\Big)^{f(k)}}_{c_{1k}}  \hspace{2mm}(\textmd{since $(1-q)^{ca^{k-i-1}}\geq (1-q)$})
	\end{eqnarray*}
	Now for $f(k) \rightarrow \infty$, $c_{0k} \rightarrow 1-q$. For $f(k)=O(k^{\delta-1})$, $\limsup_{k \rightarrow \infty}c_{1k} \leq(1+ M\epsilon/(1-q))$ for some $M>0$. Thus $$\limsup_{k \rightarrow \infty}P_{H_0}[A^*_{1k}] \leq (1-q)+M\epsilon$$
	holds for every $\epsilon >0$ which implies $\limsup_{k \rightarrow \infty}P_{H_0}[A^*_{1k}]\leq (1-q)$.
	Additionally,
	\begin{eqnarray*}
		P_{H_0}\Big[A_k\Big]&\geq& P_{H_0}\Big[A_k \cap \underbrace{\{k^{\delta-1}\max_{0\leq i< f(k)}(U_{i,k}-U^*_{i,k})>-\epsilon\}}_{B_{2k}}\Big]\\
		&\geq& P_{H_0}\Big[\underbrace{\bigcap_{i=0}^{f(k)}\{U^*_{i,k}<(1-q)^{ca^{k-i-1}}-\epsilon k^{1-\delta}\}}_{A^*_{2k}}  \Big]-P_{H_0}[B^c_{2k}]\hspace{4mm}(\textmd{since $A^*_{2k}\cap B_{2k} \implies A_k\cap B_{2k} $})\\
	\end{eqnarray*}
	where $P[B_{2k}^c] \rightarrow 0$ by \eqref{e:U-conv}. It remain to show $P_{H_0}[A^*_{2k}] \rightarrow 1-q$. In this direction, we observe that 
	\begin{eqnarray*}
		P_{H_0}[A^*_{2k}]&=&\prod_{i=0}^{f(k)}(1-q)^{ca^{k-i-1}}\prod_{i=0}^{f(k)}\Big(1-\frac{\epsilon k^{1-\delta}}{(1-q)^{ca^{k-i-1}}}\Big)\\
		&\geq& \underbrace{\Big(1-q\Big)^{ \frac{a^{(k-1)}-a^{(k-f(k)-2})}{a^{(k-1)}-1}}}_{c_{0k}}\underbrace{\Big(1-\frac{\epsilon}{(1-q) k^{\delta-1}}\Big)^{f(k)}}_{c_{2k}}  \hspace{2mm}(\textmd{since $(1-q)^{ca^{k-i-1}}\geq (1-q)$})
	\end{eqnarray*}	
	Now for $f(k) \rightarrow \infty$, $c_{0k} \rightarrow 1-q$.  For $f(k)=O(k^{\delta-1})$, $\limsup_{k \rightarrow \infty}c_{1k} \geq(1- M\epsilon/(1-q))$ for some $M>0$. Thus $$\limsup_{k \rightarrow \infty}P_{H_0}[A^*_{2k}] \geq (1-q)-M\epsilon$$
	holds for every $\epsilon >0$ which implies $\limsup_{k \rightarrow \infty}P_{H_0}[A^*_{2k}]\geq (1-q)$.
	
	Thus $\lim_{k \rightarrow \infty} P_{H_0}[A^*_{2k}]=1-q$. This completes the proof.
\end{proof}

\end{document}